\documentclass{article}[11pt,letter]

\usepackage[T1]{fontenc}
\usepackage[english]{babel}
\usepackage[cmex10]{amsmath}
\usepackage[utf8]{inputenc}
\usepackage{amssymb}
\usepackage{amsthm}
\usepackage{authblk}
\usepackage{mathtools}
\usepackage{xspace}
\usepackage[symbol*]{footmisc}
\usepackage[hyperfootnotes=false,breaklinks,bookmarks=false]{hyperref}
\usepackage{algorithmic}
\usepackage{algorithm}
\usepackage{enumerate}
\usepackage{color}
\usepackage{times}
\usepackage{fullpage}

\renewcommand{\algorithmiccomment}[1]{\bgroup\hfill//~#1\egroup}

\bibliographystyle{abbrv}

\newtheorem{definition}{Definition}[section]
\newtheorem{lemma}[definition]{Lemma}
\newtheorem{theorem}[definition]{Theorem}
\newtheorem{proposition}[definition]{Proposition}

\newtheorem{observation}[definition]{Observation}
\newtheorem{corollary}[definition]{Corollary}
\newcommand{\bigo}{\mathcal{O}}
\newcommand{\state}{\mathcal{S}}
\newcommand{\inedg}{\mathrm{in}}
\newcommand{\outedg}{\mathrm{out}}
\newcommand{\ie}{{\it i.e.}\xspace}
\newcommand{\eg}{{\it e.g.}\xspace}
\newcommand{\load}{\mathcal{L}}
\newcommand{\wload}{\mathcal{W}}
\newcommand{\cload}{\mathcal{C}}
\newcommand{\dt}{\Delta t}
\newcommand{\expl}{\mathrm{expl}}
\newcommand{\walk}{\mathcal{E}}
\newcommand{\mset}[1]{\{\!\{#1\}\!\}}
\newcommand{\const}{\mathrm{const}}
\newcommand{\closedrange}[2]{[#1\,..\,#2]}
\newcommand{\halfrange}[2]{[#1\,..\,#2)}
\newcommand\defeq{\stackrel{\mathclap{\normalfont{\mbox{\text{\tiny{def}}}}}}{=}}
\newcommand{\stab}{t_{s}}
\newcommand{\per}{t_{p}}
\newcommand{\etal}{{\it et~al.}}
\newcommand{\tokens}{\mathit{tokens}}
\newcommand{\pointer}{\mathit{pointer}}

\DefineFNsymbolsTM{myfnsymbols}{
  \textasteriskcentered *
  \textdagger    \dagger
  \textdaggerdbl \ddagger
  \textsection   \mathsection
  \textbardbl    \|%
  \textparagraph \mathparagraph
}%
\setfnsymbol{myfnsymbols}

\DeclareMathOperator*{\avg}{\mathbf{avg}}

\usepackage{setspace}   

\begin{document}
\author[1]{Jérémie Chalopin}
\author[1]{Shantanu Das}
\author[2]{Paweł Gawrychowski}
\author[3]{Adrian Kosowski}
\author[1]{Arnaud Labourel}
\author[4]{Przemys\l{}aw Uznański\thanks{Part of the work was done while the author was affiliated with LIF,  CNRS and Aix-Marseille University (supported by the Labex Archimède and by the ANR project MACARON (ANR-13-JS02-0002)).}}

\affil[1]{LIF, CNRS and Aix-Marseille University, France}
\affil[2]{Max-Planck-Institut f\"{u}r Informatik, Saarbr\"ucken, Germany}
\affil[3]{Inria – LIAFA – Paris Diderot University, France}
\affil[4]{Helsinki Institute for Information Technology HIIT, Department of Computer Science, Aalto University, Finland}

\date{}

\title{Lock-in Problem for Parallel Rotor-router Walks}
\maketitle

\begin{abstract}
The \emph{rotor-router} model, also called the \emph{Propp machine}, was introduced as a deterministic alternative to the random walk. In this model, a group of identical tokens are initially placed at nodes of the graph. Each node maintains a cyclic ordering of the outgoing arcs, and during consecutive turns the tokens are propagated along arcs chosen according to this ordering in round-robin fashion.
The behavior of the model is fully deterministic. Yanovski \etal (2003) proved that a single rotor-router walk on any graph with $m$ edges and diameter $D$ stabilizes to a traversal of an Eulerian circuit on the set of all $2m$ directed arcs on the edge set of the graph, and that such periodic behaviour of the system is achieved after an initial transient phase of at most $2mD$ steps.

The case of multiple parallel rotor-routers was studied experimentally, leading Yanovski~\etal\ to the conjecture that a system of $k>1$ parallel walks also stabilizes with a period of length at most $2m$ steps. In this work we disprove this conjecture, showing that the period of parallel rotor-router walks can in fact, be superpolynomial in the size of graph. On the positive side, we provide a characterization of the periodic behavior of parallel router walks, in terms of a structural property of stable states called a \emph{subcycle decomposition}.
This property provides us the tools to efficiently detect whether a given system configuration corresponds to the transient or to the limit behavior of the system. Moreover, we provide polynomial upper bounds of $\bigo(m^4D^2 + mD\log k)$ and $\bigo(m^5k^2)$ on the number of steps it takes for the system to stabilize. Thus, we are able to predict any future behavior of the system using an algorithm that takes polynomial time and space. In addition, we show that there exists a separation between the stabilization time of the single-walk and multiple-walk rotor-router systems, and that for some graphs the latter can be asymptotically larger even for the case of $k=2$ walks.
\end{abstract}

\section{Introduction}

Dynamical processes occurring in nature provide inspiration for simple, yet powerful distributed algorithms. For example, the \emph{heat equation}, which describes real-world processes such as heat and particle diffusion, also proves useful when designing schemes for load-balancing and token rearrangement in a discrete graph scenario. In the diffusive model of load-balancing on a network, each node of the network is initially endowed with a certain load value, and in each step it distributes a fixed proportion of its load evenly among its neighbors. Given that such a balancing operation is performed for load which is infinitely divisible (so-called \emph{continuous diffusion}), in the long term the distribution of load converges on a degree-regular network to uniform over all nodes. When load is composed of indivisible unit tokens, the continuous diffusion process is no longer practicable. It is, however, possible to design randomized schemes in which the \emph{expected} value of load of each node at each moment of time corresponds precisely to the value of its load in the corresponding continuous diffusion process. This may be achieved, for instance, by allowing each token of load to follow an independent random walk on the network, as well as by applying more refined techniques admitting stronger concentration of the load distribution, cf.~\cite{DBLP:conf/focs/SauerwaldS12}. Such methods are stochastic in their very nature, and it is natural to ask whether there exist \emph{deterministic} methods which mimic this type of stochastic load balancing behavior? The answer is affirmative, with the natural candidate process being the so-called \emph{rotor-router} model.

Formally, the rotor-router mechanism is represented by an undirected anonymous graph $G = (V,E)$. Initially, a set of identical tokens is released on a vertex of the graph. At discrete, synchronous steps, the tokens are propagated according to the deterministic round robin rule, where after sending each token, the pointer is advanced to the next exit port in the fixed cyclic ordering. Such a mechanism has been proposed as a viable alternative to stochastic and random-walk-based processes in the context of load balancing problems~\cite{BerenbrinkKKMU14,DBLP:journals/cpc/CooperS06,DBLP:journals/cpc/DoerrF09}, exploration of graphs \cite{DBLP:journals/siamcomp/AfekG94,DereniowskiKPU14,Frae70,GR,KosowskiP14}, and stabilization of distributed processes \cite{BHKKR09,BhattEGT02,PhysRevLett.77.5079,YanovskiWB03}.

The resemblance between the rotor-router token distribution mechanism and stochastic balancing processes based on continuous diffusion is at least twofold, in that: (1) the number of tokens on each node for the rotor-router process has a bounded discrepancy with respect to that in the continuous diffusion process~\cite{DBLP:journals/corr/BerenbrinkKKMU14,DBLP:conf/cocoon/ShiragaYKY14}, and (2) when performing time-averaging of load over sufficiently long time intervals, the observed load averages for all nodes in the rotor-router process converge precisely to their corresponding value for the continuous diffusion process.

By contrast to time-averaged load, for any \emph{fixed} moment of time, the deterministic rotor-router process and the stochastic approaches exhibit important differences. A stochastic load balancing process based on tokens following random walks leads the system towards a ``heat death'' stochastic state, which is completely independent of the starting configuration. By contrast, the rotor-router process is a deterministic process on a graph and its limit behavior may be much more interesting.\footnote{Perhaps the first work to highlight the importance of differences between limit properties of deterministic and stochastic variants of a token-based discrete process on a graph was that of M.~Kac, in the setting of statistical mechanics (cf.~\cite{thompson2015mathematical}[Section 1.9] for a comprehensive discussion).} The number of possible configurations of a rotor-router system is finite, hence, after a transient initial phase, the process must stabilize to a cyclic sequence of states which will be repeated ever after. Natural questions arise, concerning the eventual structural behavior observed in this limit cycle of the rotor-router system, the length of the limit cycle, and the duration of the stabilization phase leading to it. So far, the only known answer concerned the case when only a single token is operating in the entire system. Yanovski \etal~\cite{YanovskiWB03} showed that such a single token stabilizes within a polynomial number of steps to periodic behavior, in which it performs a traversal of some Eulerian cycle on the directed version of the network graph.

In this work, we provide a complete structural characterization of the limit behavior of the rotor-router for an \emph{arbitrary} number $k>1$ of tokens. The obtained characterization shows that the rotor-router mechanism provides a way of self-organizing tokens, initially spread out arbitrarily over a graph, into balanced groups, each of which follows a well-defined walk in some part of the network graph. The practical implications of our result may be seen as twofold. On the one hand, when viewing the rotor-router as a load-balancing process, we obtain a better understanding of its limit behavior. On the other hand, when considering each of the tokens as a walker in the graph, we show that the rotor-router may prove to be a viable strategy for perpetual graph exploration, with possible applications in so-called network patrolling problems.

\subsection{Related Work}

\paragraph{Load balancing.}
The rotor-router mechanism of token distribution has been considered in problems of balancing workload among network nodes for specific network topologies. In this context, each token is considered as a unit-length task to be performed by one of the processors in a network of computers. Cooper and Spencer~\cite{DBLP:journals/cpc/CooperS06} studied load balancing with parallel rotor walks in $d$-dimensional grid graphs and showed a constant bound on the discrepancy between the number of tokens at a given node $v$ in the rotor-router model and the expected number of tokens at $v$ in the random-walk model. The structural properties of the distribution of tokens for a rotor-router system on the $2$-dimensional grid were considered by Doerr and Friedrich~\cite{DBLP:journals/cpc/DoerrF09}. Akbari and Berenbrink~\cite{DBLP:conf/spaa/AkbariB13} proved an upper bound of $\bigo(\log^{3/2} n)$ on the load-balancing discrepancy for hypercubes, and for tori of constant dimensions, they showed that the discrepancy is bounded by a constant. For general $d$-regular graphs, a bound of $O(d\log n /\mu)$ on the discrepancy of the rotor-router mechanism with respect to continuous diffusion follows from the general framework of~\cite{RSW98}, where $\mu$ is the eigenvalue gap of the graph, under the assumption that a sufficient number of self-loops are present at each node of the graph. This discrepancy bound has recently been improved to $O(d\sqrt{\log n /\mu})$ in~\cite{DBLP:journals/corr/BerenbrinkKKMU14}.

\paragraph{Graph exploration.}
The walks of tokens following fixed local rules at nodes provide a local mechanism of graph exploration. Each token, starting at a node of the graph, moves at each step to one of the adjacent nodes, until it has explored all the nodes of the graph. Such mechanisms tend to be location-oblivious and robust, displaying resilience to changes in network topology. In some cases, we may require the token to periodically visit all nodes, \eg, with the goal of monitoring the network or distributing updates.

One basic exploration technique relies on multiple tokens, each of which follows an independent random walk: at each step, each token chooses one of the arcs incident to the current node uniformly at random and traverses it. The performance of such parallel random walks have been analyzed by Alon~\etal~\cite{A08}, Efremenko and Reingold~\cite{DBLP:conf/approx/EfremenkoR09}, and Els\"asser and Sauerwald~\cite{DBLP:journals/tcs/ElsasserS11}, who have demonstrated that in terms of the expected time until all nodes have been visited by at least one token (i.e., the cover time), parallelization brings about a speedup of between $\Theta(\log k)$ and $\Theta(k)$ when running random walks with $k$ parallel tokens.

The rotor-router mechanism has also been studied in the context of graph exploration, sometimes under the name of {\em Edge Ant Walks\/}~\cite{Wagner99distributedcovering,YanovskiWB03}, and in the context of traversing a maze and marking edges with pebbles, \eg~in~\cite{BhattEGT02}. Cover times of rotor-router systems have been investigated by Wagner~\etal~\cite{Wagner99distributedcovering} who showed that starting from an arbitrary initial configuration\footnotemark[1], a single token following the rotor-router rule explores all nodes of a graph on $n$ nodes and $m$ edges within $\bigo(nm)$ steps. Later, Bhatt~\etal~\cite{BhattEGT02} showed that after at most $\bigo(nm)$ steps, the token continues to move periodically along an Eulerian cycle of the (directed symmetric version of the) graph. Yanovski \etal~\cite{YanovskiWB03} and Bampas~\etal~\cite{BHKKR09} studied the stabilization time and showed that the token starts circulating in the Eulerian cycle within $\Theta (mD)$ steps, in the worst case, for a graph of diameter $D$. Studies of the rotor router system for specific classes of graphs were performed in~\cite{FS10}. While all these studies were restricted to static graphs, Bampas \etal~\cite{BGKKR09}  considered the time required for the rotor-router to stabilize to a new Eulerian cycle after an edge is added or removed from the graph.\footnotetext[1]{A configuration is defined by: the cyclic order of outgoing arcs, the initial pointers at the nodes, and the current location of the token.}

\looseness-1
Studies of the parallel (\ie, multiple token) rotor-router were performed by Yanovski \etal~\cite{YanovskiWB03} and Klasing~\etal~\cite{DBLP:conf/podc/KlasingKPS13}, and the speedup of the system due to parallelization was considered for both worst-case and best-case scenarios.  In \cite{DereniowskiKPU14}, Dereniowski~\etal\ establish bounds on the minimum and maximum possible cover time for a worst-case initialization of a $k$-rotor-router system in a graph $G$ with $m$ edges and diameter $D$,  as  $\Omega(mD/k)$ and $\bigo(mD/\log k)$ respectively.  In \cite{KosowskiP14}, Kosowski and Pająk provided a more detailed analysis of the speedup for specific classes of graphs, providing tight bounds of cover-time speed-up for all values of $k$ for degree-restricted expanders, random graphs, and constant-dimensional tori. For hypercubes, they resolve the question precisely, except for values of $k$ much larger than $n$.


\subsection{Our Results}

In this work we provide a structural characterization of the limit behavior of the rotor-router model with multiple tokens. Yanovski \etal~\cite{YanovskiWB03} conjectured that the rotor-router system enters a short sequence of states (of length at most $2m$), which repeats cyclically ever after. We start this work by disproving this conjecture. In fact, we display an example of a starting configuration which admits a limit cycle with a period of superpolynomial length ($\text{exp}(\Omega(\sqrt{n \log n}))$) with respect to the size of the graph. Our example is similar to the construction presented by Kiwi~\etal~\cite{Kiwi94nopolynomial} to prove the existence of super-polynomial periods for chip firing games on graphs (although the rules of chip firing games are only very loosely related to those of the rotor-router).

By contrast, it turns out the fact that the rotor-router admits long limit cycles does not signify that the limit behavior of the rotor-router should be perceived as a ``disordered'' discrete dynamical system. The long period in our counterexample comes from the system being composed from many smaller parts, each of which exhibits a small (but different) period length. 
We show that for any limit sequence of states in the rotor-router model, the graph can be partitioned into arc-disjoint directed Eulerian cycles, with each token in the limit periodically traversing arcs of one particular cycle. We name such behavior a \emph{subcycle decomposition}, the exact properties of which are described in Section~\ref{sec:lock-in}. To complement the lower bound, we provide an upper bound of $\text{exp}(\bigo(\sqrt{m \log m}))$ on the period of parallel rotor walks in its limit behavior. This upper bound asymptotically almost matches the lower bound from our example.

There are several consequences of our structural characterization of the limit behavior of the rotor-router.
First, we show that it is possible to determine efficiently whether the system has already stabilized (\ie, reached a configuration that will repeat itself) or not. This detection is based on the analysis of the properties of stable states, that is, of how the tokens arriving at a node are distributed into groups leaving on different outgoing arcs. The main point of this analysis is the observation that the cumulative number of tokens entering a vertex $v$ (over the time period $\{t,(t+1),\ldots,(t+\dt)\}$) is equal to the cumulative number of tokens leaving vertex $v$ (over time $\{(t+1),(t+2),\ldots,(t+\dt+1)\}$).

Next, by defining an appropriate potential of a system and showing its monotonicity, we can give a polynomial bound on a number of steps necessary for a system with an arbitrary initialization to reach a periodic configuration. We provide an upper bound of $\bigo(m^4D^2 + mD\log k)$, together with examples of graphs with initial configuration having just 2 tokens that require $\Omega(m^2 \log n)$ steps. This analysis is presented in Section~\ref{sec:stab}. The obtained polynomial upper bound means that the rotor-router is an efficient means of self-organizing tokens so as to perform a periodic traversal of the edges of the graph.

Finally, Section~\ref{sec:simulation} is dedicated to showing how the previous results can be applied in a constructive way with regard to efficient simulation of a rotor-router system.  We show how the properties of subcycle decomposition can be applied to provide a way to preprocess any starting configuration in a way that makes it possible to answer queries of certain type in a polynomial time. This shows that a structural characterization of the rotor-router system is not only an important as a theoretical tool for understanding the limit behavior of the system, but it also as a practical tool for solving certain problems related to the rotor-router system.

As a complementary result, we show for the single-token rotor-router how
to efficiently compute the Eulerian traversal cycle on which the token would be locked-in, faster than by running the process directly. A naive simulation would take $\bigo(mD)$ time, but by using the structural properties of a single token walk together with application of efficient data structures we show how to preprocess the input graph in time $\bigo(n+m)$ such that we can answer queries about token position at any given time $T$, in $\bigo(\log \log m)$ time per query.


\vspace{-0.2cm}

\section{Model and Preliminaries}

Let $G=(V,E)$ be an undirected connected graph with $n$ nodes, $m$ edges and \emph{diameter} $D$. Let $k$ be the number of tokens. The digraph $\vec{G} = (V,\vec{E})$ is the directed version of $G$ created by replacing every edge $(u,v)$ with two directed arcs $\vec{uv}$ and $\vec{vu}$. We will refer to the undirected links in graph $G$ as \emph{edges} and to the directed links in the graph $\vec{G}$ as \emph{arcs}. Given a vertex $v$, we will denote its set of incoming arcs by $\inedg(v)$ and outgoing arcs by $\outedg(v)$.
Each vertex $v$ of $G$ is equipped with a fixed ordering of all its outgoing arcs $\rho_v = (e_1,e_2, \ldots, e_{\text{deg}(v)})$.

The precise definition of the rotor-router model on the system defined by $(\vec{G},(\rho_v)_{v\in V})$ is as follows:\\ A \emph{state} at the current time step  $t$ is a tuple:
$\state_t = ((\pointer_v)_{v \in V},(\tokens_v)_{v \in V}),$
where $\pointer_v$ is an arc outgoing from node $v$, which is referred to as \emph{the current port pointer at node} $v$, and $\tokens_v$ is the number of tokens at any given node.  For an arc $\vec{(vu)}$, let $\mathit{next}\vec{(vu)}$ denote the arc after the arc $\vec{(vu)}$ in the cyclic order $\rho_v$.
%
During each step, each node $v$ distributes in round-robin fashion all of its tokens, using the following algorithm:\\[2mm]
\enlargethispage{5mm}
While there is a token at node $v$, do
\begin{enumerate}
\item Send token to $\pointer_v$,
\item Set $\pointer_v = \mathit{next}(\pointer_v)$.
\end{enumerate}
Note that during a single time step all tokens at a node $v$ are sent out and at exactly the next time step all those tokens arrive at their respective destination nodes.

For a given state $\state_t$, we say that it is \emph{stable} iff there exists $t' > t$ such that $\state_{t'} = \state_t$.
\emph{The stabilization time} of state $\state_0$, denoted $\stab$, is the smallest value such that $\state_{\stab}$ is stable.
We call \emph{the periodicity} of state $\state_0$ the smallest $\per>0$ such that $\state_{\stab} = \state_{\stab+\per}$.

Throughout the paper, we denote multisets using $\mset{}$ notation, while for integer ranges, we write $\closedrange{a}{b} \defeq \{a,a+1,\ldots,b\},$ $\halfrange{a}{b} \defeq \{a,a+1,\ldots,b-1\}.$

\section{Periodicity of the Rotor-Router System}
\label{sec:lock-in}


We begin with the observation that knowledge of the first $\stab+\per$ states of the system, that is $\state_0,\ldots,\state_{\stab+\per-1}$, gives us full knowledge of any future state for arbitrarily large time $t\ge \stab$:
$\state_t = \state_{\stab + ((t - \stab) \bmod \per)}.$

So as to be able to efficiently predict the future evolution of any rotor-router state, it would be useful to put a polynomial bound on $\per$ and $\stab$ (with respect to $n,m$ and $k$). If $k=1$, due to results from Yanovski~\etal~\cite{YanovskiWB03}
, we know that $\per = 2m$ and $\stab = O(mD)$.
For arbitrary $k$, Yanovski~\etal~\cite{YanovskiWB03} conjectured that $\per \le 2m$ for any graph $G$ regardless of the initial state.
However, the following negative result disproves their conjecture and shows that the periodicity cannot be polynomially bounded for parallel rotor-routers.

\begin{theorem}
\label{th:exponential}
There exists a family of graphs and initial states, with $k=2m$ tokens, having the periodicity $\per = 2^{\Omega(\sqrt{n \log n})}$.
\end{theorem}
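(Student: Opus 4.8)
The plan is to build the graph as a disjoint union (glued along a single ``hub'' vertex to keep it connected) of gadgets $H_1, \dots, H_r$, where each gadget $H_i$ is a small graph on which a parallel rotor-router walk stabilizes to a limit cycle of period exactly $p_i$, and where the $p_i$ are chosen to be pairwise coprime (e.g.\ the first $r$ primes). Since the overall system is essentially a product of the independent dynamics on the gadgets once the transient has passed — tokens never need to leave their gadget once the hub pointers have settled — the global period is $\operatorname{lcm}(p_1, \dots, p_r) = \prod_i p_i$. By the prime number theorem, if each $H_i$ has $O(p_i)$ vertices and edges, then $n = \Theta(\sum_i p_i) = \Theta(r^2 \log r)$ while $\per = \prod_{i=1}^r p_i = e^{\Theta(r \log r)}$; eliminating $r$ gives $\per = 2^{\Omega(\sqrt{n \log n})}$, matching the claimed bound. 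The constraint $k = 2m$ is met by padding: once the core construction is in place, I would add enough tokens (and, if necessary, self-loops or small cycles absorbing extra tokens with period $1$) so that the total token count is exactly $2m$ without affecting the period.

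The technical core, then, is the design of a single gadget $H$ of size $O(p)$ realizing a rotor-router limit cycle of period exactly $p$ for a prescribed prime $p$ (or prime power). The natural candidate, and the one suggested by the paper's remark comparing the construction to Kiwi~\etal~\cite{Kiwi94nopolynomial}, is a cycle-like structure: take a cycle $C_p$ on $p$ vertices, place a single token (or a fixed small number of tokens) on it, and observe that in the single-token case the rotor-router locks into an Eulerian traversal of the directed cycle, which has $2p$ arcs, giving period $2p$. To get genuinely different prime periods that multiply rather than sharing common factors, I would instead engineer gadgets where two tokens circulate on two interlocking cycles whose lengths are coprime, or use a ``counter'' gadget in which one token advances a pointer by one notch per global cycle of another, yielding a period equal to the product of two small coprime lengths. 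The cleanest route is probably: gadget $H_i$ = two cycles of coprime lengths $a_i, b_i$ sharing one vertex, with tokens arranged so the limit behavior has period $\operatorname{lcm}(a_i, b_i) = a_i b_i$, and then choose the pairs $(a_i, b_i)$ across different $i$ so that the full collection $\{a_1, b_1, a_2, b_2, \dots\}$ is pairwise coprime.

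The main obstacle is verifying that each gadget actually stabilizes to the intended limit cycle — that is, checking that the transient does not collapse the period or merge tokens in an unexpected way, and that when gadgets are joined at a hub the hub's round-robin distribution of its (finitely many) tokens settles into a configuration that sends a fixed token back into each gadget forever, decoupling the dynamics. This requires a careful case analysis of the hub's pointer evolution and an argument that, after the hub pointer cycles through its ports a bounded number of times, the assignment of tokens to gadgets becomes periodic with period $1$ (each gadget always receives the same token flux). I would handle this by making the hub degree exactly equal to the number of token-entry points, so that in steady state the hub simply routes each incoming token straight back out along the ``same'' gadget; the structural characterization via subcycle decomposition promised later in Section~\ref{sec:lock-in} is exactly the tool that guarantees such a decoupled limit exists, but for this theorem a direct ad hoc verification on the explicit construction suffices. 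The remaining bookkeeping — counting edges to pin down $k = 2m$, and confirming $n = \Theta(r^2 \log r)$ via the prime number theorem — is routine.
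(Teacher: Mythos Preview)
Your high-level architecture---gadgets with pairwise coprime periods glued at a hub, period equal to the Landau function---is exactly what the paper does. The asymptotic counting via the prime number theorem is also the same.

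The gap is in the gadget design and the hub decoupling, which you correctly flag as the main obstacle but do not actually resolve. Your first candidate, a single token on a cycle $C_p$, visits any fixed vertex (in particular the hub attachment point) only once every $2p$ steps, so the flux into the hub is \emph{not} one token per round per gadget; the hub then mixes tokens between gadgets in a pattern that depends on all the $p_i$ simultaneously, and the clean product-of-periods argument breaks down. Your fallback of two interlocking cycles has the same issue and adds complication. The paper's gadget is a ``balloon'': a $p$-cycle with one extra pendant vertex (the base), carrying the token distribution $(1,2,2,\ldots,2,4,1)$ with specific pointer orientations. This is engineered so that \emph{in every single round} exactly one token crosses the pendant edge toward the base and exactly one crosses it back, and the internal dynamics have period exactly $p$. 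When $r$ such balloons are merged at the base vertex, the hub receives exactly $r$ tokens and emits exactly $r$ tokens every round, so its pointer does a full rotation each step and its state is irrelevant---the gadgets are decoupled from the very first step, with no transient analysis needed. As a bonus, each balloon on $x+1$ vertices has $x+1$ edges and carries $2(x+1)$ tokens, so $k=2m$ falls out exactly with no padding.
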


\begin{proof}
We will construct such a family of graphs $\mathcal{G}_r$ for any sufficiently large integer $r$, and an appropriate initial configuration of tokens. First consider a balloon graph $G_x$ consisting of a cycle of $x > 3$ vertices $\{v_0,v_1,\dots v_{x-1}\}$ and an additional vertex $v_x$ (called the \emph{base vertex}) that is joined by an edge to vertex $v_{x-1}$ of the cycle (see Figure~\ref{fig:expgraph}(a)). Let the initial token distribution at vertices $(v_0,\dots, v_x)$ be $(1,2,2,\dots 2,4,1)$. Further let the exit pointers at vertex $v_i$, $ 0\leq i \leq x-2$ be oriented towards $v_{i-1 \mod{x}}$ (in the counter-clockwise direction, in the figure), while at the vertex $v_{x-1}$ the exit pointer is oriented towards $v_0$ (i.e. in the opposite direction). At the base vertex $v_x$ there is only one outgoing arc and so, the exit pointer at $v_x$ will always point towards this arc.

Observe that for a vertex of out-degree two, the exit pointer remains unchanged if an even number of tokens exit this vertex in the current round, while the exit pointer is rotated if an odd number of tokens exit in the current round.

We will now analyze the movement of tokens along the arcs of the graph in each round. Figure~\ref{fig:Periodballoon}  shows an example for a balloon graph $G_x$, where $x=5$. During the first round, the number of tokens moving on the arcs $(v_0,v_1)$, $(v_1,v_2)$, $\dots (v_{x-1}, v_0)$ of the cycle (in the clockwise direction) is given by the sequence $S_0$=$(0,1,1,\dots,1,2)$. During the same round, the number of tokens moving on the arcs in the counter-clockwise direction on the cycle is given by $(1,1,\dots 1)$. On the branch edge $(v_{x-1}, v_x)$ there is exactly one token moving in each direction (See Figure~\ref{fig:Periodballoon}(a)).

\begin{figure}[!t]
\centering\includegraphics[width=0.8\textwidth]{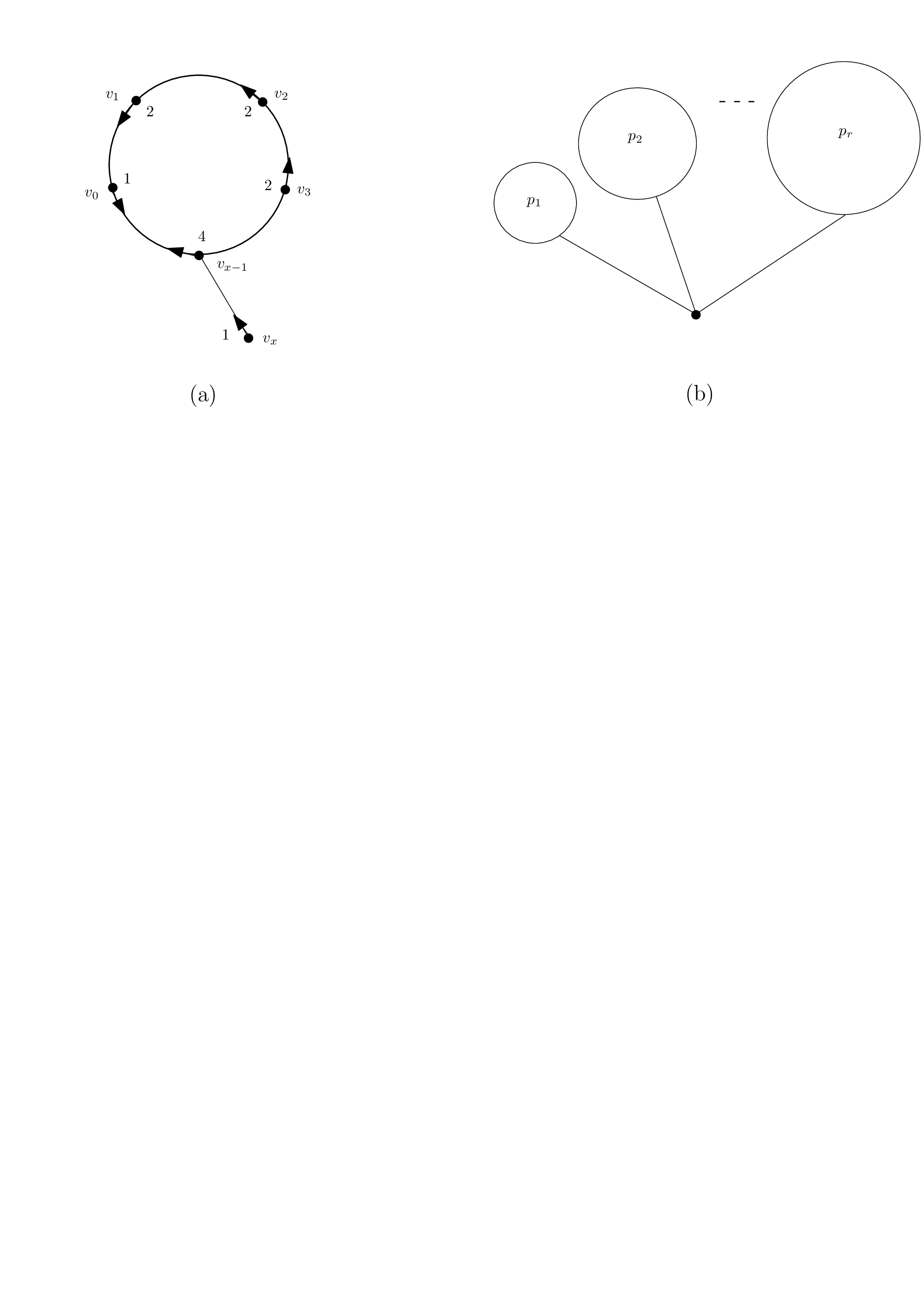}
\caption{ \small{(a) The balloon graph and the initial token distribution. (b) The family of graphs $\mathcal{G}_r$ consisting of $r$ balloons.}
} \label{fig:expgraph}
\end{figure}

\begin{figure}[!h]
\includegraphics[width=\textwidth]{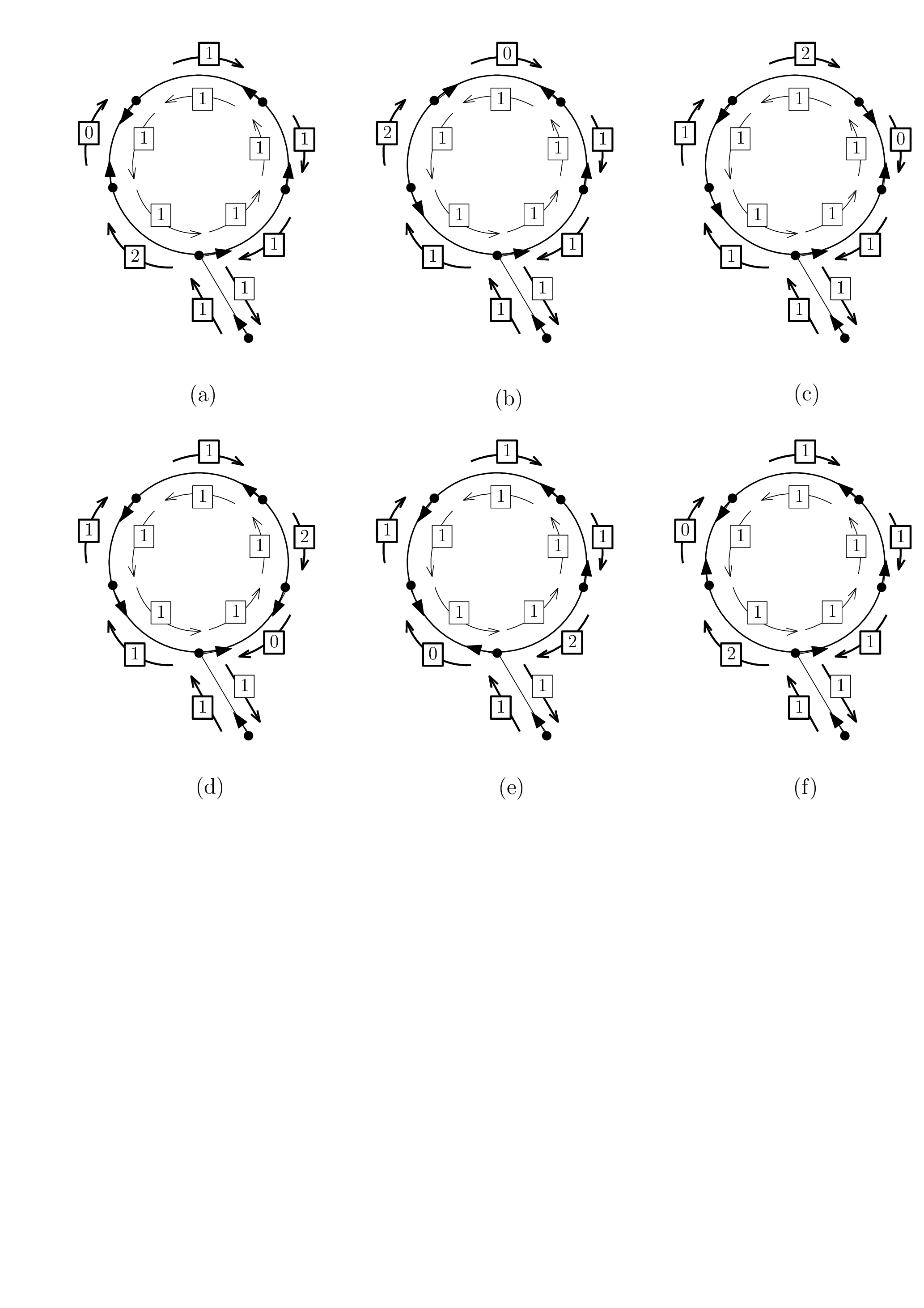}
\caption{ \small{The number of tokens circulating on the arcs of a balloon graph $G_5$ in each round.}
} \label{fig:Periodballoon}
\end{figure}

During the second round, the number of tokens moving on the arcs $(v_0,v_1)$, $(v_1,v_2)$, $\dots (v_{x-1}, v_0)$ of the cycle (in the clockwise direction) is given by the sequence\\$S_1$=$(1,1,\dots,1,2,0)$ which is a cyclic rotation of the sequence $S_0$. The number of tokens moving on the arcs in the counter-clockwise direction on the cycle is still given by $(1,1,\dots 1)$. Again, the branch edge $(v_{x-1}, v_x)$ has exactly one token moving in each direction (See Figure~\ref{fig:Periodballoon}(b)).

Continuing with the above analysis, it is easy to see that in subsequent rounds, the number of tokens moving on the arcs of the cycle (in the clockwise direction) is given by cyclic rotations of $S_0$, \ie, by the sequences $(1,\dots,1,2,0,1)$, $(1,\dots,1,2,0,1,1)$, $(1,\dots,1,2,0,1,1,1)$ and so on (See Figure~\ref{fig:Periodballoon}(c-e)). The number of tokens moving along the cycle in the counterclockwise direction is always one token per arc of the cycle. On the branch edge $(v_0,v_x)$ there is exactly one token moving in each direction in each round. Since the length of the sequence $S_0$ is $|S_0|=x$, after every $x$ steps the configuration of tokens moving on the arcs of the cycle is the same. In other words,
the periodicity of this rotor-router system is $x$. Notice that the graph $G_x$ has $x+1$ vertices and $2(x+1)$ arcs, and there are exactly $2(x+1)$ tokens in the system.

We will now construct the family of graphs $\mathcal{G}_r$. For any given $r$, let $p_1,p_2,\dots p_r$ be the first $r$ prime numbers starting from $p_1=3$. We take $r$ balloon graphs of sizes  $(1+p_1),(1+p_2),\dots,(1+p_r)$ respectively and join them by merging all the base vertices into one vertex (see Figure~\ref{fig:expgraph}(b)). In each balloon graph we place the tokens as before, such that the merged base vertex now contains $r$ tokens. During each step, $r$ tokens will exit the base vertex through the $r$ outgoing arcs and $r$ other tokens will enter the base vertex through the $r$ incoming arcs. Thus, irrespective of the initial state of the exit pointer at the base vertex, the system will behave in the same manner. The behavior of the system in the distinct balloons would be independent of each other and for each balloon of size $(1+p_i)$ the configuration of the balloon would repeat itself in exactly $p_i$ steps as before. Thus, the global state of the system would repeat in $\text{lcm}(p_1,\dots p_r)$ = $\Pi_{i=1}^{r}p_i$ steps. Note that the size of the graph, $\mathcal{G}_r$, is given by $n=1+\sum_{i=1}^{r}p_i = \Theta(r^2 \log{r})$. In general, for any given integer $n$, we can construct a similar example graph by partitioning the $n-1$ vertices into balloons of appropriate sizes joined to the $n$th vertex, such that the period of the system is equal to the \emph{Landau function} \cite{Landau} $g(n-1)$ $= 2^{\Omega(\sqrt{n\log{n}})}$.

\end{proof}

We remark that a similar result exists for parallel chip-firing games~\cite{Kiwi94nopolynomial}.


We now present an upper bound on the periodicity of $k$ parallel rotor walks, for arbitrary values of $k$.
First, we will show that even though a stable state can exhibit very long (super-polynomial) periodicity, the underlying graph $G$ can be partitioned into parts, such that each part separately exhibits small (linear) periodicity.

We will use calligraphic large letters (\eg $\load : \vec{E} \cup V \to \mathbb{Z}$) to denote \emph{token distributions}. Thus:
\begin{itemize}
\item $\load_t(v)$ (\emph{load} of node $v$) is number of tokens located at node $v$ in time step $t$,
\item $\load_t(e)$ (\emph{load} of arc $e$) is number of tokens sent out on arc $e$ at time step $t$.
\end{itemize}
Thus, although tokens cannot be located on edges in our model, we can view the tokens in vertex as located already on the outgoing ports that they will be distributed to.

We will use specifically $\load_t$ to denote token distribution associated with state $\state_t$.
(It is important to note, that it is possible for two states to satisfy $\forall_{e} \load_t(e) = \load_{t'}(e)$ and yet $\state_t \not= \state_{t'}$, as we also require that pointers be in the same positions in identical states.)

We begin with a series of observations on the token distribution process in a rotor-router system.

\begin{observation}
\label{ob:load1}
Since every token is pushed onto some outgoing arc, we have:\\
$\sum_{e \in \outedg(v)} \load_t(e) = \load_{t}(v);
\sum_{e \in \inedg(v)} \load_t(e) = \load_{t+1}(v).$
\end{observation}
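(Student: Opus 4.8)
The plan is to derive both equalities directly from the definitions of the round-robin distribution rule and of synchronous token propagation; this is a pure bookkeeping argument with no genuine difficulty, so I would keep it short.

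First I would establish $\sum_{e \in \outedg(v)} \load_t(e) = \load_{t}(v)$. The key point is that the \emph{While} loop run at $v$ during step $t$ performs exactly $\load_t(v)$ iterations: each iteration removes one token from $v$ and pushes it onto the arc currently held in $\pointer_v \in \outedg(v)$, the token count at $v$ strictly decreases by one per iteration, and the loop stops precisely when this count hits $0$. Partitioning these iterations by the outgoing arc used, the number landing on a fixed arc $e$ is by definition $\load_t(e)$; summing over $e \in \outedg(v)$ therefore recounts every iteration exactly once, giving $\load_t(v)$.

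Next I would handle $\sum_{e \in \inedg(v)} \load_t(e) = \load_{t+1}(v)$, using the model's synchronicity clause: every token sent out in step $t$ travels along a single arc and arrives at that arc's head in step $t+1$, and conversely every token present at a node in step $t+1$ is one that was dispatched toward it in step $t$ (no token is created or destroyed in transit). Hence the tokens sitting at $v$ at time $t+1$ are in bijection with the tokens sent, during step $t$, along arcs whose head is $v$, i.e.\ along the arcs of $\inedg(v)$; grouping the latter tokens by the arc they traversed yields $\sum_{e \in \inedg(v)} \load_t(e)$. The only points that need (routine) justification are the termination of the \emph{While} loop and that the token-to-traversed-arc correspondence is a bijection onto the multiset counted by $\load_t$; both are immediate from the process description, so I do not expect any obstacle.
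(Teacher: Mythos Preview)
Your proposal is correct and matches the paper's approach: the paper treats this as a self-evident observation, justifying it with the single clause ``Since every token is pushed onto some outgoing arc'' and giving no further proof. Your write-up simply unpacks that clause into explicit bookkeeping, which is fine but more detail than the paper deems necessary.
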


We also generalize token distribution into the \emph{cumulative token distribution}. Given two time steps $t_1 \le t_2$, we define
$\cload_{t_1}^{t_2} \defeq \sum_{t \in \halfrange{t_1}{t_2}} \load_t,$
in particular, for a vertex $v$ and arc $e$:
$$\cload_{t_1}^{t_2}(v) \defeq \sum_{t \in \halfrange{t_1}{t_2}} \load_t(v),\ \cload_{t_1}^{t_2}(e) \defeq \sum_{t \in \halfrange{t_1}{t_2}} \load_t(e).$$

Consequently, a natural generalization of Observation~\ref{ob:load1} from load to cumulative load is as follow:

\begin{observation}
\label{ob:cload1}
$\sum_{e \in \outedg(v)} \cload_{t_1}^{t_2}(e) = \cload_{t_1}^{t_2}(v);
\sum_{e \in \inedg(v)} \cload_{t_1}^{t_2}(e) = \cload_{t_1+1}^{t_2+1}(v).$
\end{observation}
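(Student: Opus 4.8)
The plan is to obtain both identities by simply summing the two equalities of Observation~\ref{ob:load1} over the time window $\halfrange{t_1}{t_2}$ and interchanging the (finite) order of summation. Nothing deeper is needed: the cumulative load is defined as a finite sum of instantaneous loads, so the statement is a bookkeeping consequence of the single-step conservation law, and the only thing to be careful about is the index shift in the second identity.

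For the first identity, I would write
\[
\sum_{e \in \outedg(v)} \cload_{t_1}^{t_2}(e)
= \sum_{e \in \outedg(v)} \sum_{t \in \halfrange{t_1}{t_2}} \load_t(e)
= \sum_{t \in \halfrange{t_1}{t_2}} \sum_{e \in \outedg(v)} \load_t(e)
= \sum_{t \in \halfrange{t_1}{t_2}} \load_t(v)
= \cload_{t_1}^{t_2}(v),
\]
where the third equality applies the first part of Observation~\ref{ob:load1} at each fixed time $t$.

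For the second identity I would argue analogously, using the second part of Observation~\ref{ob:load1} ($\sum_{e \in \inedg(v)} \load_t(e) = \load_{t+1}(v)$) and then reindexing the sum via $t \mapsto t+1$, which carries $\halfrange{t_1}{t_2} = \{t_1,\dots,t_2-1\}$ bijectively onto $\{t_1+1,\dots,t_2\} = \halfrange{t_1+1}{t_2+1}$:
\[
\sum_{e \in \inedg(v)} \cload_{t_1}^{t_2}(e)
= \sum_{t \in \halfrange{t_1}{t_2}} \sum_{e \in \inedg(v)} \load_t(e)
= \sum_{t \in \halfrange{t_1}{t_2}} \load_{t+1}(v)
= \sum_{t \in \halfrange{t_1+1}{t_2+1}} \load_t(v)
= \cload_{t_1+1}^{t_2+1}(v).
\]

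The only subtlety — and it is a mild one — is keeping the half-open convention $\halfrange{a}{b} = \{a,\dots,b-1\}$ consistent through the reindexing, so that the right endpoint shifts to $t_2+1$ rather than $t_2$; getting this off by one would be the sole way to err. I expect no genuine obstacle here, since the claim is, by design, just the term-by-term additive lift of Observation~\ref{ob:load1}.
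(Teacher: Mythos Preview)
Your proof is correct and is exactly the argument the paper intends: the observation is presented there as an immediate ``natural generalization of Observation~\ref{ob:load1} from load to cumulative load'' with no further detail, and your derivation---summing the two identities of Observation~\ref{ob:load1} over $t\in\halfrange{t_1}{t_2}$, swapping the finite sums, and reindexing $t\mapsto t+1$ for the second identity---is precisely that generalization made explicit.
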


The next observation follows from the fact that rotor-router distributes tokens in a round-robin fashion among all outgoing arcs of a vertex:
\begin{observation}
\label{ob:load2}
$\forall_{e_1,e_2 \in \outedg(v)}  |\cload_{t_1}^{t_2}(e_1) - \cload_{t_1}^{t_2}(e_2)| \le 1.$
\end{observation}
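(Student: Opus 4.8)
The plan is to prove the claim by following the port pointer $\pointer_v$ across the whole interval and observing that the tokens leaving $v$ during $\halfrange{t_1}{t_2}$ form a single contiguous round-robin run over the outgoing arcs, so that every such arc receives either $\lfloor N/d\rfloor$ or $\lceil N/d\rceil$ of them, where $d=\deg(v)$ and $N$ is the total number of tokens dispatched.

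Fix a vertex $v$, let $d=\deg(v)$, and write its outgoing arcs in the cyclic order $\rho_v$ as $e^{(0)},e^{(1)},\dots,e^{(d-1)}$, with indices taken modulo $d$. First I would record the single-step behaviour: if at the start of step $t$ the pointer is $e^{(j)}$ and $\ell=\load_t(v)$ tokens leave $v$, then these tokens are sent, in order, on $e^{(j)},e^{(j+1)},\dots,e^{(j+\ell-1)}$, so $\load_t(e^{(i)})$ equals the number of integers in $\{j,\dots,j+\ell-1\}$ congruent to $i$ modulo $d$, and the pointer at the start of step $t+1$ is $e^{(j+\ell)}$.

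The key step is to glue these single-step runs together. Since the pointer is never reset between steps — the pointer after the last token of step $t$ is exactly the pointer before the first token of step $t+1$ — a straightforward induction on the number of steps shows that the tokens leaving $v$ over the whole interval $\halfrange{t_1}{t_2}$ occupy exactly the arcs $e^{(j_0)},e^{(j_0+1)},\dots,e^{(j_0+N-1)}$, where $j_0$ is the pointer index at the start of step $t_1$ and $N=\cload_{t_1}^{t_2}(v)=\sum_{t\in\halfrange{t_1}{t_2}}\load_t(v)$. Consequently $\cload_{t_1}^{t_2}(e^{(i)})$ is the number of integers in the length-$N$ window $\halfrange{j_0}{j_0+N}$ congruent to $i$ modulo $d$, which is $\lfloor N/d\rfloor$ or $\lceil N/d\rceil$ regardless of $i$. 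Any two of these values differ by at most $1$, which is exactly the assertion.

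I do not expect a genuine obstacle here: the only care needed is the cyclic index bookkeeping and the (model-guaranteed) fact that the pointer carries over unchanged from one step to the next, which is precisely what turns the per-step round-robin runs into one long run; everything else is elementary counting, and the case $d=1$ is vacuous.
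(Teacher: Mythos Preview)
Your proposal is correct and is essentially the same approach as the paper's: the paper does not give a detailed proof but simply remarks that the observation ``follows from the fact that rotor-router distributes tokens in a round-robin fashion among all outgoing arcs of a vertex,'' and your argument is precisely a careful unpacking of that sentence. The key point you make explicit---that the pointer carries over from one step to the next, so the per-step round-robin runs over $\halfrange{t_1}{t_2}$ concatenate into a single contiguous run of length $N=\cload_{t_1}^{t_2}(v)$---is exactly what the paper is invoking implicitly.
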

Since arbitrarily large discrepancies (between two incoming edges) of incoming number of tokens are smoothed discretely, we can see the process of token propagation as a load-balancing scheme.

We now define the concept of \emph{potential} of a token distribution system, which will be helpful to derive the necessary and sufficient conditions for a system state to be stable.

\begin{definition}
\label{def:pot}
Given a token distribution $\mathcal{A}$ over edges, we define its potential as:
$\Phi(\mathcal{A}) \defeq \sum_{e \in \vec{E}} \left(\mathcal{A}(e)\right)^2.$\\
We also introduce a shorthand notation for the $i$-th potential of a given rotor-router state $\state_t$ as:
$\Phi_i(\state_t) \defeq \Phi(\cload_t^{t+i}) = \sum_{e \in \vec{E}} \left(\cload_{t}^{t+i}(e)\right)^2.$
\end{definition}
Note that $\Phi_1 \equiv \Phi$. It is important to note that while arbitrary convex function can be used in the potential definition, usage of quadratic function will prove advantageous when analyzing the speed of convergence to a stable state, not only its properties.

The following folklore lemma provides us with a characterization of the minimum of the potential sums.
\begin{lemma}
\label{lem:sum_of_sq}
Over all partitions of integer $S$ into $d$ integers, the partition $\mset{\lfloor\frac{S}{d}\rfloor,\ldots,\lfloor\frac{S}{d}\rfloor,\allowbreak\lceil\frac{S}{d}\rceil,\ldots,\lceil\frac{S}{d}\rceil}$ uniquely minimizes the value of sum of squares of elements.
\end{lemma}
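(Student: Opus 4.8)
The plan is to prove this by a standard exchange (or "smoothing") argument. First I would fix a partition $\mset{a_1, \ldots, a_d}$ of $S$ into $d$ integers that minimizes $\sum_i a_i^2$ over all such partitions (a minimizer exists since there are finitely many partitions). The claim is that in any such minimizer, all parts must differ by at most $1$; since the multiset $\mset{\lfloor S/d\rfloor, \ldots, \lceil S/d\rceil}$ (with the appropriate number of each) is the unique multiset of $d$ integers summing to $S$ with this "balanced" property, this will identify the minimizer uniquely.

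For the exchange step, suppose toward a contradiction that some minimizer has two parts $a_i \ge a_j + 2$. Consider replacing them by $a_i - 1$ and $a_j + 1$, which preserves the sum $S$. The change in the objective is
\[
(a_i-1)^2 + (a_j+1)^2 - a_i^2 - a_j^2 = -2a_i + 1 + 2a_j + 1 = 2(a_j - a_i) + 2 \le 2(-2) + 2 = -2 < 0,
\]
strictly decreasing the potential — contradicting minimality. Hence every minimizer has $\max_i a_i - \min_i a_i \le 1$, so its parts take only the two values $\lfloor S/d\rfloor$ and $\lceil S/d\rceil$; if $r \defeq S \bmod d$, then exactly $r$ of the parts equal $\lceil S/d\rceil$ and $d-r$ equal $\lfloor S/d\rfloor$, which is forced by the constraint that the parts sum to $S$. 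This shows the balanced partition is the unique minimizer.

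I do not expect any real obstacle here; this is genuinely folklore, and the only thing to be a little careful about is the word "uniquely" — one must observe that the balanced multiset is the only multiset of $d$ integers with the stated sum whose parts differ pairwise by at most $1$, which follows immediately from counting how many parts can equal the larger value. (One could alternatively phrase the whole argument via convexity of $x \mapsto x^2$ and majorization, but the direct exchange argument above is shortest and entirely self-contained.)
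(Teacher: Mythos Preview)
Your proof is correct and is essentially identical to the paper's: both assume a minimizing partition contains two parts $a_1 \ge a_2 + 2$ and derive a contradiction via the exchange $a_1 \mapsto a_1-1$, $a_2 \mapsto a_2+1$, then conclude uniqueness from the fact that the balanced multiset is the only one with pairwise discrepancy at most $1$. Your write-up is in fact a bit more careful about the uniqueness step than the paper's, but the argument is the same.
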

\emph{All omitted proofs are provided in the Appendix.}

\begin{lemma}
\label{lem:pot_drop}
For arbitrary $i$ and $t$, the $i$-th potential is non-increasing:
$\Phi_{i}(\state_{t+1}) \le \Phi_{i}(\state_{t}).$
\end{lemma}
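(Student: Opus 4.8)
The plan is to prove the inequality vertex by vertex, using the load-balancing structure of the rotor-router together with Lemma~\ref{lem:sum_of_sq}. Fix $t$ and $i$ and abbreviate $A \defeq \cload_{t}^{t+i}$ and $B \defeq \cload_{t+1}^{t+i+1}$, so that $\Phi_i(\state_t) = \sum_{e\in\vec{E}} A(e)^2$ and $\Phi_i(\state_{t+1}) = \sum_{e\in\vec{E}} B(e)^2$. The goal is to establish, for every vertex $v$, the local inequality
$$\sum_{e \in \outedg(v)} B(e)^2 \ \le\ \sum_{e \in \inedg(v)} A(e)^2,$$
and then sum it over all $v$: on the left each arc of $\vec{G}$ is counted exactly once, namely as the unique outgoing arc of its tail, so the sum is $\Phi_i(\state_{t+1})$; on the right each arc is counted exactly once, as the unique incoming arc of its head, so the sum is $\Phi_i(\state_t)$. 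The lemma follows at once.

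To prove the local inequality, I would argue as follows. Since $B = \cload_{t+1}^{t+i+1}$ records the tokens pushed out over the window $\halfrange{t+1}{t+i+1}$ under the round-robin rule, the cumulative version of Observation~\ref{ob:load2} shows that the values $\{B(e): e\in\outedg(v)\}$ differ pairwise by at most $1$; equivalently, they form the balanced partition of the integer $S_v \defeq \sum_{e\in\outedg(v)} B(e)$ into $\deg(v) = |\outedg(v)|$ nonnegative integer parts. Next, Observation~\ref{ob:cload1} applied with $t_1 = t$, $t_2 = t+i$ gives $S_v = \cload_{t}^{t+i}(v)\text{-shifted} = \sum_{e\in\inedg(v)} A(e)$ (the cumulative out-load over $\halfrange{t+1}{t+i+1}$ equals the cumulative in-load over $\halfrange{t}{t+i}$). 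Hence the multiset $\{A(e) : e\in\inedg(v)\}$ is also a partition of $S_v$, this time into $|\inedg(v)| = \deg(v)$ nonnegative integer parts. By Lemma~\ref{lem:sum_of_sq}, among all partitions of $S_v$ into $\deg(v)$ parts the balanced one minimizes the sum of squares, so $\sum_{e\in\outedg(v)} B(e)^2 \le \sum_{e\in\inedg(v)} A(e)^2$, which is exactly the local inequality.

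The only thing requiring care is the bookkeeping: one must note that $\vec{G}$, being the symmetric orientation of $G$, satisfies $|\inedg(v)| = |\outedg(v)| = \deg(v)$, so that both sides of the local inequality are partitions of the same integer into the same number of parts, and that each arc of $\vec{E}$ contributes to exactly one vertex on each side of the final summation. No quantitative estimate is needed beyond Lemma~\ref{lem:sum_of_sq}; the main (and rather minor) obstacle is simply aligning the time windows correctly, which is precisely what Observation~\ref{ob:cload1} handles, by equating the cumulative in-load of $v$ over $\halfrange{t}{t+i}$ with its cumulative out-load over the shifted window $\halfrange{t+1}{t+i+1}$. I would also remark that since Lemma~\ref{lem:sum_of_sq} gives a \emph{unique} minimizer, this argument can later be refined to characterize exactly when equality $\Phi_i(\state_{t+1}) = \Phi_i(\state_t)$ holds, though that is not needed for the present statement.
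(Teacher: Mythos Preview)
Your proposal is correct and follows essentially the same approach as the paper: establish the local inequality $\sum_{e\in\outedg(v)} (\cload_{t+1}^{t+i+1}(e))^2 \le \sum_{e\in\inedg(v)} (\cload_t^{t+i}(e))^2$ by combining Observation~\ref{ob:cload1} (equal sums) with Observation~\ref{ob:load2} and Lemma~\ref{lem:sum_of_sq} (the outgoing values realize the minimizing balanced partition), then sum over all vertices. Your added remarks about $|\inedg(v)|=|\outedg(v)|$ and about the uniqueness clause of Lemma~\ref{lem:sum_of_sq} foreshadowing the equality characterization are accurate and not present in the paper's terse version, but they do not change the argument.
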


\begin{proof}
To prove the lemma we have to observe how the round-robin property of the rotor-router acts locally on the groups of tokens (cumulative over the time interval $[t,t+i)$).
From Observation~\ref{ob:cload1} we know, that:
$$\sum_{e \in \outedg(v)} \cload_{t+1}^{t+i+1}(e) = \cload_{t+1}^{t+i+1}(v) = \sum_{e \in \inedg(v)} \cload_{t}^{t+i}(e).$$
However, from Observation~\ref{ob:load2} and Lemma~\ref{lem:sum_of_sq} we get that the multiset of values over outgoing arcs \emph{minimizes} the sum of squares.
Thus:
$$\sum_{e \in \outedg(v)} (\cload_{t+1}^{t+i+1}(e))^2 \le \sum_{e \in \inedg(v)} (\cload_{t}^{t+i}(e))^2,$$
which leads to:
$$\Phi_{i}(\state_{t+1}) = \sum_v \sum_{e \in \outedg(v)} (\cload_{t+1}^{t+i+1}(e))^2 \le \sum_v \sum_{e \in \inedg(v)} (\cload_{t}^{t+i}(e))^2 = \Phi_{i}(\state_{t}).$$

\end{proof}

Observe that Lemma~\ref{lem:pot_drop} implies that if the system is stable, all of the potentials are preserved at every (future) time step. This observation is powerful enough to derive strong characterization of stable states (see Theorem~\ref{th:subcycle}, equivalence of \eqref{subcycle1} and \eqref{subcycle2}). However, in order to be able to reason about bounds on stabilization time, we need a more powerful notion of being able to characterize even the temporary regularities in token trajectories (for not necessarily stable states).

\begin{definition}
\label{def:dec}
We say that a state $\state_{T}$ admits a \emph{$\dt$-step subcycle decomposition}, if in every vertex $v$ we can define a one-to-one mapping between incoming and outgoing arcs of $v$ $M_v : \inedg(v) \to \outedg(v)$, such that:
\begin{equation}
\label{eq:dec}
\forall_{e \in \inedg(v)}  \forall_{t \in \halfrange{T}{T+\dt}} \load_{t}(e) = \load_{t+1}(M_v(e)). \end{equation}
\end{definition}
The subcycle decomposition has the following equivalent interpretation. We partition $\vec{E} = \vec{E_1} \cup \ldots \cup \vec{E_c}$, such that each $\vec{E_i}$ induces a strongly-connected subgraph of $G$, and for each $\vec{E_i}$ there exists an Eulerian cycle covering it such that each token traversing arcs of $\vec{E_i}$ follows this particular Eulerian cycle during time steps $T,T+1,\ldots,T+\dt-1$.

Observe that the mapping $M$ in Definition~\ref{def:dec} does not need to be necessarily unique. We will call any such mapping $M$ \emph{a valid mapping with respect to $\dt$ subcycle decomposition} if \eqref{eq:dec} holds for it.

The following lemma gives a series of equivalent characterizations of subcycle decomposition, connecting the existence of such a decomposition during any time interval with lack of potential drop during the time interval, as well as a load-balancing discrepancy criterion over all shorter sub-intervals of time.
\begin{lemma}
\label{lem:long}
The following statements are equivalent:
\begin{enumerate}[(i)]
\item \label{lem:long:item1} $\state_{T}$ admits a $\dt$-step subcycle decomposition,
\item \label{lem:long:item2} $\forall_v \forall_{ t,t' : \halfrange{t}{t'} \subseteq \halfrange{T}{T+\dt} }, \mset{\cload_{t}^{t'}(e)}_{e\in \inedg(v)} = \mset{\cload_{t+1}^{t'+1}(e)}_{e\in \outedg(v)}$ (multisets of cumulative loads are preserved locally),
\item \label{lem:long:item3} $\forall_{  t,t' : \halfrange{t}{t'} \subseteq \halfrange{T}{T+\dt} } \mset{\cload_{t}^{t'}(e)}_{e \in \vec{E}}  = \mset{ \cload_{t+1}^{t'+1}(e)}_{e \in \vec{E}}$ (multisets of cumulative loads are preserved globally),
\item \label{lem:long:item4} $\forall_{0 \le i \le \dt} \Phi_i(\state_{T}) = \Phi_i(\state_{T+1}) = \ldots = \Phi_i(\state_{T+\dt-i+1})$ (potential is constant),
\item \label{lem:long:item5} $\forall_v \forall_{e_1,e_2 \in \inedg(v)} \forall_{  t,t' : \halfrange{t}{t'} \subseteq \halfrange{T}{T+\dt} }  |\cload_{t}^{t'}(e_1) - \cload_{t}^{t'}(e_2)| \le 1$ (incoming discrepancy is at most one).
\end{enumerate}
\end{lemma}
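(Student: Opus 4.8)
The plan is to prove the five statements equivalent by establishing a cycle of implications, choosing the order so that each step uses only tools already available (Observations~\ref{ob:load1}--\ref{ob:load2}, Lemma~\ref{lem:sum_of_sq}, Lemma~\ref{lem:pot_drop}) plus the previous step. I would prove \eqref{lem:long:item1} $\Rightarrow$ \eqref{lem:long:item2} $\Rightarrow$ \eqref{lem:long:item3} $\Rightarrow$ \eqref{lem:long:item4} $\Rightarrow$ \eqref{lem:long:item5} $\Rightarrow$ \eqref{lem:long:item1}. Several of these are short. For \eqref{lem:long:item1} $\Rightarrow$ \eqref{lem:long:item2}: given a valid mapping $M_v$, relation \eqref{eq:dec} says $\load_t(e) = \load_{t+1}(M_v(e))$ for every $t \in \halfrange{T}{T+\dt}$; summing this identity over $t \in \halfrange{t}{t'}$ for any subinterval gives $\cload_t^{t'}(e) = \cload_{t+1}^{t'+1}(M_v(e))$, and since $M_v$ is a bijection $\inedg(v)\to\outedg(v)$ this yields the claimed multiset equality at $v$. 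For \eqref{lem:long:item2} $\Rightarrow$ \eqref{lem:long:item3}: the global multiset over $\vec E$ is the disjoint union, over all $v$, of the multisets over $\inedg(v)$; likewise for $\outedg(v)$; so taking the union of the local equalities gives the global one. For \eqref{lem:long:item3} $\Rightarrow$ \eqref{lem:long:item4}: apply $\Phi$ (the sum of squares) to both sides of the global multiset equality with the subinterval $\halfrange{t}{t'} = \halfrange{t}{t+i}$, noting that equal multisets have equal sum of squares, and let $t$ range over $\closedrange{T}{T+\dt-i+1}$ — wait, I must be careful with the range: I need $\halfrange{t}{t+i}\subseteq\halfrange{T}{T+\dt}$, i.e. $T\le t$ and $t+i\le T+\dt$, so $t$ ranges over $\closedrange{T}{T+\dt-i}$, giving $\Phi_i(\state_T)=\cdots=\Phi_i(\state_{T+\dt-i})$; combined with $\cload_{t+1}^{t+i+1}$ on the right side of \eqref{lem:long:item3} this extends the chain of equal potentials to include $\Phi_i(\state_{T+\dt-i+1})$, matching the statement.

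The two substantive implications are \eqref{lem:long:item4} $\Rightarrow$ \eqref{lem:long:item5} and \eqref{lem:long:item5} $\Rightarrow$ \eqref{lem:long:item1}, and I expect \eqref{lem:long:item4} $\Rightarrow$ \eqref{lem:long:item5} to be the main obstacle. For \eqref{lem:long:item4} $\Rightarrow$ \eqref{lem:long:item5}: the idea is that constancy of $\Phi_i$ across the relevant window forces every local sum-of-squares inequality from the proof of Lemma~\ref{lem:pot_drop} to be an equality. Fix $t,t'$ with $\halfrange{t}{t'}\subseteq\halfrange{T}{T+\dt}$ and set $i = t'-t$. Since $\Phi_i(\state_t)=\Phi_i(\state_{t+1})$, and $\Phi_i(\state_{t+1})=\sum_v\sum_{e\in\outedg(v)}(\cload_{t+1}^{t+i+1}(e))^2 \le \sum_v\sum_{e\in\inedg(v)}(\cload_t^{t+i}(e))^2 = \Phi_i(\state_t)$ with the inequality holding termwise at each vertex (by Observation~\ref{ob:load2}, Lemma~\ref{lem:sum_of_sq}, and Observation~\ref{ob:cload1}, exactly as in Lemma~\ref{lem:pot_drop}), equality of the totals forces equality at every vertex $v$:
\[
\sum_{e\in\outedg(v)}(\cload_{t+1}^{t+i+1}(e))^2 = \sum_{e\in\inedg(v)}(\cload_t^{t+i}(e))^2.
\]
By Observation~\ref{ob:cload1} the incoming cumulative loads $\{\cload_t^{t+i}(e)\}_{e\in\inedg(v)}$ are a partition of the integer $\cload_{t+1}^{t+i+1}(v)$ into $\deg(v)$ parts, and the outgoing ones are the balanced partition of the same integer (Observation~\ref{ob:load2}); Lemma~\ref{lem:sum_of_sq} says the balanced partition is the \emph{unique} minimizer of the sum of squares, so attaining that minimum forces the incoming partition to be balanced too, i.e. $|\cload_t^{t+i}(e_1)-\cload_t^{t+i}(e_2)|\le 1$ for all $e_1,e_2\in\inedg(v)$. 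This is \eqref{lem:long:item5}. The delicate point here is bookkeeping of the index ranges — ensuring the hypothesis \eqref{lem:long:item4} really does supply $\Phi_i(\state_t)=\Phi_i(\state_{t+1})$ for every needed $i$ and $t$ within the window — and making the uniqueness half of Lemma~\ref{lem:sum_of_sq} do the real work, since the weaker non-strict version would not give \eqref{lem:long:item5}.

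Finally, \eqref{lem:long:item5} $\Rightarrow$ \eqref{lem:long:item1}: I would construct the mapping $M_v$ at each vertex from the round-robin structure. Intuitively, over the window the pointer at $v$ cycles repeatedly through $\rho_v$, and the incoming tokens at consecutive times are matched to outgoing ports in the order dictated by the round-robin rule; the discrepancy bound \eqref{lem:long:item5} (for single steps, $t'=t+1$, it gives the usual Observation~\ref{ob:load2}, but for all longer subintervals it gives the extra rigidity) guarantees that this matching is consistent across all times $t\in\halfrange{T}{T+\dt}$, i.e. that the same incoming arc is always routed to the same outgoing arc. Concretely I would argue: consider any incoming arc $e$ and let $f_t(e)$ be the outgoing arc onto which the tokens arriving via $e$ at time $t+1$ are (in aggregate) forwarded at step $t+1$; the constraint that all cumulative discrepancies over subintervals are $\le 1$ pins down the token counts on every arc at every time in the window (they are determined by the cumulative totals, which in turn satisfy the balanced-partition property at every scale), and a short induction on subinterval length shows $\load_t(e)=\load_{t+1}(f(e))$ with $f$ independent of $t$; set $M_v = f$. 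One then checks $M_v$ is a bijection because it preserves, arc by arc, the balanced partition of the flow through $v$. The equivalent Eulerian-cycle interpretation stated after Definition~\ref{def:dec} then follows by noting that the arcs $\vec E_i$ are exactly the orbits of the permutation on $\vec E$ obtained by gluing the $M_v$ together, each orbit being a closed walk that every token on it traverses in lockstep during $\closedrange{T}{T+\dt-1}$.
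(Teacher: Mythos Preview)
Your chain \eqref{lem:long:item1}$\Rightarrow$\eqref{lem:long:item2}$\Rightarrow$\eqref{lem:long:item3}$\Rightarrow$\eqref{lem:long:item4}$\Rightarrow$\eqref{lem:long:item5} is correct and matches the paper's argument almost verbatim; your framing of \eqref{lem:long:item4}$\Rightarrow$\eqref{lem:long:item5} via the uniqueness clause of Lemma~\ref{lem:sum_of_sq} is in fact slightly cleaner than the paper's contrapositive version, but logically equivalent.

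The gap is in \eqref{lem:long:item5}$\Rightarrow$\eqref{lem:long:item1}. Your function $f_t(e)$, ``the outgoing arc onto which the tokens arriving via $e$ at time $t+1$ are forwarded,'' is not well-defined: tokens are anonymous in this model, and the rotor-router does not route by incoming port. The ``short induction on subinterval length'' you allude to is not a substitute for an actual construction of $M_v$, and nothing in your sketch explains why a single time-independent bijection should exist. The paper supplies the missing idea: from \eqref{lem:long:item5} one first argues that there is a \emph{single} permutation $\pi$ of $\inedg(v)$ such that
\[
\cload_T^{T+i}(e_{\pi_1}) \ge \cload_T^{T+i}(e_{\pi_2}) \ge \cdots \ge \cload_T^{T+i}(e_{\pi_d}) \quad \text{simultaneously for every } 0\le i\le \dt,
\]
because if two incoming arcs ever swapped order between two prefix windows $[T,T+i_1)$ and $[T,T+i_2)$, the difference over $[T+i_1,T+i_2)$ would have absolute value $\ge 2$, violating \eqref{lem:long:item5}. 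On the outgoing side there is a canonical permutation $\sigma$ given by the rotor-router: list $\outedg(v)$ starting from $\pointer_v$ at time $T+1$ and following $\rho_v$. Both ordered sequences are then nonincreasing, have the same sum, and consist of at most two consecutive integer values, hence they coincide termwise for every $i$; taking successive differences in $i$ gives $\load_t(e_{\pi_j}) = \load_{t+1}(f_{\sigma_j})$ for all $t\in\halfrange{T}{T+\dt}$, and $M_v(e_{\pi_j}) \defeq f_{\sigma_j}$ is the desired mapping. This consistent-ordering step is the substantive content you are missing.
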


For a fixed value of $\dt=1$, Lemma~\ref{lem:long} captures the property that as long as $\Phi(\state_{T})$ remains constant, the loads on edges are only permuted between any two consecutive timesteps. Coupled with Lemma~\ref{lem:pot_drop} it immediately implies aforementioned property. Unfortunately, this property is not strong enough for our purposes. However, for arbitrary values of $\dt$, we can still employ the notion of higher order potentials $\Phi_{\dt}$ (as defined previously), and observe the load balancing properties from Observation~\ref{ob:cload1}. The fact that stable state, by Lemma~\ref{lem:pot_drop} and Lemma~\ref{lem:long} admits load balancing properties even when collapsing multiple timesteps into a single frame, is our lever which will be used to derive strong properties of such states in the rest of this section.

\begin{definition}
We say that a state $\state_t$ admits a \emph{$\infty$-subcycle decomposition} if it admits $i$-steps subcycle decomposition for arbitrarily large i.
\end{definition}

Now we proceed to obtain a more algorithmic characterization of stable states. First, we show that if we do not experience a potential drop during $2m^2$ time steps, then the rotor-router system has reached its limit configuration.

\begin{theorem}
\label{th:subcycle}
The following conditions are equivalent:
\begin{enumerate}[(i)]
\item \label{subcycle1} $\state_T$ is stable,
\item \label{subcycle2} $\state_T$ admits a $\infty$-subcycle decomposition,
\item \label{subcycle3} $\state_T$ admits a $(2m^2)$-subcycle decomposition.
\end{enumerate}
\end{theorem}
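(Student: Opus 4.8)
The plan is to establish the cycle of implications $\eqref{subcycle1}\Rightarrow\eqref{subcycle2}\Rightarrow\eqref{subcycle3}\Rightarrow\eqref{subcycle1}$, leveraging Lemma~\ref{lem:pot_drop} and the equivalences of Lemma~\ref{lem:long} (in particular item~\eqref{lem:long:item4}) throughout. The implication $\eqref{subcycle2}\Rightarrow\eqref{subcycle3}$ is immediate from the definitions, since a $\infty$-subcycle decomposition is by definition an $i$-step decomposition for all $i$, hence in particular for $i=2m^2$. So the real content lies in the other two implications.

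For $\eqref{subcycle1}\Rightarrow\eqref{subcycle2}$: if $\state_T$ is stable, then the whole trajectory from time $T$ onward is periodic, say with period $\per$, so $\load_{t}(e)=\load_{t+\per}(e)$ for all $e$ and all $t\ge T$. First I would observe that on a periodic trajectory the potentials $\Phi_i(\state_t)$ are periodic in $t$; combined with the monotonicity of Lemma~\ref{lem:pot_drop} (each $\Phi_i$ is non-increasing along the trajectory), periodicity forces each $\Phi_i$ to be \emph{constant} for all $t\ge T$. By the equivalence $\eqref{lem:long:item4}\Leftrightarrow\eqref{lem:long:item1}$ in Lemma~\ref{lem:long}, constancy of all $\Phi_i$ over an arbitrarily long window starting at $T$ yields that $\state_T$ admits an $i$-step subcycle decomposition for every $i$, which is exactly a $\infty$-subcycle decomposition.

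For $\eqref{subcycle3}\Rightarrow\eqref{subcycle1}$: this is the crux and the expected main obstacle. Assume $\state_T$ admits a $(2m^2)$-step subcycle decomposition, with a valid mapping $M_v:\inedg(v)\to\outedg(v)$ at each vertex. Gluing these local bijections gives a partition $\vec E = \vec E_1\cup\cdots\cup\vec E_c$ into arc-disjoint Eulerian cycles $C_j$ (as in the interpretation following Definition~\ref{def:dec}), and on each $C_j$ the tokens currently assigned to it simply march forward along $C_j$ in lockstep during steps $T,\dots,T+2m^2-1$. The length of each cycle $C_j$ is at most $2m$. The idea is to show that within $2m^2 \ge \sum_j |C_j| \cdot(\text{something})$ steps — more carefully, within $|C_j|$ steps the token configuration \emph{on cycle $C_j$} together with the pointers at vertices of $C_j$ returns to its state at time $T$ restricted to $C_j$; hence the full state $\state_T$ recurs after $\mathrm{lcm}_j |C_j|$ steps, and since $\mathrm{lcm}_j|C_j|$ divides $\mathrm{lcm}(1,2,\dots,2m)$ which is... no — rather, I would argue directly that the configuration restricted to each $C_j$ has period exactly $|C_j|\le 2m$, so the global configuration at time $T$ repeats at time $T+\mathrm{lcm}_j|C_j|$. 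The key point making this work is that $2m^2$ is a safe upper bound on the window we need to see the decomposition hold: since each cycle has length at most $2m$ and there are at most $m$ cycles (each using at least two arcs... actually at least one edge), after $2m^2$ steps we are guaranteed the decomposition has persisted long enough that \emph{both} the pointer values and the per-cycle token counts have stabilized into their periodic pattern. The delicate part is verifying that the pointer at each vertex also returns to its time-$T$ value: a vertex $v$ lying on cycles $C_{j_1},\dots,C_{j_\ell}$ (with multiplicity equal to $\deg(v)/$... ) has its pointer advanced once per token sent, and I must check that after $\mathrm{lcm}$ of the relevant cycle lengths the pointer has been advanced a multiple of $\deg(v)$ times, which follows from the subcycle structure because each cycle passing through $v$ contributes the same number of token-departures per period. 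Once periodicity of $\state_T$ is established, stability follows by definition. The bound $2m^2$ must be shown sufficient: it suffices that $2m^2 \ge 2m \cdot (\#\text{cycles})$ isn't quite the argument; instead one shows $\mathrm{lcm}_j|C_j| \le 2m^2$ is \emph{not} claimed — rather the decomposition need only hold for the window $[T, T+\per)$ where $\per \le \mathrm{lcm}_j |C_j|$, and a crude bound gives $\per \le 2m^2$ when... I would instead simply note that if the decomposition holds for $2m^2$ steps it holds for all time by $\eqref{subcycle2}\Leftrightarrow\eqref{subcycle3}$ applied after first proving that finite window implies infinite window. That last reduction — finite $(2m^2)$-window implies $\infty$-window — is really the technical heart: one shows via Lemma~\ref{lem:long}\eqref{lem:long:item4} that if $\Phi_i$ fails to be constant for some $i$, it must drop by at least $1$ within the first $O(m^2)$ steps (because each $\Phi_i \le i^2 \cdot 2m \le$ a polynomial bound, and $\Phi_1\le$ (total tokens)$^2$, so it cannot strictly decrease too many times), giving the quantitative $2m^2$ threshold.
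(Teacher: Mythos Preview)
Your implications $\eqref{subcycle1}\Rightarrow\eqref{subcycle2}$ and $\eqref{subcycle2}\Rightarrow\eqref{subcycle3}$ are fine and match the paper's reasoning (the paper routes them as $\eqref{subcycle1}\Rightarrow\eqref{subcycle3}$ via the same potential-constancy argument, but this is cosmetic).

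The genuine gap is in $\eqref{subcycle3}\Rightarrow\eqref{subcycle1}$. Your first attempt --- showing the full state recurs after $\mathrm{lcm}_j|C_j|$ steps and that this fits inside the $2m^2$ window --- cannot work: as Theorem~\ref{th:exponential} shows, $\mathrm{lcm}_j|C_j|$ can be $\exp(\Omega(\sqrt{n\log n}))$, so the $2m^2$ window is far too short to witness a full global period. Your fallback potential argument (``if $\Phi_i$ fails to be constant it must drop within $O(m^2)$ steps'') is also not justified: a bound on $\Phi_i$ limits the \emph{total number} of drops, not the \emph{time until} the first drop; nothing you wrote prevents $\Phi_i$ from staying constant on $[T,T+2m^2]$ and then dropping at time $T+2m^2+1$.

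The paper's actual argument for $\eqref{subcycle3}\Rightarrow\eqref{subcycle2}$ avoids global periodicity altogether. It takes the maximal $\tau\ge 2m^2$ for which a $\tau$-decomposition exists, locates the single pair of incoming arcs $e_1,e_2$ witnessing the failure at step $\tau+1$ (via Lemma~\ref{lem:long}\eqref{lem:long:item5}), and looks only at the \emph{two} cycles $\vec E_1,\vec E_2$ containing them. The key quantitative fact is that $\ell=\mathrm{lcm}(|\vec E_1|,|\vec E_2|)\le m^2$: either the two cycles coincide (length $\le 2m$) or they are disjoint (so $|\vec E_1|+|\vec E_2|\le 2m$ and the product is $\le m^2$). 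Since $2\ell\le 2m^2$, the discrepancy bound from the assumed decomposition gives $|\cload_T^{T+2\ell}(e_1)-\cload_T^{T+2\ell}(e_2)|\le 1$; but periodicity along each cycle makes this quantity exactly twice $|\cload_T^{T+\ell}(e_1)-\cload_T^{T+\ell}(e_2)|$, forcing equality over every length-$\ell$ window. Shifting by $\ell$ then reduces the offending interval ending at $T+\tau+1$ to one lying inside $[T,T+\tau]$, contradicting the $\tau$-decomposition. The insight you are missing is precisely this reduction to a \emph{pair} of cycles, whose lcm is polynomially bounded even though the global lcm is not.
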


As a direct consequence of the proof of Theorem~\ref{th:subcycle}, we have:
\begin{corollary}
\label{ob:subc}
For a stable state $\state_T$, any mapping between incoming and outgoing arcs of $v$ denoted $M_v : \inedg(v) \to \outedg(v)$ that is valid with respect to $2m^2$-subcycle decomposition, is also valid with respect to $\infty$-subcycle decomposition.
\end{corollary}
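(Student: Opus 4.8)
The plan is to compare the given mapping $M$ against a reference mapping $M'$ that is already known to persist forever, and to show that the two induce the same arc‑load evolution. First I would invoke Theorem~\ref{th:subcycle}: since $\state_T$ is stable it admits an $\infty$-subcycle decomposition. Because there are only finitely many families of bijections $(N_v)_v$ with $N_v:\inedg(v)\to\outedg(v)$, and a family valid for an $i$-step subcycle decomposition is a fortiori valid for every $i'\le i$ (equation~\eqref{eq:dec} over the longer window contains the shorter one), some single family $M'=(M'_v)_v$ satisfies~\eqref{eq:dec} for all $t\ge T$. This $M'$ partitions $\vec{E}$ into arc‑disjoint cycles $C'_1,\dots,C'_c$ with $\sum_j\ell'_j=2m$, so each $\ell'_j\le 2m$; moreover~\eqref{eq:dec} says exactly that the loads rotate along these cycles, hence for every arc $b$ (lying on $C'_{j(b)}$) the sequence $t\mapsto\load_t(b)$ is periodic with period $\ell'_{j(b)}$ on $\closedrange{T}{\infty}$.

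Next, let $M=(M_v)_v$ be an arbitrary family valid with respect to the $2m^2$-subcycle decomposition. For any vertex $v$ and any $a\in\outedg(v)$, applying~\eqref{eq:dec} to both $M$ and $M'$ on the window $\halfrange{T}{T+2m^2}$ gives $\load_t\!\big(M_v^{-1}(a)\big)=\load_{t+1}(a)=\load_t\!\big((M'_v)^{-1}(a)\big)$ for every $t\in\halfrange{T}{T+2m^2}$. Thus it suffices to prove the claim: if $b,b'\in\inedg(v)$ satisfy $\load_t(b)=\load_t(b')$ for all $t\in\halfrange{T}{T+2m^2}$, then $\load_t(b)=\load_t(b')$ for all $t\ge T$. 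Granting the claim, $\load_{t+1}(a)=\load_t\!\big((M'_v)^{-1}(a)\big)=\load_t\!\big(M_v^{-1}(a)\big)$ for all $t\ge T$, i.e.\ $M$ satisfies~\eqref{eq:dec} on $\closedrange{T}{\infty}$, which is precisely validity with respect to the $\infty$-subcycle decomposition.

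The heart of the argument — and what I expect to be the main obstacle — is this claim, which I would prove by a case split on whether $b$ and $b'$ lie on the same $M'$-cycle. If they lie on a common cycle $C'$ of length $\ell\le 2m$, write $b=\sigma^p(b_0)$ and $b'=\sigma^q(b_0)$ for the successor map $\sigma$ on $C'$; the rotation gives $\load_{T+s}(b)=\load_T(\sigma^{-s}(b))$, so equality over $s=0,\dots,2m^2-1$ (which covers all residues mod $\ell$ since $2m^2\ge\ell$) forces the load pattern on $C'$ at time $T$ to be invariant under the shift $\sigma^{q-p}$, whence $\load_{T+s}(b)=\load_{T+s}(b')$ for every $s\ge 0$. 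If $b$ and $b'$ lie on distinct cycles, then $t\mapsto\load_t(b)-\load_t(b')$ is periodic with period $L=\mathrm{lcm}\big(\ell'_{j(b)},\ell'_{j(b')}\big)\le \ell'_{j(b)}\,\ell'_{j(b')}\le m^2$, using $\ell'_{j(b)}+\ell'_{j(b')}\le\sum_j\ell'_j=2m$; since this difference vanishes on an interval of length $2m^2\ge L$, it vanishes identically. The delicate part is the cycle‑length bookkeeping — checking that $2m^2$ really is a long enough observation window in both cases, and that the reduction of the previous paragraph is carried out cleanly — after which the corollary follows.
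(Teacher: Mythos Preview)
Your proof is correct. The paper does not give a separate argument for this corollary; it simply declares it a direct consequence of the proof of Theorem~\ref{th:subcycle}, whose $(\ref{subcycle3})\Rightarrow(\ref{subcycle2})$ direction already isolates the two facts you use --- that loads rotate periodically along the cycles of any valid subcycle decomposition, and that for two arcs on cycles of lengths $\ell_1,\ell_2$ one has $\mathrm{lcm}(\ell_1,\ell_2)\le m^2$ (distinct cycles) or $\ell_i\le 2m$ (same cycle), so agreement on a window of length $2m^2$ propagates forever.

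The one ingredient you add that the paper leaves implicit is the compactness step producing a single reference family $M'$ valid for all $\dt$: the nested finite nonempty sets of valid families have nonempty intersection. This is a clean way to anchor the periodicity of every $t\mapsto\load_t(b)$ from time $T$ onward, which the paper's proof otherwise obtains only within the window $\halfrange{T}{T+\tau}$ of the maximal $\tau$. Your reduction to the claim ``equal loads on $\halfrange{T}{T+2m^2}$ implies equal loads forever'' and its two-case proof are exactly the content of the paper's cycle-length bookkeeping (and, for the same-cycle case, parallel the Fine--Wilf argument in Lemma~\ref{lem:subwords}). So the route is not genuinely different --- you have written out explicitly what the paper asserts follows from Theorem~\ref{th:subcycle}.
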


We are now ready to provide a stronger characterization of a stable state in Theorem~\ref{th:stable_charact} (compared to Theorem~\ref{th:subcycle}), based on refined analysis of the potential behavior.
\begin{theorem}
\label{th:stable_charact}
State $\state_T$ is stable iff:
$\sum_{i=1}^{3m} \Phi_{i}(\state_T) = \sum_{i=1}^{3m} \Phi_{i}(\state_{T+2m^2}).$
\end{theorem}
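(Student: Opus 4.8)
The plan is to prove both directions through the potential/subcycle machinery already assembled. The forward implication is immediate: if $\state_T$ is stable then by Theorem~\ref{th:subcycle} it admits an $\infty$-subcycle decomposition, so by the equivalence in Lemma~\ref{lem:long} every potential $\Phi_i$ is constant at all times $\ge T$; in particular $\Phi_i(\state_T)=\Phi_i(\state_{T+2m^2})$ for each $i$, and summing over $i=1,\dots,3m$ gives the claimed equality.

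For the converse, recall from Lemma~\ref{lem:pot_drop} that each $\Phi_i$ is non-increasing in time, so $\Phi_i(\state_{T+2m^2})\le\Phi_i(\state_T)$ for every $i$; since the two sums over $i\in\{1,\dots,3m\}$ coincide, each summand coincides, and then monotonicity forces $\Phi_i$ to be constant on all of $\closedrange{T}{T+2m^2}$ for every $i\le 3m$. By the equivalence in Lemma~\ref{lem:long} between constancy of the potentials $\Phi_0,\dots,\Phi_{\dt}$ on the appropriate window and the existence of a $\dt$-step subcycle decomposition, this already shows that $\state_{T'}$ admits a $3m$-step subcycle decomposition for every $T'\in\closedrange{T}{T+2m^2-3m}$; equivalently, the incoming-load discrepancy at each vertex is at most $1$ over every sub-interval of $\halfrange{T}{T+2m^2}$ of length at most $3m$. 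It then suffices to upgrade this to a $2m^2$-step subcycle decomposition of $\state_T$, since by Theorem~\ref{th:subcycle} that is equivalent to stability.

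The lever for the upgrade is that $3m$ strictly exceeds $2m$, an upper bound on the length of every Eulerian subcycle. Fixing the partition $\vec{E}=\vec{E_1}\cup\dots\cup\vec{E_c}$ and matching $M$ of the $3m$-step decomposition at time $T$, each $\vec{E_j}$ is closed under the token flow on $\halfrange{T}{T+3m}$ and the flow along $\vec{E_j}$ is periodic with period $\ell_j=|\vec{E_j}|\le 2m<3m$. Writing $N_j$ for the time-invariant number of tokens on $\vec{E_j}$, one gets for every arc $e\in\vec{E_j}$ and every $L\le 3m$ the identity $\cload_T^{T+L}(e)=\lfloor L/\ell_j\rfloor N_j+\cload_T^{T+(L\bmod\ell_j)}(e)$, with the remainder lying in $\{\lfloor rN_j/\ell_j\rfloor,\lceil rN_j/\ell_j\rceil\}$ for $r=L\bmod\ell_j<\ell_j$. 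Restricted to a single component this collapses the potential at an arbitrary scale $i$ to potentials at scales $i\bmod\ell_j\le 2m$ — using that the component-restricted potentials are themselves non-increasing, by the same local convexity argument (Observations~\ref{ob:cload1} and~\ref{ob:load2} together with Lemma~\ref{lem:sum_of_sq}) that proves Lemma~\ref{lem:pot_drop} — and lets the discrepancy over an arbitrarily long sub-interval be read off from the discrepancy over one period, which is $\le 1$. Carried across the overlapping $3m$-windows starting at $T,T+1,\dots,T+2m^2-3m$, this bookkeeping propagates the period structure over all of $\halfrange{T}{T+2m^2}$ and yields incoming-discrepancy $\le 1$ on every sub-interval up to length $2m^2$, \ie a $2m^2$-step subcycle decomposition of $\state_T$.

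I expect the bootstrapping step to be the main obstacle, for two reasons. First, a potential drop witnessing failure of the decomposition at a scale $i>3m$ occurs exactly at the boundary of the window on which periodicity is available, so the collapse above cannot be applied naively: instead one should run the argument for the maximal $\dt$ with $\state_T$ admitting a $\dt$-step decomposition, observe that the decompositions of $\state_T$ and of its one-step-shifted states employ the same partition (over a full-period sub-window the matching is forced, in the spirit of Corollary~\ref{ob:subc}), and derive a contradiction with constancy of the small-scale potentials on $\closedrange{T}{T+2m^2}$ unless $\dt\ge 2m^2$. Second, one must check that $2m^2$ is a large enough horizon: if two Eulerian subcycles $\vec{E_{j_1}},\vec{E_{j_2}}$ sharing a vertex had different token densities, the gap $|N_{j_1}/\ell_{j_1}-N_{j_2}/\ell_{j_2}|$ would be at least $1/(\ell_{j_1}\ell_{j_2})\ge 1/m^2$ (since $\ell_{j_1}+\ell_{j_2}\le 2m$), so within $O(m^2)$ steps the incoming discrepancy at the shared vertex would exceed $1$ and some $\Phi_i$ with $i\le 3m$ would drop inside $\closedrange{T}{T+2m^2}$, contradicting the hypothesis; this is what prevents a non-stable state from passing the test, and is where the constant $2m^2$ is used.
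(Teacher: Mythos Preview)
Your forward direction and the opening of the reverse direction (each $\Phi_i$ constant on $\closedrange{T}{T+2m^2}$ for $i\le 3m$, hence a $3m$-step subcycle decomposition at every $T'\in\closedrange{T}{T+2m^2-3m}$) match the paper. The gap is in the bootstrapping. First, the claim that the remainder $\cload_T^{T+r}(e)$ lies in $\{\lfloor rN_j/\ell_j\rfloor,\lceil rN_j/\ell_j\rceil\}$ is false: a subcycle decomposition only cyclically shifts the loads along each $\vec{E_j}$, it does not balance them. On a triangle with the clockwise $3$-cycle carrying loads $0,1,2$ and the counter-clockwise cycle carrying $1,1,1$, one checks directly that this is a stable state admitting an $\infty$-decomposition, yet $\cload_T^{T+1}$ on the clockwise arc of load $0$ equals $0$, not $1=rN_j/\ell_j$. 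Second, even granting equal token densities on cycles sharing a vertex (your final paragraph), you have not shown how to bound $|\cload_{t}^{t'}(e_1)-\cload_{t}^{t'}(e_2)|$ for windows longer than $3m$; the overlapping-window propagation and the assertion that shifted decompositions share the same partition are stated rather than proved.

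The paper closes the gap differently. It takes the maximal $\tau$ with a $\tau$-decomposition of $\state_T$ (so $3m\le\tau<2m^2$), picks the witnessing pair $e_1\in\vec{E_1}$, $e_2\in\vec{E_2}$ with $|\sum_{i=\eta}^{\tau}c_i|\ge 2$ where $c_i=\load_{T+i}(e_1)-\load_{T+i}(e_2)$, and argues purely on the one-dimensional sequence $(c_i)$. From constancy of $\Phi_1,\dots,\Phi_{3m}$ on $\closedrange{T}{T+2m^2}$ one gets $c_i\in\{-1,0,1\}$ and $|\sum c_i|\le 1$ over every window of length $\le 3m$ inside $\halfrange{0}{2m^2}$. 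Since $(a_i)=(\load_{T+i}(e_1))$ and $(b_i)=(\load_{T+i}(e_2))$ are periodic on $\closedrange{0}{\tau}$ with periods $|\vec{E_1}|,|\vec{E_2}|$ summing to at most $2m$ when the cycles are distinct, the Fine--Wilf theorem yields: if $c_i=0$ on $2m$ consecutive indices then $c\equiv 0$ on $\closedrange{0}{\tau}$, contradicting the bad window. Hence consecutive nonzero $c_i$'s lie within distance $<2m<3m$, so by the $3m$-window bound they alternate in sign, forcing every partial sum into $\{-1,0,1\}$ --- the contradiction. This Fine--Wilf step is the missing ingredient; your density observation alone (which is essentially the mechanism behind Theorem~\ref{th:subcycle}) cannot close the argument when only $3m$-windows are controlled.
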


Finally, we provide an upper bound on the length of the period for any rotor-router state. It is interesting to see that the upper bound is not far from the period of the example graph given in Theorem~\ref{th:exponential}.

\begin{theorem}
For any stable state $S_{T}$, the period length of the limit cycle is bounded by $t_p = \bigo(\exp(\sqrt{(m \log m)}))$.
\end{theorem}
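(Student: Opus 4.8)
The plan is to exploit the subcycle‑decomposition structure of a stable state (Theorem~\ref{th:subcycle}) in order to express $\per$ as a divisor of a product of two least common multiples, each taken over a family of positive integers whose sum is $2m$, and then to invoke the Landau estimate for the largest lcm of a partition of an integer --- the same function $g$ (with $\log g(N)=(1+o(1))\sqrt{N\log N}$, see~\cite{Landau}) that already appeared in the proof of Theorem~\ref{th:exponential}.

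First I would invoke Theorem~\ref{th:subcycle} together with Corollary~\ref{ob:subc} to fix one mapping $M=(M_v)_{v\in V}$ that is valid with respect to the $\infty$-subcycle decomposition of the stable state $\state_T$, \ie $\load_t(e)=\load_{t+1}(M_v(e))$ for every $v$, every $e\in\inedg(v)$ and every $t\ge T$. Viewing $M$ as a permutation of $\vec E$ that sends an arc entering a vertex to an arc leaving it, its orbits $\vec E_1,\dots,\vec E_c$ partition $\vec E$; by the reformulation following Definition~\ref{def:dec} each orbit is an Eulerian cycle of the strongly connected subgraph it spans, of length $\ell_i:=|\vec E_i|$, and $\sum_{i=1}^c\ell_i=|\vec E|=2m$. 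The defining relation says precisely that $\load_{t+s}=\load_t\circ M^{-s}$ for all $s\ge 0$: in a stable state the arc loads are merely rotated along these Eulerian cycles.

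Next I would treat separately the two components of the state, the arc loads and the port pointers. Since $M^{\ell_i}$ is the identity on $\vec E_i$, the load function is periodic with period dividing $\text{lcm}(\ell_1,\dots,\ell_c)$; moreover, over any block of $\ell_i$ consecutive steps every arc of $\vec E_i$ carries exactly $k_i$ tokens in total, where $k_i:=\sum_{e\in\vec E_i}\load_t(e)$ is the (time‑independent) number of tokens circulating inside $\vec E_i$. For the pointers, note that the pointer at $v$ advances by $\load_t(v)$ positions of $\rho_v$ during step $t$, hence after $P$ steps it has rotated by $\cload_T^{T+P}(v)\bmod\deg(v)$. Taking $P:=\text{lcm}(\ell_1,\dots,\ell_c)\cdot\text{lcm}(\deg v_1,\dots,\deg v_n)$, the block fact gives $\cload_T^{T+P}(v)=\sum_i|\outedg(v)\cap\vec E_i|\cdot\frac{P}{\ell_i}k_i$, and since $\ell_i\mid\text{lcm}(\ell_1,\dots,\ell_c)$ each $\frac{P}{\ell_i}$ is a multiple of $\text{lcm}(\deg v_1,\dots,\deg v_n)$ and therefore of $\deg(v)$; hence $\deg(v)\mid\cload_T^{T+P}(v)$ and every pointer returns to its original port. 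As $\text{lcm}(\ell_1,\dots,\ell_c)\mid P$ the arc loads return as well, so $\state_{T+P}=\state_T$, and consequently $\per$ divides $P=\text{lcm}(\ell_1,\dots,\ell_c)\cdot\text{lcm}(\deg v_1,\dots,\deg v_n)$.

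Finally, both factors of $P$ are least common multiples of positive integers summing to $2m$ (we have $\sum_i\ell_i=2m$ and $\sum_v\deg(v)=2m$ by the handshake identity), so each is at most $g(2m)$, which gives $\per\le g(2m)^2=\exp((2+o(1))\sqrt{2m\log 2m})=\bigo(\exp(\sqrt{m\log m}))$, essentially matching the lower bound of Theorem~\ref{th:exponential}. I expect the delicate step to be the pointer bookkeeping of the third paragraph: the $\infty$-subcycle decomposition directly controls the arc loads, but one must also argue that the cyclic port pointers are restored at the same time --- this is exactly where the second factor $\text{lcm}(\deg v_1,\dots,\deg v_n)$ enters, and why one takes $P$ to be a multiple of $\text{lcm}(\ell_1,\dots,\ell_c)$ rather than of the (possibly smaller) true load period. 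Some care is also needed to confirm rigorously that $M$ is well defined on $\vec E$ and that its orbits are genuine arc‑disjoint Eulerian cycles spanning strongly connected subgraphs, \ie that the combinatorial reformulation of Definition~\ref{def:dec} applies.
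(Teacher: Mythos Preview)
Your argument is correct, but it is more elaborate than necessary and yields a slightly weaker constant in the exponent. The paper's proof is a single sentence: the period is at most $\text{lcm}(\ell_1,\dots,\ell_c)$, where the $\ell_i$ are the lengths of the Eulerian subcycles, and since $\sum_i\ell_i=2m$ this lcm is bounded by Landau's function $g(2m)=\exp\big((1+o(1))\sqrt{2m\log 2m}\big)$.

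The point you identify as ``delicate'' --- that the port pointers, not just the arc loads, must return --- is exactly right, but the paper has already dealt with it inside the proof of Theorem~\ref{th:subcycle}, implication \eqref{subcycle2}$\Rightarrow$\eqref{subcycle1}. There it is shown that after $t_r=\text{lcm}(\ell_1,\dots,\ell_c)$ steps the pointers are back as well: if two outgoing arcs $e_1,e_2$ of $v$ satisfied $|\cload_T^{T+t_r}(e_1)-\cload_T^{T+t_r}(e_2)|\ge 1$, then by periodicity of the loads one would get $|\cload_T^{T+2t_r}(e_1)-\cload_T^{T+2t_r}(e_2)|\ge 2$, contradicting Observation~\ref{ob:load2}. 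Hence $\state_{T+t_r}=\state_T$ and $\per\mid t_r$ without any extra factor.

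Your route bypasses this doubling trick by brute-forcing pointer return through the additional factor $\text{lcm}_v\deg(v)$, which is perfectly valid and entirely self-contained; the price is the bound $\per\le g(2m)^2$ instead of $\per\le g(2m)$. Both are $\exp(O(\sqrt{m\log m}))$, so the theorem as stated follows either way, but you could drop the second factor simply by citing the pointer argument already present in the proof of Theorem~\ref{th:subcycle}.
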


\begin{proof}
Observe, that any arc in $G$ can be part of exactly one cycle in any given subcycle decomposition. As the period length of any stable state is upperbounded by the least common multiple of the length of the cycles, we get the desired upper bound as the value of the \emph{Landau's function} on the total number of arcs in $G$.
\end{proof}

\section{Stabilization Time of the Rotor-Router System}
\label{sec:stab}
In this section we provide upper and lower bounds on the stabilization time of parallel rotor-router systems.
Since the values of potentials are discrete and non-increasing, in Theorem~\ref{th:stable_charact} we have a very powerful tool to reason about the stabilization of a state --- if the sum of potentials remains unchanged for more than $3m$ time steps, the system has reached a stable state. Thus, we can naively bound each of the potentials by $\bigo((mk)^2)$, and so also bound the sum of potentials by $\bigo(m \cdot (mk)^2)$. This gives the following corollary.

\begin{corollary}
For any initial state $\state_0$, there exists $T = O(m^5k^2)$ such that $\state_T$ is stable.
\end{corollary}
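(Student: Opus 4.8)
The plan is to read this off from Lemma~\ref{lem:pot_drop} and Theorem~\ref{th:stable_charact}, using the single extra fact that the relevant potential is a non-negative integer, so a monotone integer quantity can only drop finitely often before it stays constant over a full window — at which point Theorem~\ref{th:stable_charact} declares the state stable.

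Concretely, I would first introduce the aggregate quantity $\Psi(\state_t) \defeq \sum_{i=1}^{3m} \Phi_i(\state_t)$ and bound its initial value crudely. For each fixed $i$, every one of the $k$ tokens is pushed onto exactly one arc at each step (Observation~\ref{ob:load1}), so $\sum_{e \in \vec{E}} \cload_0^{i}(e) = \sum_{t \in \halfrange{0}{i}} \sum_{e \in \vec{E}} \load_t(e) = ik$. Since for non-negative reals $\sum_e x_e^2 \le \bigl(\sum_e x_e\bigr)^2$, this gives $\Phi_i(\state_0) = \sum_{e} \bigl(\cload_0^{i}(e)\bigr)^2 \le (ik)^2 \le (3mk)^2$, and summing over $i = 1, \ldots, 3m$ yields $\Psi(\state_0) = \bigo(m^3 k^2)$. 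Note also that $\Psi(\state_t) \ge 0$ and is integer-valued for every $t$.

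Next I would invoke Lemma~\ref{lem:pot_drop}: each $\Phi_i$ is non-increasing in time, hence so is $\Psi$. Now look at the subsequence of times $T_j \defeq 2m^2 j$, $j = 0, 1, 2, \ldots$. For each $j$ exactly one of two things happens: either $\Psi(\state_{T_j}) = \Psi(\state_{T_{j+1}})$, in which case $\sum_{i=1}^{3m}\Phi_i(\state_{T_j}) = \sum_{i=1}^{3m}\Phi_i(\state_{T_j + 2m^2})$ and Theorem~\ref{th:stable_charact} (applied at $T = T_j$) certifies that $\state_{T_j}$ is stable; or $\Psi$ strictly decreases across the window, i.e. $\Psi(\state_{T_{j+1}}) \le \Psi(\state_{T_j}) - 1$ by integrality. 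The latter cannot happen for every $j \in \{0, 1, \ldots, \Psi(\state_0)\}$, since that would force $\Psi(\state_{T_{\Psi(\state_0)+1}}) \le \Psi(\state_0) - (\Psi(\state_0) + 1) < 0$, contradicting non-negativity. Hence some $j \le \Psi(\state_0) = \bigo(m^3 k^2)$ makes $\state_{T_j}$ stable, and taking $T \defeq T_j = 2m^2 j = \bigo(m^5 k^2)$ finishes the argument.

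I do not expect a real obstacle in this step: all the substance is already packed into Lemma~\ref{lem:pot_drop} (monotonicity) and Theorem~\ref{th:stable_charact} (the finite-window stability test), and the remaining work is purely bookkeeping. The only mild care needed is (a) the crude $\bigo(m^3 k^2)$ bound on $\Psi(\state_0)$ — one must make sure to use the total token count $k$ and the length-$3m$, width-$2m^2$ constants consistently — and (b) the counting argument turning "a non-increasing non-negative integer must eventually be constant over a window" into the explicit $T = \bigo(m^5 k^2)$ bound; one could slightly sharpen constants, but the polynomial order is exactly as stated.
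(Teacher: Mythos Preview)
Your proposal is correct and follows essentially the same approach as the paper: bound the aggregate potential $\Psi = \sum_{i=1}^{3m}\Phi_i$ by $\bigo(m^3k^2)$, use the non-increasing integer-valued nature from Lemma~\ref{lem:pot_drop}, and apply Theorem~\ref{th:stable_charact} over windows of length $2m^2$ to conclude. Your write-up is in fact more careful than the paper's one-paragraph sketch (which somewhat loosely speaks of ``$3m$ time steps'' rather than the $2m^2$ window actually required by Theorem~\ref{th:stable_charact}).
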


We will now show how to obtain an even stronger bound (in terms of dependence on $k$), but for this we need a refined upper bound on initial potential. To achieve this, we need to treat the rotor-router system as a load balancing process.

\paragraph{Round-fair processes.} As we intend to provide bounds on the values of the $i$-th potential for rotor-router process (given sufficiently long initialization time), we need to analyze the behavior of the cumulative rotor-router processes, \ie, for a fixed $\dt$, to observe how the distribution of tokens $\cload_{t}^{t+\dt}$ evolves with time. Thus, in the following, we will use the broader concept of \emph{round-fair processes} denoted by $\mathcal{W}$, as introduced in~~\cite{RSW98}.  Specifically, we will call an algorithm \emph{strictly fair} if, in every step, the number of tokens that are sent out over any two edges incident to a node differs by at most one.

\begin{definition}
A process of token distribution (denoted by $\mathcal{W}$) is \emph{round-fair},
if:
\begin{equation}
\label{eq:roundfair}
\forall_{e\in \outedg(v)} \wload_t(e) \in \left\{\left\lfloor \frac{\wload_t(v)}{\deg(v)} \right\rfloor,\left\lceil\frac{\wload_t(v)}{\deg(v)}\right\rceil \right\}
\end{equation}
and no tokens are left in nodes:
$ \wload_t(v) = \sum_{e \in \outedg(v)} \wload_t(e).$
\end{definition}
We observe that any rotor-router process is round-fair. Also, by Observations~\ref{ob:load1}, \ref{ob:cload1} and \ref{ob:load2}, for any fixed $\dt$, cumulative rotor-router in the sense of $\wload_t = \cload_t^{t+\dt}$ is also round-fair.

The round-fairness condition can be strengthened into algorithms which are \emph{cumulatively fair}. We will call an algorithm \emph{cumulatively fair} if for every interval of consecutive time steps, the total number of tokens sent out by a node differs by at most a small constant for any two adjacent edges.
It is easy to see that cumulative fair algorithms under the constraint that every token is propagated, are performing the rotor-router distribution (and vice versa, rotor-router distribution is cumulative fair with every token propagated).

In the rest of this section, in order to simplify notation, we will assume that $G$ is not bipartite. For the full formulation of Definition~\ref{def:distance}, Lemma~\ref{lem:even_length}, $\mathcal{B}(\mathcal{W}_t)$ and Definition~\ref{def:discr}, refer to the Appendix~\ref{appendix:sec4}.

\begin{definition}
A sequence of arcs $e_1,e_2,\ldots,e_p$ is called an \emph{alternating path} (of length $p-1$), if either every pair of arcs $e_{2i},e_{2i+1}$ shares staring vertex and every pair of arcs $e_{2i-1},e_{2i}$ shares ending vertex, or vice-versa: every pair of arcs $e_{2i},e_{2i+1}$ shares ending vertex and every pair of arcs $e_{2i-1},e_{2i}$ shares starting vertex.
\end{definition}

\begin{definition}
\label{def:distance}
We define the notion of \emph{distance} between two arcs $e,e'$, denoted by $d(e,e')$, as a length of shortest alternating path having $e$ and $e'$ as first and last arcs.
\end{definition}
In other words, a distance can be treated as a transitive closure of a relation where we define any pair of arcs sharing starting or ending vertex as at distance one.

\begin{lemma}
\label{lem:even_length}
For any two arcs $e,e'$ in non-bipartite graph $G$:
$d(e,e') \le 4D+1$
moreover there exists an alternating path connecting $e$ and $e'$ containing at most $2D$ pairs of arcs sharing ending vertices and at most $2D+1$ pairs of arcs sharing starting vertices.
\end{lemma}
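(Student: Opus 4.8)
The plan is to translate the statement into a question about walks in $G$ and then exhibit a concrete short alternating path. Recall that a sequence of arcs $e_1,\ldots,e_p$ is an alternating path precisely when there is a walk $z_0,z_1,\ldots,z_p$ in $G$ such that $e_i$ joins $z_{i-1}$ and $z_i$, and the orientations of the $e_i$ alternate between ``against the walk'' ($e_i=\vec{z_i z_{i-1}}$) and ``with the walk'' ($e_i=\vec{z_{i-1}z_i}$). I will look for such a walk with $z_0=v$ and $z_1=u$, so that the first arc $e_1$, taken against the walk, equals $e=\vec{uv}$; and I will take $p$ even, with $z_{p-1}=u'$, $z_p=v'$, so that the last arc $e_p$, taken with the walk, equals $e'=\vec{u'v'}$. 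With $e_1$ against the walk, the arcs are against/with alternately, hence a pair $(e_i,e_{i+1})$ shares a starting vertex when $i$ is odd and an ending vertex when $i$ is even; for $p$ arcs this gives exactly $p/2$ pairs sharing a starting vertex and $p/2-1$ pairs sharing an ending vertex. So it suffices to produce a walk of this shape with $p$ even and $p\le 4D+2$.

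After the forced first step $v\to u$ and last step $u'\to v'$, what remains is to choose a walk $W_0$ in $G$ from $u$ to $u'$ of \emph{even} length at most $4D$, and set $p=|W_0|+2$. If $\mathrm{dist}_G(u,u')$ is even, a shortest $u$--$u'$ path works and has length at most $D$. The only obstruction is when $\mathrm{dist}_G(u,u')$ is odd, and this is exactly where non-bipartiteness is used. I claim that through every vertex $w$ of a connected non-bipartite graph there is a closed walk of odd length at most $2D+1$: running a BFS from $w$, if no edge joined two vertices of the same layer then colouring the layers by parity would be a proper $2$-colouring, contradicting non-bipartiteness; hence some edge $\{x,y\}$ has $\mathrm{dist}(w,x)=\mathrm{dist}(w,y)=\ell\le D$, and the closed walk $w\leadsto x\to y\leadsto w$ along shortest paths has odd length $2\ell+1\le 2D+1$. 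Appending such a closed walk at $w=u'$ to a shortest $u$--$u'$ path yields $W_0$, whose length is the sum of two odd numbers, hence even, and at most $D+(2D+1)=3D+1\le 4D$. Thus $p=|W_0|+2\le 3D+3\le 4D+2$.

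It then remains to verify the bookkeeping: the arc sequence read off from the walk $(v,u,W_0,u',v')$ is a genuine alternating path (the orientation convention makes each consecutive pair share a start or an end alternately), its first arc is $e$ and its last arc is $e'$ (immediate from the forced endpoints $z_0=v,z_1=u$ and $z_{p-1}=u',z_p=v'$), and the pair counts are $p/2-1\le 2D$ sharing an ending vertex and $p/2\le 2D+1$ sharing a starting vertex (using $p\le 3D+3$ and $D\ge 1$, the case $D=0$ being vacuous as there are no arcs). Hence $d(e,e')\le p-1\le 3D+2\le 4D+1$. The degenerate cases $u=u'$ or $e=e'$ only shrink $W_0$ and are checked directly. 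The one non-routine step is the parity fix --- producing a short odd detour --- handled by the BFS ``in-layer edge'' argument; everything else is tracking which endpoints of $e$ and $e'$ the walk must touch so that the extremal arcs come out exactly right.
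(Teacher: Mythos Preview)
Your proof is correct. Both your argument and the paper's reduce the claim to producing an even-length walk between the starting vertices of $e$ and $e'$, and then read off an alternating sequence of arcs from that walk. The difference is in the parity fix: the paper takes a global shortest odd cycle $C$ (of length at most $2D+1$), routes from each start vertex to its nearest point on $C$, and uses the arc of $C$ of the correct parity to connect; you instead append a short odd closed walk at one endpoint, obtained from an intra-layer BFS edge. Because your detour is localized at a single vertex rather than requiring approach segments from both sides, you actually obtain the sharper bound $d(e,e')\le 3D+2$ (and correspondingly tighter pair counts), which of course implies the stated $4D+1$. Your translation between walks in $G$ and alternating arc sequences is also spelled out more explicitly than in the paper. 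The two arguments are close cousins, but yours is a bit cleaner and yields a slightly better constant.
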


Now we will proceed to analyze the behavior of the potential defined as in Definition~\ref{def:pot},
with respect to a round-fair processes.

Recall that $\Phi(\wload_t) \defeq \sum_{e \in \vec{E}} \left(\wload_t(e)\right)^2.$
We will denote the smallest value of the potential achieved by distribution of tokens that preserves sums of loads over load balancing sets of arcs (ignoring the restriction that loads are integers) by:

\begin{equation}\label{eq:b}\mathcal{B}(\wload_t) \defeq 2m\cdot\left( \avg_{e \in \vec{E}} \wload_t(e)\right)^{\!2},\end{equation}

For non-bipartite graphs, \eqref{eq:b} reduces to the following form: $\mathcal{B}(\wload_t) = \frac{k^2}{2m} = \const.$ The following lemma follows directly from the convexity of quadratic functions.

\begin{lemma}
$\Phi(\wload_t) \ge \mathcal{B}(\wload_t).$
\end{lemma}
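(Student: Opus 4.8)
The plan is to recognize this bound as a direct instance of Jensen's inequality (equivalently Cauchy--Schwarz) for the convex function $x \mapsto x^2$, applied over the $2m$ arcs of $\vec{E}$. Since the statement only asserts $\Phi(\wload_t) \ge \mathcal{B}(\wload_t)$ with $\mathcal{B}(\wload_t) = 2m\cdot(\avg_{e\in\vec{E}}\wload_t(e))^2$ as in \eqref{eq:b}, no properties of the rotor-router dynamics are needed beyond the combinatorial fact that $|\vec{E}| = 2m$.

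First I would fix notation, writing $a_e \defeq \wload_t(e)$ for $e \in \vec{E}$ and recalling $|\vec{E}| = 2m$, so that $\avg_{e\in\vec{E}} a_e = \frac{1}{2m}\sum_{e\in\vec{E}} a_e$ by definition of the averaging operator. Next, I would invoke convexity of $x\mapsto x^2$: for any reals $(a_e)_{e\in\vec{E}}$,
$$\frac{1}{2m}\sum_{e\in\vec{E}} a_e^2 \;\ge\; \left(\frac{1}{2m}\sum_{e\in\vec{E}} a_e\right)^{\!2},$$
which is exactly Jensen's inequality (or, expanded, the Cauchy--Schwarz bound $\bigl(\sum_{e} a_e\bigr)^2 \le 2m\sum_{e} a_e^2$ with the all-ones vector). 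Multiplying both sides by $2m$ gives $\sum_{e\in\vec{E}} a_e^2 \ge 2m\bigl(\avg_{e\in\vec{E}} a_e\bigr)^2$, i.e. $\Phi(\wload_t) \ge \mathcal{B}(\wload_t)$ by Definition~\ref{def:pot} and \eqref{eq:b}.

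For completeness I would add the sanity check matching the remark preceding the lemma: in the non-bipartite case the total number of propagated tokens is conserved, so $\sum_{e\in\vec{E}} \wload_t(e) = k$ at every step, whence $\avg_{e\in\vec{E}}\wload_t(e) = k/(2m)$ and $\mathcal{B}(\wload_t) = 2m\cdot(k/(2m))^2 = k^2/(2m)$ is indeed a constant independent of $t$. In the general (possibly bipartite) formulation of $\mathcal{B}$ from the appendix, where the average is taken separately over each load-balancing class of arcs, the same argument applies termwise --- convexity within each class, then summation --- so the bound goes through unchanged.

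There is essentially no obstacle here; the only point requiring care is the normalization, namely that the factor $2m$ and the average are taken over all $2m$ directed arcs rather than the $m$ undirected edges, so that the inequality is applied to the correct index set.
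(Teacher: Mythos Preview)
Your proposal is correct and matches the paper's approach exactly: the paper simply states that the lemma ``follows directly from the convexity of quadratic functions'' without further elaboration, and your Jensen/Cauchy--Schwarz argument is precisely that. Your added remarks on the bipartite case and the normalization over $2m$ arcs are accurate and in line with the appendix version of $\mathcal{B}$.
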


\begin{definition}
\label{def:discr}
We say that a configuration of tokens $\wload_t$ in non-bipartite graph $G$ has discrepancy over arcs equal to $\max_{e,e' \in \vec{E}} (\wload_t(e) - \wload_t(e'))$.
\end{definition}

The next observation follows directly from \eqref{eq:roundfair}.
\begin{observation}
The discrepancy over arcs is non-increasing in time, that is:
$$\max_{e,e' \in \vec{E}} (\wload_t(e) - \wload_t(e')) \ge \max_{e,e' \in \vec{E}} (\wload_{t+1}(e) - \wload_{t+1}(e')).$$
\end{observation}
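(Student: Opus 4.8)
The plan is to prove the stronger statement that the entire \emph{range} of arc loads can only shrink in one step: every arc load at time $t+1$ lies between the minimum and the maximum arc load at time $t$. Fix a time step $t$ and set $M \defeq \max_{e \in \vec{E}} \wload_t(e)$ and $\mu \defeq \min_{e \in \vec{E}} \wload_t(e)$; both are integers, since they count tokens. It suffices to show that $\mu \le \wload_{t+1}(e) \le M$ for every arc $e \in \vec{E}$, because then $\max_{e} \wload_{t+1}(e) - \min_{e} \wload_{t+1}(e) \le M - \mu$, which is exactly the claimed inequality.

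So fix an arc $e = \vec{vu} \in \outedg(v)$. First I would invoke flow conservation across the step --- Observation~\ref{ob:load1} (or its cumulative analogue Observation~\ref{ob:cload1}, which is the relevant one when $\wload_t = \cload_t^{t+\dt}$) --- to write $\wload_{t+1}(v) = \sum_{e' \in \inedg(v)} \wload_t(e')$. Since $|\inedg(v)| = \deg(v)$ and each of these $\deg(v)$ summands lies in $[\mu, M]$, dividing by $\deg(v)$ gives a genuine average: $\mu \le \wload_{t+1}(v)/\deg(v) \le M$. Now apply the round-fairness condition \eqref{eq:roundfair} at $v$ for the step producing $\wload_{t+1}$: we have $\wload_{t+1}(e) \in \bigl\{ \lfloor \wload_{t+1}(v)/\deg(v) \rfloor,\, \lceil \wload_{t+1}(v)/\deg(v) \rceil \bigr\}$. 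Because $\mu$ and $M$ are integers bracketing the real number $\wload_{t+1}(v)/\deg(v)$, rounding it down cannot fall below $\mu$ and rounding it up cannot exceed $M$; hence $\mu \le \wload_{t+1}(e) \le M$, as needed.

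As for the main obstacle: there really is none of substance, which is why the paper labels this an observation following directly from \eqref{eq:roundfair}. The only two points that must not be skipped are (a) that $\mu$ and $M$ are integers, so that the floor and ceiling in \eqref{eq:roundfair} cannot push the value outside $[\mu, M]$, and (b) that $|\inedg(v)| = \deg(v)$, so that $\wload_{t+1}(v)/\deg(v)$ is an average of incoming loads rather than some other convex combination. The same argument applies verbatim to a cumulative process $\wload_t = \cload_t^{t+\dt}$ since it too is round-fair; in a bipartite graph one would instead have to run the estimate separately over the two classes of load-balancing arc-sets, but in the non-bipartite setting assumed throughout this section the single-step version above is all that is required.
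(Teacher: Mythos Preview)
Your proposal is correct and is precisely the natural expansion of what the paper leaves implicit: the paper gives no proof beyond the remark that the observation follows directly from \eqref{eq:roundfair}, and your argument---that $\wload_{t+1}(e)$ is a rounding of an average of incoming loads $\wload_t(e')$, hence bounded between the integer extremes $\mu$ and $M$---is exactly the intended one-line justification. Your caveats about integrality of $\mu,M$ and about the bipartite case (where the bound must be taken per balancing set) are also on point.
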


We also put the following bound on the potential drop with respect to the discrepancy of number of tokens over arcs.
\begin{lemma}
\label{lem:potdrop}
Consider a timestep $t$ such that $\mathcal{W}_t$ has discrepancy over arcs $x>4D+1$.
Then:
$\Phi(\wload_t) - \Phi(\wload_{t+1}) \ge \frac{(x-4D-1)(x-1)}{4D}.$
\end{lemma}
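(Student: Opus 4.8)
The plan is to track what happens to a single ``load-balancing move'' at a node where a large discrepancy is still present, and to show that such a move forces a quantifiable drop in $\Phi$. Let $t$ be a timestep at which $\mathcal{W}_t$ has discrepancy over arcs $x = \max_{e,e'}(\mathcal{W}_t(e)-\mathcal{W}_t(e')) > 4D+1$. Fix arcs $e_{\max},e_{\min}\in\vec{E}$ realizing this maximum, so $\mathcal{W}_t(e_{\max}) - \mathcal{W}_t(e_{\min}) = x$. First I would invoke Lemma~\ref{lem:even_length}: since $G$ is non-bipartite, there is an alternating path $e_1=e_{\min}, e_2,\ldots,e_p=e_{\max}$ of length at most $4D+1$ connecting them. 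Along this path the load changes by at most a bounded amount at each step of the balancing process, but the key point is that the \emph{total} change of $x$ over at most $4D+1$ consecutive arcs of an alternating path forces the existence of an adjacent pair (two arcs sharing a vertex, one incoming-driven and one outgoing) across which a ``large'' fraction of the discrepancy must be transferred in one step.

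The core of the argument is local and quantitative. At a node $v$ with $\deg(v)=d$ outgoing arcs, the round-fairness condition \eqref{eq:roundfair} says that in one step $v$ redistributes $\mathcal{W}_t(v)$ tokens so that outgoing loads differ by at most one; equivalently $\mathcal{W}_{t+1}$ on $\outedg(v)$ is $\{\lfloor \mathcal{W}_t(v)/d\rfloor, \lceil \mathcal{W}_t(v)/d\rceil\}$. I would use Lemma~\ref{lem:sum_of_sq}: among all vectors with the same sum, the balanced one minimizes the sum of squares, and moreover the \emph{gap} between the sum-of-squares of a vector with spread $s$ and that of the balanced vector is at least of order $s^2/d$ (this is the standard ``variance identity'': $\sum (a_i - \bar a)^2 \ge$ (range)$^2/(2d)$ roughly, and more precisely one gets a bound like $\frac{(s-1)^2}{d}$ or similar after accounting for integrality). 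So if at some node along the path the incoming-cumulative loads span a range $s$, then the potential contributed by that node drops by at least $\Omega(s^2/\deg(v))$. Since the discrepancy $x$ must be ``used up'' over at most $4D+1$ arcs of the alternating path, there must be at least one vertex-junction in the path where the spread is at least $\frac{x - (4D+1)}{\text{something linear in } D}$; plugging this into the variance bound and bounding $\deg(v)\le 2m$ crudely (or $\le$ the number of arcs, but the statement only has a $4D$ in the denominator, so the right normalization must come from the path structure, not from $\deg(v)$) yields $\Phi(\mathcal{W}_t) - \Phi(\mathcal{W}_{t+1}) \ge \frac{(x-4D-1)(x-1)}{4D}$.

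Concretely, the steps in order would be: (1) pick $e_{\max}, e_{\min}$ realizing the discrepancy $x$; (2) apply Lemma~\ref{lem:even_length} to get an alternating path of length $\le 4D+1$ between them, with the stated split into $\le 2D$ ending-vertex pairs and $\le 2D+1$ starting-vertex pairs; (3) observe that the load along this path, read at time $t$, drops from $\mathcal{W}_t(e_{\max})$ to $\mathcal{W}_t(e_{\min})$, a total decrease of $x$, so by pigeonhole there is a consecutive triple $e_{j-1}, e_j, e_{j+1}$ (sharing a vertex $v$ in the appropriate way) where the one-step change across the node is at least $x/(4D+1)$, i.e.\ the cumulative loads on the relevant incoming arcs of $v$ already span at least roughly $x - (4D)$; (4) apply the sum-of-squares gap estimate at $v$: the quadratic potential at $v$ strictly exceeds the balanced value by at least $\frac{(x-4D-1)(x-1)}{4D}$ after the integrality-aware computation; (5) sum over all vertices using Lemma~\ref{lem:pot_drop} (all other nodes contribute a non-negative drop), to conclude.

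The main obstacle I anticipate is step (3)–(4): getting the constants right. The naive pigeonhole over a path of length $4D+1$ gives a per-edge drop of $x/(4D+1)$, whose square is $\Theta(x^2/D^2)$ — but the claimed bound is $\Theta(x^2/D)$, a factor $D$ better. Closing this gap requires summing the squared drops over \emph{all} the junctions of the alternating path rather than using a single worst junction: by convexity (Cauchy–Schwarz / power-mean), if the load decreases by a total of $x$ over $\le 4D+1$ steps, then $\sum_j (\text{drop at step } j)^2 \ge \frac{x^2}{4D+1}$, and since each squared drop at a node is (up to integrality corrections) bounded by the potential drop at that node, telescoping this sum over the path and then over the graph gives the $\frac{1}{4D}$ normalization. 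Making this telescoping rigorous — in particular ensuring that consecutive path segments don't double-count potential drops at shared vertices, and handling the $-1$ integrality shift so that the bound reads exactly $\frac{(x-4D-1)(x-1)}{4D}$ rather than with slightly worse constants — is the delicate part; the rest is routine convexity bookkeeping.
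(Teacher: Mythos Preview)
Your overall shape is right and matches the paper: take an alternating path from $e_{\min}$ to $e_{\max}$ via Lemma~\ref{lem:even_length}, sum the local load differences along it, and convert the squared differences into a potential drop via a Cauchy--Schwarz / power-mean inequality over the junctions. You also correctly diagnose that a single-junction pigeonhole only gives $\Theta(x^2/D^2)$ and that one must sum over all junctions.

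The genuine gap is in how you treat the two kinds of junctions. An alternating path alternates between pairs of arcs sharing a \emph{starting} vertex and pairs sharing an \emph{ending} vertex, and these behave completely differently. For a pair $e_j,e_{j+1}$ sharing a starting vertex, round-fairness \eqref{eq:roundfair} already forces $|\mathcal{W}_t(e_j)-\mathcal{W}_t(e_{j+1})|\le 1$; these pairs contribute nothing to the potential drop at time $t\to t+1$ (they are outgoing arcs of the same node and were balanced when produced). Only pairs sharing an \emph{ending} vertex witness an incoming discrepancy $\delta_i$ at some $v_i$, and it is precisely that incoming imbalance which is smoothed in the next step and yields a drop of at least $\tfrac{\delta_i^2-1}{2}$ at $v_i$. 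Your sentence ``each squared drop at a node is (up to integrality corrections) bounded by the potential drop at that node'' is therefore false for roughly half of the junctions.

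The paper exploits exactly this asymmetry: of the $\le 4D+1$ transitions, at most $2D+1$ are starting-vertex pairs, each absorbing at most $1$, so the $I\le 2D$ ending-vertex pairs must carry total discrepancy $\sum_i \delta_i \ge x-2D-1$. Cauchy--Schwarz is then applied only over these $I\le 2D$ terms, giving
\[
\Phi(\mathcal{W}_t)-\Phi(\mathcal{W}_{t+1}) \ \ge\ \sum_{i\le I}\frac{\delta_i^2-1}{2} \ \ge\ \frac{(\sum_i\delta_i)^2}{2I}-\frac{I}{2} \ \ge\ \frac{(x-2D-1)^2}{4D}-D \ =\ \frac{(x-4D-1)(x-1)}{4D}.
\]
If instead you apply Cauchy--Schwarz blindly over all $4D+1$ transitions as you propose, you must afterwards subtract the starting-vertex contributions, which still yields the correct $\Theta(x^2/D)$ order but with a constant roughly a factor $2$ worse than the stated bound; and without noticing the $\le 1$ bound on starting-vertex pairs at all, the argument does not go through. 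Finally, the no-double-counting concern you raise is handled in the paper simply by noting that the alternating path visits each vertex in at most two of its arcs, so the per-vertex drops $\tfrac{\delta_i^2-1}{2}$ at distinct $v_i$ are disjoint contributions.
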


\begin{lemma}
\label{lem:atleastdiscrp}
If $\wload_t$ has discrepancy $x$, then
$\Phi(\wload_t) \le \mathcal{B}(\wload_0) + \frac12 m x^2$.

\end{lemma}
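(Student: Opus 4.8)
The plan is to bound $\Phi(\wload_t)$ directly from the single piece of information we are handed, namely that all $2m$ arc loads lie in a window of width $x$. Put $a \defeq \min_{e\in\vec E}\wload_t(e)$, so that $a \le \wload_t(e)\le a+x$ for every arc $e$. Then $(\wload_t(e)-a)\bigl((a+x)-\wload_t(e)\bigr)\ge 0$, which rearranges to the pointwise quadratic bound $\wload_t(e)^2 \le (2a+x)\,\wload_t(e)-a(a+x)$. First I would sum this over all arcs and use conservation of tokens (so $\sum_{e\in\vec E}\wload_t(e)=\sum_{e\in\vec E}\wload_0(e)=:k$ is the same constant at every step), obtaining $\Phi(\wload_t)\le (2a+x)k-2m\,a(a+x)$.

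The right-hand side is a concave quadratic in the real parameter $a$, so the second step is to maximize it over $a\in\mathbb R$; since the actual value of $a$ can only make the bound tighter, this loses nothing. A one-line differentiation puts the maximizer at $a^\star=\tfrac{k}{2m}-\tfrac{x}{2}$, and substituting back and simplifying gives the value $\tfrac{k^2}{2m}+\tfrac12 m x^2$. Finally, because $G$ is non-bipartite, $\mathcal B(\wload_0)=2m\bigl(\avg_{e\in\vec E}\wload_0(e)\bigr)^2=\tfrac{k^2}{2m}$, and the stated inequality $\Phi(\wload_t)\le \mathcal B(\wload_0)+\tfrac12 m x^2$ follows.

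I do not expect a genuine obstacle here: the argument is elementary once one spots the pointwise bound coming from ``$\wload_t(e)$ lies in an interval of length $x$''. The only points needing a little care are (i) noting that $k$, and hence $\mathcal B$, is time-independent, so comparing against $\mathcal B(\wload_0)$ rather than $\mathcal B(\wload_t)$ costs nothing; and (ii) recognizing that the extremal configuration is the balanced one with $m$ arcs at the minimum load and $m$ arcs at the maximum load — this is exactly what the optimization over $a$ encodes, and is the source of the coefficient $\tfrac12 m$ (rather than the crude $2m$ one would get from $\sum_e \delta_e^2\le 2m x^2$). Note that round-fairness of $\wload_t$ is not even used in this bound; it enters only earlier, to ensure the discrepancy $x$ is well defined and monotone.
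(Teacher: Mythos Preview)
Your argument is correct. The paper's proof reaches the same bound by directly asserting the extremal configuration: among real vectors on a balancing set $\mathcal P$ with prescribed average and discrepancy at most $x$, the sum of squares is maximized by placing half the entries at $\avg + x/2$ and half at $\avg - x/2$, giving $|\mathcal P|\cdot\avg^2 + |\mathcal P|\cdot(x/2)^2$; summing over balancing sets yields $\mathcal B(\wload_0)+\tfrac12 m x^2$. Your route is a bit different in execution: instead of identifying the extremizer, you exploit the pointwise inequality $(\wload_t(e)-a)\bigl((a+x)-\wload_t(e)\bigr)\ge 0$ coming from the range constraint, sum, and then relax $a$ to a free parameter and optimize. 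This is arguably more self-contained, since the paper's ``by maximizing the sum of squares'' step is stated without justification, whereas your derivation makes the bound fully explicit. One small remark: the paper's version is phrased per balancing set and so covers the bipartite case as well; your argument as written is for a single balancing set (the non-bipartite case), which is exactly the standing assumption in the main text, but it would extend verbatim by running the same pointwise bound separately on each $\mathcal P$ with $k$ replaced by $\sum_{e\in\mathcal P}\wload_t(e)$.
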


\begin{theorem}
\label{th:discrepancy}
If $T \ge 16mD \ln k$, then $\wload_T$ has discrepancy over arcs at most $10D$.
\end{theorem}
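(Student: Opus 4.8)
The plan is to track the potential $\Phi(\wload_t)$ and use the per-step drop guaranteed by Lemma~\ref{lem:potdrop} to argue that a large discrepancy cannot persist for too many steps. Concretely, suppose for contradiction that the discrepancy over arcs at some time $T \ge 16mD\ln k$ exceeds $10D$. Since by the Observation the discrepancy is non-increasing in time, the discrepancy was at least $10D$ at every step $t \in \halfrange{0}{T}$. For each such step, $x = $ (discrepancy at time $t$) satisfies $x > 4D+1$ (as $10D > 4D+1$ for $D \ge 1$), so Lemma~\ref{lem:potdrop} applies and gives $\Phi(\wload_t) - \Phi(\wload_{t+1}) \ge \frac{(x-4D-1)(x-1)}{4D} \ge \frac{(10D-4D-1)(10D-1)}{4D} = \Omega(D)$. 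Telescoping over $T$ steps yields $\Phi(\wload_0) - \Phi(\wload_T) \ge T\cdot\Omega(D) = \Omega(mD^2\ln k)$.

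On the other hand, I would bound $\Phi(\wload_0)$ from above. The crude bound $\Phi(\wload_0) \le \mathcal{B}(\wload_0) + \tfrac12 m x_0^2$ from Lemma~\ref{lem:atleastdiscrp} with $x_0$ the initial discrepancy is not directly useful because $x_0$ can be as large as $k$; so instead the key is to exploit the \emph{logarithmic} rate at which round-fair load balancing reduces discrepancy in the early phase. I expect this to be the heart of the argument: one shows that after $\bigo(mD\log k)$ steps the discrepancy has already dropped from its initial value (at most $\sim k$) down to $\bigo(D)$, by a doubling/halving argument — each "epoch" of $\bigo(mD)$ steps at least halves any discrepancy exceeding a $\Theta(D)$ threshold, because within $\bigo(mD)$ steps the round-fair averaging propagates load between any two arcs (using that alternating paths have length $\le 4D+1$ by Lemma~\ref{lem:even_length}), smoothing extremal loads by a constant factor. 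After $\log_2 k$ such epochs, i.e. $\bigo(mD\log k)$ steps, the discrepancy is $\bigo(D)$; tuning the constants gives the clean threshold $10D$ once $T \ge 16mD\ln k$.

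More carefully, I would combine the two estimates quantitatively rather than asymptotically: if at time $T$ the discrepancy is $x_T > 10D$, then since the discrepancy never increased, $\Phi$ dropped by at least $\frac{(x_t-4D-1)(x_t-1)}{4D}$ at each step $t < T$ with $x_t \ge x_T > 10D$, and the total drop is bounded by $\Phi(\wload_0) - \mathcal{B}(\wload_0) \le \frac12 m x_0^2$ with $x_0 \le k$ (or better, $x_0 = \bigo(k)$), giving $T \le \frac{4D\cdot \frac12 m k^2}{(x_T-4D-1)(x_T-1)}$; this alone is too weak (quadratic in $k$, not logarithmic). The logarithmic bound requires the refined epoch argument above: partition $\halfrange{0}{T}$ into blocks, show the discrepancy falls geometrically across blocks, so that the \emph{per-step} drop in the early blocks is huge (order $x_t^2/D$ with $x_t$ still large), making the potential budget $\frac12 m k^2$ exhausted after only $\bigo(mD\log k)$ steps' worth of contribution from blocks where the discrepancy still exceeds $10D$. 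The main obstacle is making this epoch/geometric-decay step rigorous — in particular, proving that within one epoch of $\Theta(mD)$ steps a round-fair process provably contracts the arc-discrepancy by a constant factor whenever it exceeds $\Theta(D)$; this is where Lemma~\ref{lem:even_length} (bounded alternating-path length) and the round-fairness inequality \eqref{eq:roundfair} must be pushed together, likely via a smoothing/averaging inequality along alternating paths, and then the arithmetic is tuned to land exactly on the constants $16$ and $10$.
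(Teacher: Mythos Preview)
Your proposal has the right ingredients but misses the key combination that makes the argument close cleanly. You correctly invoke Lemma~\ref{lem:potdrop} to get a per-step potential drop of order $x_t^2/D$, and you cite Lemma~\ref{lem:atleastdiscrp}, but you only apply the latter at time $0$ and then dismiss it as ``not directly useful because $x_0$ can be as large as $k$.'' The point you are missing is to apply Lemma~\ref{lem:atleastdiscrp} at the \emph{current} time $t$: it gives $\Phi(\wload_t)-\mathcal{B}(\wload_0)\le \tfrac12 m x_t^2$, i.e.\ $x_t^2 \ge \tfrac{2}{m}\bigl(\Phi(\wload_t)-\mathcal{B}(\wload_0)\bigr)$. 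Plugging this lower bound on $x_t^2$ into the drop estimate $\Phi(\wload_t)-\Phi(\wload_{t+1})\ge x_t^2/(16D)$ (valid once $x_t\ge 10D$) yields directly
\[
\Phi(\wload_{t+1})-\mathcal{B}(\wload_0)\ \le\ \bigl(\Phi(\wload_t)-\mathcal{B}(\wload_0)\bigr)\Bigl(1-\tfrac{1}{8mD}\Bigr),
\]
a per-step geometric contraction of the \emph{excess potential}. Since $\Phi(\wload_0)-\mathcal{B}(\wload_0)\le k^2$, after $T\ge 16mD\ln k$ steps the excess is below $1$, forcing discrepancy $\le 1$, a contradiction. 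This is the whole proof.

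Your alternative route---an epoch argument showing the discrepancy itself halves every $\Theta(mD)$ steps via smoothing along alternating paths---is not needed and, as you acknowledge, is the part you cannot make rigorous. The two lemmas you already have do all the work once combined as above; there is no separate ``geometric-decay step'' to prove.
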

\begin{proof}
Observe that for discrepancies $x \ge 10D$ we have, by Lemma~\ref{lem:potdrop}:
$$\Phi(\wload_t) - \Phi(\wload_{t+1}) \ge \frac{(x-4D-1)(x-1)}{4D} \ge \frac{x^2}{16D}.$$
However, by Lemma~\ref{lem:atleastdiscrp}:
$x^2 \ge 2\frac{(\Phi(\wload_t)-\mathcal{B}(\wload_0))}{m}.$
Thus:
$$\Phi(\wload_{t+1})-\mathcal{B}(\wload_0) \le (\Phi(\wload_t)-\mathcal{B}(\wload_0))\left(1-\frac{1}{8mD}\right).$$

Let us assume that after $T \ge 16mD \ln k$ steps the discrepancy is larger than $10D$.
Since
$\Phi(\wload_0)-\mathcal{B}(\wload_0) \le \Phi(\wload_0) \le k^2,$
we have:
$$\Phi(\wload_T)-\mathcal{B}(\wload_0) \le k^2 \cdot \left(1-\frac{1}{8mD}\right)^{16mD\ln k} < k^2 (1/e)^{2 \ln k}  = 1,$$
implying that $\Phi(\wload_T)= \lceil \mathcal{B}(\wload_0) \rceil$, which implies that $\Phi(\wload_T)$ minimizes potential among integer load distribution. Thus $\wload_T$ has discrepancy at most 1, a contradiction.

\end{proof}

We are now ready to prove our main result on the time of stabilization of any rotor-router initial state.

\begin{theorem}
\label{th:stabilizationtime}
For any initial state $\state_0$, there exists $T = O(m^4D^2 + mD\log k)$ such that $\state_T$ is stable.
\end{theorem}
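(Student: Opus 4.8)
The plan is to combine Theorem~\ref{th:discrepancy} (bounding the discrepancy over arcs after $\bigo(mD\log k)$ steps) with Theorem~\ref{th:stable_charact} (the stabilization criterion based on the sum of the first $3m$ potentials), via the observation from Lemma~\ref{lem:potdrop} that once the discrepancy over arcs is small, each potential is small, hence the sum of potentials is small and can only drop a bounded number of times. First I would apply Theorem~\ref{th:discrepancy}, but not just to the load $\load_t$ itself: I would apply it to each cumulative process $\wload_t = \cload_t^{t+i}$ for $i \in \closedrange{1}{3m}$. Since each such cumulative process is round-fair (as noted after the definition of round-fair processes), Theorem~\ref{th:discrepancy} tells us that after $T_0 = 16mD\ln(ki)$ steps — which is $\bigo(mD\log(mk)) = \bigo(mD\log k + mD\log m)$, absorbed into the $\bigo(mD\log k)$ term up to the $m^4 D^2$ slack — the discrepancy over arcs of $\cload_t^{t+i}$ is at most $10D$. (A mild technical point: applying Theorem~\ref{th:discrepancy} to $\cload_t^{t+i}$ requires starting the cumulative process from a shifted time origin; since the discrepancy is non-increasing this only helps, and one takes the maximum $T_0$ over all $i \le 3m$.)

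Next I would bound the potentials at this time. Once $\cload_t^{t+i}$ has discrepancy over arcs at most $10D$, Lemma~\ref{lem:atleastdiscrp} (applied to the round-fair process $\wload = \cload^{\cdot+i}$) gives $\Phi_i(\state_t) = \Phi(\cload_t^{t+i}) \le \mathcal{B}(\cload_0^{i}) + \tfrac12 m (10D)^2 = \bigo(k^2/m + mD^2)$. Hmm — the $k^2/m$ term is too large to give a $k$-free polynomial bound, so instead I would note that $\mathcal{B}$ is itself non-increasing only up to the floor, and more carefully: after $T_0$ steps the discrepancy being $\le 10D$ means $\Phi_i(\state_{T_0}) - \mathcal{B}(\cload_{T_0}^{i}) \le \tfrac12 m(10D)^2$, and the \emph{relevant} quantity for Theorem~\ref{th:stable_charact} is not the absolute value of $\sum_{i=1}^{3m}\Phi_i$ but how much room it has left to decrease. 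The key point is: $\sum_{i=1}^{3m}\bigl(\Phi_i(\state_{T_0}) - \lceil\mathcal{B}(\cload_{T_0}^i)\rceil\bigr) \le 3m \cdot \tfrac12 m (10D)^2 = \bigo(m^2 D^2)$, and each term is a non-negative integer that is non-increasing in time (by Lemma~\ref{lem:pot_drop}, since $\mathcal{B}$ depends only on the total token count, which is invariant). Hence $\sum_{i=1}^{3m}\Phi_i$ can strictly decrease at most $\bigo(m^2 D^2)$ times after time $T_0$.

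Finally I would invoke Theorem~\ref{th:stable_charact}: if $\state_T$ is not stable, then $\sum_{i=1}^{3m}\Phi_i(\state_T) > \sum_{i=1}^{3m}\Phi_i(\state_{T+2m^2})$, i.e.\ the sum of potentials strictly drops over each window of length $2m^2$ during which the system has not yet stabilized. Since it can drop at most $\bigo(m^2 D^2)$ times total after $T_0$, the system must be stable by time $T = T_0 + \bigo(m^2 D^2) \cdot 2m^2 = \bigo(mD\log k) + \bigo(m^4 D^2)$, which is the claimed bound. The main obstacle I expect is the careful bookkeeping in the middle step: making sure the ``potential has bounded room to decrease'' quantity is genuinely $\bigo(m^2D^2)$ and $k$-free — this hinges on subtracting off the correct invariant lower bound $\mathcal{B}$ for each $i$ and on the fact (Lemma~\ref{lem:sum_of_sq}, Lemma~\ref{lem:atleastdiscrp}) that small arc-discrepancy forces each $\Phi_i$ to sit within an additive $\bigo(mD^2)$ of its minimum, together with verifying that the cumulative processes are legitimately round-fair so that Theorem~\ref{th:discrepancy} applies to all of them simultaneously.
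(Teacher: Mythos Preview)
Your proposal is correct and follows essentially the same approach as the paper's proof: apply Theorem~\ref{th:discrepancy} to each cumulative process $\cload_t^{t+i}$ (with $ik$ tokens, whence the $\ln(3km)$ in the paper's $t_0$), use Lemma~\ref{lem:atleastdiscrp} to trap each $\Phi_i$ within an additive $50mD^2$ of the invariant lower bound $\mathcal{B}(\cload_0^i)$, sum over $i\le 3m$ to get a total slack of $150m^2D^2$, and then invoke Theorem~\ref{th:stable_charact} to conclude that the system stabilizes within $2m^2\cdot 150m^2D^2$ further steps. Your mid-argument hesitation about the $k^2/m$ term and its resolution (subtract off $\mathcal{B}$, which is constant since the total token count is invariant) is exactly the point; the paper does the same thing via inequality~\eqref{eq:sum_of_pot}.
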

\begin{proof}
Let $t_0 = \left\lceil16mD \ln(3km)\right\rceil$.
We observe that the cumulative rotor-router process (taken over $\dt$ rounds) is round-fair, with the number of tokens equal to $\dt \cdot k$.
For $t \ge t_0$, $\dt \le 3m$, by Theorem~\ref{th:discrepancy} the token distribution  of $\cload_t^{t+\dt}$ has discrepancy over arcs at most $10D$, thus:
$\mathcal{B}( \cload_0^{\dt} ) \le \Phi_{\dt}(\state_t) = \Phi( \cload_t^{t+\dt} ) \stackrel{\eqref{lem:atleastdiscrp}}{\le} \mathcal{B}( \cload_0^{\dt} ) + 50m D^2.$
We next obtain:
\begin{equation}
\label{eq:sum_of_pot}
 \sum_{i=1}^{3m}\mathcal{B}( \cload_0^{i} )  \le \sum_{i=1}^{3m} \Phi_i(\state_t) \le 150m^2 D^2+\sum_{i=1}^{3m}\mathcal{B}( \cload_0^{i} ).
\end{equation}
Let $T > 300m^4D^2 + \left\lceil16mD \ln(3km)\right\rceil$. Let us assume that $\state_{T}$ is not stable. Thus, for all $t \in \closedrange{t_0}{T}$,
$ \sum_{i=1}^{3m} \Phi_{i}(\state_t) - \sum_{i=1}^{3m} \Phi_{i}(\state_{t+2m^2}) \ge 1,$
and in particular:
$\sum_{i=1}^{3m} \Phi_{i}(\state_{t_0}) - \sum_{i=1}^{3m} \Phi_{i}(\state_{T}) \ge \left\lceil\frac{(T-t_0)}{2m^2} \right\rceil > 150m^2D^2,$
which contradicts with \eqref{eq:sum_of_pot}.

\end{proof}

\subsection{Lower bound}

We now give a lower bound on the stabilization time of parallel rotor-router walks.

\begin{theorem}
\label{lower_bound}
For any $N, M > 0$, $N \leq M \leq N^2$, there exists an initialization of the rotor router system in some graph with $\Theta(N)$ nodes and $\Theta(M)$ edges such that the stabilization time is $\Omega(M^2 \log N)$.
\end{theorem}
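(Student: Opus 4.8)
The plan is to exhibit a graph consisting of a carefully chosen ``slow balancing'' gadget together with a long thin path acting as a delay line, and to track the cumulative rotor-router load $\cload_t^{t+\dt}$ as a round-fair process whose discrepancy over arcs decays at a controlled rate. The key intuition is that Theorem~\ref{th:discrepancy} gives an \emph{upper} bound $O(mD\log k)$ on the time needed for the discrepancy to drop, and that this bound is essentially tight: if we engineer an initial configuration where almost all the ``imbalance'' lives on a single edge far (distance $\Theta(D)$ in the alternating-path metric of Definition~\ref{def:distance}) from the rest of the graph, then each step the potential $\Phi$ can drop by only a $(1 - \Theta(1/(mD)))$ factor, so it takes $\Omega(mD\log(\text{initial discrepancy}))$ steps for the load distribution to reach its final balanced form, and hence for the state to stabilize.

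First I would fix parameters: set $D = \Theta(M/N)$ and arrange $\Theta(N)$ nodes and $\Theta(M)$ edges so that the graph has diameter $\Theta(D)$; the constraint $N \le M \le N^2$ ensures $D$ ranges between constant and $\Theta(N)$, which is exactly the regime we need. Concretely I would take a ``balloon-like'' or path-plus-clique construction: a path $P$ of length $\Theta(D)$ with a dense blob of $\Theta(M)$ edges at one end (to make $m = \Theta(M)$) and a single pendant edge at the far end where we dump $k = \Theta(M)$ tokens initially, with all other nodes lightly loaded. Because $D$ can be taken as large as $\Theta(\sqrt{M})$ when $M \le N^2$, the product $mD$ can be as large as $\Theta(M^{3/2})$, which is more than enough; in fact by balancing we want $mD = \Theta(M^2/\log N)\cdot(\text{something})$ — more precisely we want to choose the construction so that $mD \log k = \Theta(M^2\log N)$, giving the claimed lower bound. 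The natural choice is $D = \Theta(M/N)$, $m = \Theta(M)$, $k = \Theta(\text{poly}(N))$ so that $\log k = \Theta(\log N)$ and $mD\log k = \Theta(M^2\log N / N)$; to reach the full $\Theta(M^2\log N)$ we instead want $D = \Theta(M/\log N)$-type scaling, and I would tune the blob size versus path length accordingly, using the freedom $N \le M \le N^2$ to make $D$ as large as needed up to $\Theta(M/N)\le \Theta(N)$.

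Next I would carry out the quantitative lower bound on stabilization. Using the same potential-drop machinery as in Lemma~\ref{lem:potdrop} and Theorem~\ref{th:discrepancy} but in the reverse direction: the initial potential $\Phi(\wload_0)$ is $\Theta(k^2)$ (all tokens concentrated), while the final stable potential is $\Theta(k^2/m)$, so $\Phi$ must drop by a factor $\Theta(m)$, i.e. $\Theta(\log m)$ ``halvings''. Each step, since the imbalance is confined near a single arc and the graph locally looks like a path of length $\Theta(D)$, the best possible relative drop is $1 - \Theta(1/(mD))$ — this is the crucial inequality I must establish in the \emph{lower} direction, essentially showing that along a path the rotor-router smooths discrepancy no faster than a discretized heat equation on an interval of length $D$, whose mixing time is $\Theta(D^2)$; combined with the $\Theta(m)$ factor needed this yields $\Omega(mD^2)$ or $\Omega(mD\log k)$ depending on the regime, and I would pick the construction that realizes the $\Omega(M^2\log N)$ bound. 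Until $\Phi$ reaches its minimum the state cannot be stable (by Theorem~\ref{th:subcycle}/Theorem~\ref{th:stable_charact}, a stable state has constant — hence minimum-along-the-trajectory — potentials), so the stabilization time is bounded below by this count.

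The main obstacle will be the lower bound on the per-step potential drop: Lemma~\ref{lem:potdrop} is an \emph{upper} bound on how much $\Phi$ falls, and I need a matching \emph{lower} bound showing the rotor-router cannot balance faster than rate $\Theta(1/(mD))$ in my gadget. This requires either (a) a direct amortized argument tracking $\cload_t^{t+\dt}$ on the path and showing the discrepancy between the two ends decays like a random walk / Markov chain on $\{0,1,\dots,D\}$ with spectral gap $\Theta(1/D^2)$, or (b) a coupling/monotonicity argument comparing the rotor-router evolution to continuous diffusion on the path (using that rotor-router load stays within bounded discrepancy of continuous diffusion, as cited in the introduction) and invoking the known $\Theta(D^2)$ relaxation time of the path. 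Route (b) is cleaner but needs the discrepancy-versus-diffusion bound to hold cumulatively and with the right dependence; route (a) is more self-contained but requires genuinely new calculation. I expect the bulk of the work — and the only genuinely non-routine part — to be making one of these two routes precise enough to extract the constant $\Omega(M^2\log N)$ rather than merely $\Omega(M^2)$ or $\Omega(M\log N)$; the graph construction and the ``not stable until minimum potential'' direction are straightforward given the earlier results.
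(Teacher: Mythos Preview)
Your approach has a genuine gap, and it is not merely the technical obstacle you flag at the end; the parameter arithmetic already fails to reach the target bound.

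Even granting both of your unproven ingredients --- that the per-step potential drop is at most a $(1-\Theta(1/(mD)))$ factor, and that the state is unstable while $\Phi$ is above its minimum --- the argument yields at best $\Omega(mD\log k)$ steps (or $\Omega(mD^2)$ in your alternative regime). In a graph with $\Theta(N)$ vertices the diameter satisfies $D = O(N)$, so with $m = \Theta(M)$ and $k$ polynomial in $N$ you get at most $\Omega(MN\log N)$. For $M = \Theta(N^2)$ the claimed bound is $\Omega(M^2\log N) = \Omega(N^4\log N)$, while your estimate gives only $\Omega(N^3\log N)$; the shortfall is a full factor of $N$, and no amount of tuning the blob-versus-path split can recover it, since $D$ is hard-capped by the vertex count. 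Your sentence ``$D = \Theta(M/N)$'' is fine when $M \le N^2$, but then $mD = \Theta(M^2/N)$, not $\Theta(M^2)$, and you yourself compute $mD\log k = \Theta(M^2\log N / N)$ before abandoning the calculation. The ``$mD^2$'' variant does not save you either: it matches the target only near $M \approx N^2$ and overshoots badly for sparse graphs, and in any case you give no argument for why the rotor-router should obey a $D^2$-type relaxation rather than the $D$-type rate in Lemma~\ref{lem:potdrop}.

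The paper takes a completely different, combinatorial route that avoids potentials altogether. It works with only $k=2$ tokens on an $n$-node path. Each token owns a \emph{domain} (the set of nodes it was last to visit); the boundary $b(t)$ between the two domains starts at $\lceil n/3\rceil$ and drifts toward $\lfloor n/2\rfloor$. The point is that when the boundary is at $v$, token~1 revisits it every $\approx 2v$ steps and token~2 every $\approx 2(n-v)$ steps, so the net rightward drift rate is $\Theta\bigl(\tfrac{1}{v}-\tfrac{1}{n-v}\bigr) = \Theta\bigl(\tfrac{n-2v}{n^2}\bigr)$; integrating $\tfrac{n^2}{n-2v}$ from $v=n/3$ to $v=n/2$ produces the harmonic sum that gives the extra $\log n$ factor, for a total of $\Omega(n^2\log n)$. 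Dense graphs are obtained by gluing identical cliques of $\Theta(M)$ edges onto the two endpoints of a $\Theta(N)$-path; each clique is pointer-initialized so that a token entering it performs a full Eulerian tour and returns, acting purely as a $2M'$-step delay. This replaces $v$ and $n-v$ by $M'+v$ and $M'+(n-v)$ in the drift estimate, and the same harmonic sum now yields $\Omega(M^2\log N)$. The argument is elementary, uses no potential machinery, and achieves the bound with two tokens rather than $\Theta(M)$.
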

\begin{proof}
We start the proof by exhibiting for all $n>0$ an initial configuration with $k=2$ tokens on the $n$-node path, for which the stabilization time is $\Omega(n^2\log n)$. This construction then extends in a straightforward manner to denser graphs.

We will number the nodes of the path with consecutive integers $1,2,\ldots, n$, starting from its left endpoint. Initially, we assume that both of the tokens are located at node $\lceil n/3 \rceil$. Initially, the pointers of all nodes in the range $[2, \lceil n/3 \rceil]$ are directed towards the left endpoint of the path, and the pointers of all nodes in the range $[\lceil n/3 \rceil+1, n-1]$ are directed towards the right endpoint; the pointers of the degree-one nodes $1$ and $n$ are fixed in their unique position. We assign the tokens with unique identifiers, $1$ and $2$, defined so that the token labeled $1$ is always located not further from the left endpoint of the path than token $2$. The identity of a token is persistent over time, and the tokens may only swap identifiers when located at the same node of the path or when simultaneously traversing the same edge of the path in opposite directions.

For the considered initial configuration, the movement of the tokens in the first time steps is such that token $1$ is propagated towards the left endpoint of the path (reaching it after $\lceil n/3 \rceil -1$ steps), while token $2$ is propagated towards the right endpoint (reaching it after $t_0 = n-\lceil n/3 \rceil$ steps). After reaching their respective endpoints, the tokens bounce back and begin to move towards the opposite endpoint of the path. All nodes of the path have now been visited by some token. Following the approach introduced in~\cite{DBLP:conf/podc/KlasingKPS13}, for all subsequent moments of time $t$, we consider the partition $V = V_1(t) \cup V_2(t)$ of the set of nodes of the path into so called \emph{domains} $V_1(t)$ and $V_2(t)$, where $V_i(t)$ is the set of nodes of the path whose last visit up to time $t$ inclusive was performed by token $i$, $i=1,2$ (ties on the domain boundary are broken in favor of token $1$). As observed in~\cite{DBLP:conf/podc/KlasingKPS13}, each of the domains $V_i(t)$ is a sub-path of the considered path, containing token $i$ which traverses $V_i(t)$ between its two endpoints, enlarging its domain by one node of the neighboring domain each time it reaches the boundary of the two domains. In our case of $k=2$ tokens, we define the boundary point $b(t)$ so that $V_1(t) = [1,b(t)]$ and $V_2(t) = [b(t)+1, n]$. Initially, at the time $t_0$ when token $2$ reaches endpoint $n$ for the first time, we have $b(t_0) = \lceil n/3 \rceil$. We will consider the interval of time $[t_0, t_1]$, where $t_1$ is the first moment of time such that $b(t_1) = \lfloor n/2 \rfloor - 1$. Throughout the interval $[t_0, t_1]$, we have $|V_2(t)| > |V_1(t)|$. Thus, for any node $v$ such that $v = b(t)$ for some $t\in [t_0, t_1]$, if token $2$ visits $v$ at some time when heading towards the right endpoint of the path, node $v$ will be subsequently visited by token $1$ before token $2$ returns to $v$. It follows that the boundary of the domains can never move to the left by more than one step; formally, for any $t, t' \in [t_0, t_1]$, $t<t'$, we have $b(t') \geq b(t)-1$.

Now, let $\tau(v)$, for a node $v\in [\lceil n/3\rceil,\lfloor n/2 \rfloor]$, denote the first moment of time $\tau\in [t_0, t_1]$ such that the domain boundary has reached node $v$, \ie, $b(\tau) \geq v$. We will show a lower bound on the value of $\tau(\lfloor n/2\rfloor)$. We begin by lower-bounding the value $\Delta \tau_v = \tau(v+3) - \tau(v)$, for $v\in [\lceil n/3\rceil,\lfloor n/2 \rfloor  - 3]$. Clearly, $\tau(v+3) \geq \tau(v)$. During the considered time interval $(\tau(v),\tau(v+3)]$, the current domain boundary must be visited at least $3$ times more
by token $1$ than by token $2$. Since for any $t\in(\tau(v),\tau(v+3)]$, we have $b(t) \geq v-1$, and the distance between node $v-1$ and the left endpoint of the path is $(v-2)$, the total number of visits of token $1$ to the boundary during the considered time interval of length $\Delta \tau_v$ is at most $\lceil \Delta \tau_v / (2(v-2))\rceil$. On the other hand, since we also have $b(t) \leq v+3$ during the considered time interval, and the distance between node $v+3$ and the right endpoint of the path is $n-v-3$, we have that the domain boundary is visited by token $2$ at least $\lfloor \Delta \tau_v / (2(n-v-3))\rfloor$ times during the considered time interval. It follows that:
\begin{equation}
\left\lceil \frac{\Delta \tau_v}{2(v-2)}\right\rceil - \left\lfloor \frac{\Delta \tau_v}{2(n-v-3)}\right\rfloor \geq 3,\label{eq:lbound}
\end{equation}
which directly implies:
\begin{align*}
&\Delta \tau_v \geq \left(\frac{1}{2(v-2)} - \frac{1}{2(n-v-3)}\right)^{-1} > 2 \frac{(v-2)(n-v-3)}{n-2v} >\\&> 2 \frac{(n/3-2)(n/2-3)}{n-2v} = \frac{(n-6)^2}{6}\cdot\frac{1}{n/2 - v}.
\end{align*}
By summing the time increments $\Delta \tau_v$ over the subpath leading from $\lceil n/3\rceil$ to $\lfloor n/2 \rfloor$, with a ``step'' of three nodes, we obtain a lower bound on $\tau(\lfloor n/2\rfloor)$:
\begin{align*}
&\tau(\lfloor n/2\rfloor) \geq \sum_{a = 1}^{n/18 - 1} \Delta\tau_{\lfloor n/2 \rfloor  - 3a} \geq \frac{(n-6)^2}{6} \cdot \sum_{a = 1}^{n/18 - 1} \frac{1}{3a+1} >\\&> \frac{(n-6)^2}{18} \sum_{a = 2}^{n/18} a^{-1} > \frac{(n-6)^2}{18}(\ln n - 5).
\end{align*}
Thus, $\tau(\lfloor n/2\rfloor) = \Omega (n^2 \log n)$. It remains to note that $\tau(\lfloor n/2\rfloor)$ is a lower bound on the stabilization time of the rotor-router. Indeed, for as long as $t < \tau(\lfloor n/2\rfloor )$, the domain boundary satisfies $b(t) < \lfloor n/2\rfloor$, and the left endpoint of the path is visited by token $1$ every not more than $2(\lfloor n/2\rfloor -2) < n-1$ steps. On the other hand, in the limit cycle, each node (except endpoints) is visited by a token precisely twice in total during $2m = 2(n-1)$ steps, a contradiction. Hence, we obtain a lower bound of $\Omega (n^2 \log n)$ on the stabilization time for a configuration of two tokens on the path.

We now extend our construction to obtain graphs with higher edge density admitting a lower bound of $\Omega(m^2 \log n)$ on stabilization time. For given $N$ and $M$, we build such a graph $G$ starting with an $N$-node path $P_N$, with pointers along internal nodes of the path initialized as before. We then attach the path to a pair of identical cliques on $M'=\Theta(M)$ edges by identifying each endpoint of the path with one vertex of the respective clique. The initialization of the pointers within the clique is such that a token entering the clique from the path performs a Eulerian walk of exactly $2M'$ arcs on the clique, returning to the path after traversing each edge of the clique exactly twice and returning all pointers within the clique to their initial position. (Such an initialization of pointers on the clique always exists and corresponds to the stable state pointer arrangement of the single-token rotor-router, for the token located at the distinguished vertex.) In our considerations, the initial location of the tokens is assumed to be on node $\lceil N/3\rceil$ of the path, as before. The structure of domains within the path is defined analogously as before, and we can disregard the actions of tokens 1 and 2 within their respective cliques completely, replacing them for purposes of analysis by a delay of $2M'$ time steps during which the token is frozen at the left or right endpoint of the path, respectively. Adopting the same notation $\tau(v)$, $v\in [\lceil N/3\rceil,\lfloor N/2 \rfloor]$, as for the case of the path, we obtain analogously to~\eqref{eq:lbound}:

$$
\left\lceil \frac{\Delta \tau_v}{2M' + 2(v-2)}\right\rceil - \left\lfloor \frac{\Delta \tau_v}{2M' + 2(N-v-3)}\right\rfloor \geq 3,
$$
and from there (for $N\geq 9$):
$$
\Delta \tau_v \geq \frac{M'^2}{2} \frac{1}{N/2 - v}
$$
and finally:
$$
\tau(\lfloor N/2\rfloor) \geq \frac{M'^2}{6}(\ln N - 5).
$$
Since $\tau(\lfloor N/2\rfloor)$ is once again a lower bound on the stabilization time, and since our graph has $\Theta (N)$ nodes and $\Theta(M) = \Theta(M')$ edges, we have obtained the sought example of a rotor router configuration with a stabilization time of $\Omega(M^2 \log N)$.

\end{proof}

\section{Simulation of the Rotor-Router}

\label{sec:simulation}
In this section, we answer the question of how to efficiently query for the state of a parallel rotor-router system after a given number of steps. The result below is for an arbitrary number of tokens ($k \geq 1$).
For a single token ($k=1$) rotor-router mechanism we provide a faster simulation algorithm in the Appendix.

\begin{theorem}
\label{th:krr_query1}
We can preprocess any $\state_0$, in polynomial time and space (with respect to $n,m, \log k$) so that we can answer queries of  \emph{state} $\state_\tau$ or queries of $\cload_0^{\tau}(e)$ (the total number of visits until time step $\tau$) both in time $\bigo(n+m)$.
\end{theorem}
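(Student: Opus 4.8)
The plan is to combine the stabilization bound from Theorem~\ref{th:stabilizationtime} with the structural characterization of stable states (Theorem~\ref{th:subcycle}, Corollary~\ref{ob:subc}) to reduce the query problem to two regimes: the transient phase $t < \stab$, which we precompute explicitly, and the limit phase $t \ge \stab$, which we answer using the subcycle decomposition. First I would run the rotor-router simulation directly for $t_0 = \Theta(m^4D^2 + mD\log k)$ steps; by Theorem~\ref{th:stabilizationtime} this suffices to reach a stable state $\state_{\stab}$ with $\stab \le t_0$. Each simulated step touches each arc a number of times polynomially bounded (the loads $\load_t(e)$ are at most $k$, but we need not move tokens one at a time — round-robin distribution of $\load_t(v)$ tokens over $\deg(v)$ arcs is computable in $\bigo(\deg(v) + \log k)$ arithmetic on $\bigo(\log k)$-bit integers), so the whole transient phase is simulated in time polynomial in $n,m,\log k$. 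We store the sequence $\state_0,\ldots,\state_{\stab}$ (or rather just enough to reconstruct any of them and the partial sums $\cload_0^{\tau}(e)$ for $\tau \le \stab$) in polynomial space.

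Next I would compute the subcycle decomposition of the stable state. Having reached $\state_{\stab}$, simulate a further $2m^2$ steps and extract a valid mapping $M_v : \inedg(v) \to \outedg(v)$ witnessing the $(2m^2)$-subcycle decomposition guaranteed by Theorem~\ref{th:subcycle}(iii); by Corollary~\ref{ob:subc} this mapping is valid for the $\infty$-subcycle decomposition, hence governs the token trajectories forever. From $M$ we recover the partition $\vec{E} = \vec{E_1} \cup \cdots \cup \vec{E_c}$ into arc-disjoint Eulerian cycles and, for each cycle $\vec{E_j}$ of length $\ell_j = |\vec{E_j}|$, a cyclic listing of its arcs together with, for the relevant time window, the position along the cycle of each token that traverses it. The global period is $\per = \mathrm{lcm}(\ell_1,\ldots,\ell_c)$, which may be superpolynomial, so we must never materialize a full period; instead, for a query at time $\tau \ge \stab$ we work cycle-by-cycle. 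For cycle $\vec{E_j}$, the tokens on it are, after one full traversal, evenly balanced among the $\ell_j$ arcs up to $\pm 1$ (this is exactly the content of the subcycle decomposition together with Observation~\ref{ob:load2} applied cumulatively), and they simply advance one arc per step; thus $\load_\tau(e)$ for $e \in \vec{E_j}$ is determined by the offset $(\tau - \stab) \bmod \ell_j$ applied to the position data we stored at time $\stab$, and likewise the pointer at each vertex is recovered from how many tokens have passed through it, which is $\bigo(1)$ arithmetic on $\bigo(\log\tau)$-bit numbers per arc. Summing over all $c \le m$ cycles and all arcs gives the state $\state_\tau$ in time $\bigo(n+m)$.

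For the cumulative query $\cload_0^{\tau}(e)$ with $\tau \ge \stab$, write $\cload_0^{\tau}(e) = \cload_0^{\stab}(e) + \cload_{\stab}^{\tau}(e)$; the first term is precomputed, and the second is computed from the per-cycle data: if $e \in \vec{E_j}$, then over each full block of $\ell_j$ steps every token on $\vec{E_j}$ crosses $e$ exactly once, so $\cload_{\stab}^{\tau}(e) = \lfloor (\tau-\stab)/\ell_j \rfloor \cdot k_j + r_j(e)$, where $k_j$ is the (fixed) number of tokens circulating on $\vec{E_j}$ and $r_j(e) \in \{0,\ldots,k_j\}$ is a residual count over the last partial block, obtained by counting which tokens' stored offsets fall in the range that crosses $e$ during the remaining $(\tau-\stab)\bmod\ell_j$ steps — again $\bigo(k_j)$ or, with prefix sums over the cyclic arc list prepared at preprocessing time, $\bigo(1)$ arithmetic per arc. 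Over all arcs this is $\bigo(n+m)$. The main obstacle, and the place requiring the most care, is the second paragraph: proving rigorously that the position/offset data recorded at time $\stab$ evolves by the naive "advance by one" rule \emph{for all} future times, i.e., that the finite-window mapping $M$ genuinely controls the dynamics perpetually — this is precisely where Corollary~\ref{ob:subc} is indispensable — and checking that all arithmetic stays polynomial-size in $\log\tau$ and that the residual counts $r_j(e)$ can be read off in $\bigo(1)$ amortized time per arc after appropriate prefix-sum preprocessing, so that the per-query bound is genuinely $\bigo(n+m)$ and not $\bigo(n+m)$ times a hidden $\log$ factor.
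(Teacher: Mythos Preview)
Your proposal is correct and follows essentially the same approach as the paper: simulate explicitly up to the stabilization bound of Theorem~\ref{th:stabilizationtime}, store the transient states and their cumulative loads, then use Corollary~\ref{ob:subc} to extract an $\infty$-subcycle decomposition and answer all later queries by rotating the stored load pattern along each Eulerian subcycle (with prefix sums for the cumulative counts). The paper's proof is terser but structurally identical; your added remarks on performing the per-step round-robin in $\bigo(\deg(v)+\log k)$ arithmetic and on handling the residual block via prefix sums are exactly the implementation details the paper leaves implicit.
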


\begin{proof}
Our first step is to find $T$ such that $\state_T$ is stable. By Theorem~\ref{th:stabilizationtime} it is enough to take any $T > 300m^4D^2 + \left\lceil16mD \ln(3km)\right\rceil$.
We compute and keep states $\state_0,\state_1,\ldots,\state_T$, thus answering any queries of $\state_\tau$ with $\tau < T$ in $\bigo(n+m)$ time.
We store preprocessed $\cload_0^{\tau}(e)$ for any $\tau \in \closedrange{0}{T}$.

By Corollary~\ref{ob:subc}, we can find any valid $\infty$-subcycle decomposition of $\state_T$ in polynomial time.
By the properties of the subcycle decomposition, we can then find the values of $\state_{T+\tau}(e)$ by finding $e'$ being shifted by $\tau$ along the cycle $e$ belongs to. In a similar fashion we find $\cload_{T}^{T+\tau}(e)$ for each arc $e$, giving us $\cload_{T}^{T+\tau}(v)$ for each vertex $v$, thus we know the new pointer location for $v$. Each cycle can be preprocessed with prefix sums such that queries of this type can be answered in $\bigo(1)$ time, thus giving $\bigo(n+m)$ time for full $\state_{T+\tau}$ query.

We can preprocess each cycle with prefix sums, thus giving us the access to $\cload_{T}^{\tau}(e)$ for $\tau \in \halfrange{T}{\infty}$. By adding the value of $\cload_{0}^{T}(e)$ we get  desired $\cload_0^{\tau}(e)$.

\end{proof}

\subsection{Simulation of the single token rotor-router}
\paragraph{Computational model.}
In this part of the paper, we assume the standard Word RAM model, where the space is measured in machine words.
We assume that all numbers appearing in the input (i.e., $n$, $m$, the current time, and so on) fit in one machine word, and all basic
arithmetical operations (including indirect addressing) on such words take $\bigo(1)$ time.
\\
\\
We denote the node where the token starts as $v_0$, the node the token is located \emph{after} $i$ steps as $v_{i}$ and the arc the token traverses during the $i$-th step as $e_i = (v_{i-1},v_{i})$

We partition the sequence of moves into \emph{phases}, where the walk of the $i$-th \emph{phase}, denoted $\walk_i,$ ends with a move finishing the $i$-th \emph{full rotation} of starting node (thus $\walk_i$ being an Eulerian cycle for a subgraph of $\vec{G}$ containing all outgoing arcs of $v_0$).
Then the walk of the token can be denoted as:
$$ \walk_{\infty} = \underbracket{e_1,\ldots,e_{a_1}}_{\walk_1},\underbracket{e_{a_1+1},\ldots,e_{a_2}}_{\walk_2},\underbracket{e_{a_2+1},\ldots,e_{a_3}}_{\walk_3},\ldots $$.

We recall a consequence of main result from \cite{YanovskiWB03}.
\begin{proposition}
\label{proposition:1}
The sequence of arcs in a phase $i-1$ is always a subsequence of sequence from phase $i$. Moreover, if some arc is not visited during phase $i$, the inclusion is strict.
\end{proposition}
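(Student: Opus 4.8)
The plan is to unwind the definition of a phase and reduce the statement to the structural result of Yanovski~\etal~\cite{YanovskiWB03} about the single-token rotor-router, namely that after the stabilization phase the token traverses a fixed Eulerian cycle on all $2m$ arcs, and that before stabilization each ``rotation'' of the start vertex $v_0$ picks up a monotonically growing set of arcs. Concretely, I would first observe that $\walk_i$ is, by construction, a closed walk starting and ending at $v_0$ that completes exactly one full cycle of the port pointer $\rho_{v_0}$; hence $\walk_i$ uses each outgoing arc of $v_0$ exactly once, and the state of the system at the end of phase $i-1$ and at the end of phase $i$ both have the port pointer at $v_0$ in its initial position. The key point is that the walk $\walk_i$ is completely determined by the \emph{state} at the beginning of phase $i$ (i.e.\ the configuration of all port pointers), since the rotor-router is deterministic.

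Next I would invoke the ``arc-monotonicity'' property established in~\cite{YanovskiWB03}: once an arc $e = (u,w)$ has been traversed during some phase, the pointer at $u$ has advanced past $e$, and in every subsequent traversal of $u$ within the same phase or later phases the round-robin rule forces $e$ to be traversed again before $u$'s pointer can return to its phase-start position. More precisely, I would argue by induction on the position within phase $i$ that the multiset of arcs seen so far in phase $i$ always contains, as a sub-multiset respecting order, the arcs seen in phase $i-1$: each time phase $i$ revisits a vertex $u$ that was visited $j$ times in phase $i-1$, it must, because the pointer at $u$ started phase $i$ in the same position it started phase $i-1$ and advances deterministically, reproduce at least the first $j$ out-arc choices of $u$ made in phase $i-1$, in the same order. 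Since $\walk_{i-1}$ is fully contained between two consecutive returns of $v_0$'s pointer to its start position, and $\walk_i$ spans exactly such an interval starting from the \emph{same} local pointer configuration everywhere the walk has already reached, the sequence of phase $i-1$ embeds as a subsequence of phase $i$.

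For the ``moreover'' part, suppose some arc $e=(u,w)$ is not visited during phase $i$. Then the pointer at $u$ never completes a full rotation during phase $i$ (otherwise $e$ would be traversed), so $u$ is visited strictly fewer times in phase $i$ than $\deg(u)$ would require for a return — and by the subsequence embedding just established together with the fact that the total walk length is nondecreasing across phases (the walk of phase $i$ contains that of phase $i-1$), the only way the embedding can fail to be strict is if phases $i-1$ and $i$ are identical, which would make the system already periodic with every arc of each visited vertex fully rotated — contradicting that $e$ is never visited if $u$ is reached at all, and if $u$ is never reached in phase $i$ then it was never reached in phase $i-1$ either, forcing strict inclusion as phase $i$ must also explore at least as much as before. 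I expect the main obstacle to be making the inductive ``pointer-synchronization'' argument fully rigorous, i.e.\ carefully tracking that at every vertex the phase-$i$ walk reproduces the phase-$(i-1)$ out-arc sequence as a prefix before diverging; this is precisely the content abstracted away by citing~\cite{YanovskiWB03}, and the cleanest route is to quote their lemma on the monotone growth of the explored subgraph per rotation of the origin rather than re-deriving it.
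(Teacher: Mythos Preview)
The paper does not actually prove this proposition: it is introduced with ``We recall a consequence of main result from \cite{YanovskiWB03}'' and is treated as a citation, not an original lemma. So at the level of strategy, your final recommendation --- quote the monotone-growth lemma from \cite{YanovskiWB03} rather than re-derive it --- is exactly what the paper does, and is the right call.

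That said, the expanded sketch you give before reaching that conclusion contains a genuine error. You assert that ``the pointer at $u$ started phase $i$ in the same position it started phase $i{-}1$'', and use this to argue that phase $i$ must reproduce the first $j$ out-arc choices of $u$ from phase $i{-}1$. This is false in general: only the pointer at $v_0$ is guaranteed to return to its initial position at a phase boundary (that is the \emph{definition} of a phase). For any other vertex $u$, the pointer at the start of phase $i$ is wherever phase $i{-}1$ left it, namely advanced by the number of visits to $u$ in phase $i{-}1$. In particular, for a vertex $u$ with $\expl(u)=i{-}1$ the pointer will typically \emph{not} have completed a full rotation, so phase $i$ starts with $u$'s pointer at a different port than phase $i{-}1$ did. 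Your inductive ``prefix-reproduction'' argument therefore breaks precisely at the newly explored vertices, which is the interesting case. The actual mechanism in \cite{YanovskiWB03} (summarized here as Proposition~\ref{proposition:3}) is that each phase is Eulerian on its visited subgraph and that a vertex, once visited, has all its incident arcs traversed in every \emph{subsequent} phase; the subsequence property then follows from comparing these Eulerian structures, not from a per-vertex pointer-reset.

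Your argument for the ``moreover'' clause is closer to correct but tangled. The clean version is: if $\walk_{i-1}=\walk_i$ then the states at the start of phases $i{-}1$ and $i$ coincide, so the system is periodic; in the periodic regime every visited vertex has its pointer complete a full rotation per period, hence all its outgoing arcs are traversed, and by connectivity every arc is traversed --- contradicting the hypothesis that some arc is missed in phase $i$.
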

If we denote by $p$ the index of first phase that covers all of the arcs of the graph (thus $\walk_p$ being an Eulerian cycle on all arcs), the Proposition~\ref{proposition:1} can be rephrased as
$ \walk_1 \sqsubset \walk_2 \sqsubset \ldots \sqsubset \walk_p = \walk_{p+1} = \walk_{p+2} = \ldots $.
For any $v$, let $\expl(v)$ be the smallest $i$ such that $v$ is visited in the $i$-th phase.
Analogously the smallest $i$ such that $e \in \walk_i$ is denoted as $\expl(e)$. We also define $\expl(v_0)$ as $0$.
We define \emph{the exploratory stage} as maximal consecutive subsequences of $\walk_i$ containing arcs not being part of $\walk_{i-1}$.
The result of Yanovski \etal\ \cite{YanovskiWB03}, summarized in the next proposition, characterizes the structure of \emph{exploratory stages}.

\begin{proposition}
\label{proposition:3}
The graph induced in each \emph{exploratory stage} is Eulerian. In every phase, each arc is traversed at most once, and furthermore in every phase starting with the $(\expl(v)+1)$-th one, every arc adjacent to $v$ is traversed exactly once. For every two adjacent nodes $v$ and $u$ we have
that $|\expl(v) - \expl(u)| \le 1$, and actually for every node $w$ there exists a sequence of adjacent nodes  $w_0,w_1,w_2,\ldots,w_k$, where $w_0 = v_0$ and $w_k = w$, such that $ \expl(w_{i+1})-\expl(w_i) \in \{0,1\}$.
\end{proposition}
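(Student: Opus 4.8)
The plan is to reduce all four assertions to one structural fact: \emph{once a vertex has been discovered, every later phase performs exactly one full rotor cycle at it}. Precisely, I claim that for every vertex $v$ and every phase index $i>\expl(v)$, the walk of phase $i$ traverses each arc incident to $v$ — all $\deg(v)$ outgoing and all $\deg(v)$ incoming arcs — exactly once. I would prove this by induction on $i$. The seed is $v_0$: by the definition of a phase boundary, the pointer of $v_0$ advances by exactly $\deg(v_0)$ during phase $i$, so the token leaves $v_0$ exactly $\deg(v_0)$ times, and — since a phase begins and ends adjacent to $v_0$, so that the entries to $v_0$ and the departures from $v_0$ balance up to a harmless $\pm1$ boundary correction — it enters $v_0$ exactly $\deg(v_0)$ times. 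To propagate this outward I would use Proposition~\ref{proposition:1}: the arc sequence of $\walk_{i-1}$ is literally obtained from that of $\walk_i$ by deleting the maximal new blocks, \ie the exploratory stages. Hence the ``old'' part of phase $i$ is a verbatim copy of phase $i-1$, to which the inductive hypothesis applies without reproof, and all that remains is to analyse what the walk does inside one exploratory stage.

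For that, I would examine an exploratory stage in isolation. It begins at the instant the token, in the midst of its rotor cycle at some already-discovered vertex $x$, first steps onto an arc absent from $\walk_{i-1}$; the token then moves through previously unvisited territory. Applying the inductive form of the single-cycle property to the vertices of this territory, together with round-robin at $x$, one shows the excursion returns to $x$ precisely when every newly discovered vertex has undergone exactly one full rotor cycle — that is, the excursion is an Eulerian circuit of the subgraph induced by the newly discovered arcs, which are disjoint from the old ones. This gives at once the first sentence of the proposition (each exploratory stage induces an Eulerian graph), the ``at most once per phase'' statement (old arcs at most once by induction, new arcs exactly once inside their own stage and nowhere else in the phase), and ``exactly once from phase $\expl(v)+1$ on'' for arcs incident to any discovered vertex.

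The remaining two assertions follow quickly. For $|\expl(v)-\expl(u)|\le1$ on an edge $\{u,v\}$: during phase $\expl(v)+1$ the arc $(v,u)$ is traversed, so $u$ is visited then and $\expl(u)\le\expl(v)+1$, and symmetrically. For the monotone chain, I would induct on $\expl(w)$: if $\expl(w)=0$ then $w=v_0$ and the chain is empty; otherwise let $(u,w)$ carry the first visit to $w$ during phase $\expl(w)$, so $\expl(u)\in\{\expl(w)-1,\expl(w)\}$ by the bound just proved. In the first case, recurse on $u$ and append the edge $uw$. In the second case, $u$ and $w$ lie on the same exploratory stage; trace backwards along that stage's Eulerian circuit through ``first-visit predecessors'' until reaching its old starting vertex $x$, which satisfies $\expl(x)=\expl(w)-1$; recurse on $x$ and splice in the $x$-to-$w$ portion of the stage (all of whose interior vertices have $\expl=\expl(w)$). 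Termination holds because the outer recursion strictly decreases $\expl$ while the inner trace-back strictly decreases the position within a fixed stage.

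The main obstacle is the structural fact of the first paragraph, and specifically the induction bookkeeping that must simultaneously guarantee that the phase-$i$ walk never re-enters an already-explored arc off schedule and never lingers in new territory performing extra rotor cycles. Proposition~\ref{proposition:1} is the lever that makes this manageable: it certifies that the ``old'' portion of a phase is an unmodified copy of the previous phase, so the inductive hypothesis transfers for free and only the short new excursions need fresh analysis. The one genuinely fiddly point is pinning down the flow-conservation boundary terms — a phase starts and ends at a neighbour of $v_0$ rather than at $v_0$ itself, so the count at $v_0$ carries a one-step correction — but this is routine once the convention is fixed.
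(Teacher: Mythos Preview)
The paper does not prove Proposition~\ref{proposition:3} at all; it is stated as a summary of results from Yanovski~\etal~\cite{YanovskiWB03} and no argument is given. So there is no ``paper's proof'' to compare against, and your proposal must stand on its own.

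Your overall architecture is sensible, but the heart of the argument contains a genuine error. You claim that an exploratory stage ``returns to $x$ precisely when every newly discovered vertex has undergone exactly one full rotor cycle.'' This is false. Take $v_0$ of degree $2$ with neighbours $a$ (a leaf) and $b$, where $b$ has a further neighbour $c$ (a leaf), and let $b$'s pointer initially point back to $v_0$. Phase $1$ is $v_0\to a\to v_0\to b\to v_0$; vertex $b$ is discovered here but makes only one of its two rotor departures. The stage is nevertheless Eulerian on its four arcs. So the Eulerian property of a stage does \emph{not} follow from ``full rotor cycle at each new vertex,'' and your derivation of the key structural fact (``every arc incident to $v$ exactly once from phase $\expl(v)+1$'') via that route does not go through.

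Both conclusions are salvageable with different, shorter arguments. For the Eulerian property: by Proposition~\ref{proposition:1} the old arcs of $\walk_i$ form exactly $\walk_{i-1}$, so each exploratory stage is inserted between two \emph{consecutive} arcs of the walk $\walk_{i-1}$, which necessarily meet at a single vertex $x$; hence every stage is a closed walk from $x$ to $x$, and since no arc repeats in $\walk_i$ (again Proposition~\ref{proposition:1} plus the Eulerian limit), it is an Eulerian circuit on its arc set. For the ``exactly once from phase $\expl(v)+1$'' claim: if $v$ has $k\ge 1$ departures in phase $i$ via consecutive ports $p,\ldots,p+k-1$, then the pointer enters phase $i{+}1$ at port $p+k$; for port $p$ (which must be used in phase $i{+}1$ by Proposition~\ref{proposition:1}) to be reached again the rotor must cycle through all $\deg(v)$ ports, and ``at most once'' caps the count at $\deg(v)$, so every outgoing (hence, by flow balance of the closed walk, every incoming) arc is traversed exactly once. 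Your treatment of $|\expl(u)-\expl(v)|\le 1$ and of the monotone chain is fine once these two corrected lemmas are in place; for the chain it is cleanest to induct on the \emph{time} of first visit rather than on $\expl$, which avoids the case split you introduce.
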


To be able to efficiently predict the position of the token at arbitrary time $T$, we need to efficiently discover the full exploration path
$\walk = \walk_1 \walk_2 \ldots \walk_p$.
However, $\walk$ might be too large to store it directly (it is of size $\bigo(m\cdot D)$), therefore we have to store its \emph{compressed form}:
the sequence of arcs from $\walk_p$ (the Eulerian cycle covering $G$) together with values of $\expl(e)$ for every arc $e$.

\begin{algorithm}
\caption{Simulation}
\label{alg:simulation}
\begin{algorithmic}[1]
\STATE reset rotor-router pointers
\STATE $\expl(e) \gets -1$ for every edge $e$
\STATE $Q \gets [v_{0}]$ \COMMENT{queue with one element}
\STATE $i \gets 0$
\WHILE{ $Q \neq \emptyset$}
	\STATE $i \leftarrow i+1$
	\STATE $R \leftarrow \emptyset$ \COMMENT{empty queue}
	\FORALL{$v \in Q$}
		\STATE $w \leftarrow \text{port}(v)$\COMMENT{do a rotor-router walk starting from $v$ until it reaches an already explored arc}
		\WHILE{$\expl[ (v, w) ] = -1$}
			\STATE $\expl[ (v,w) ] = i$
			\STATE $\text{advance}(v)$ \COMMENT{advance the pointer of the node $v$}
			\STATE  append $w$ to $R$
			\STATE $v \gets w$ \COMMENT{proceed to the next node}
			\STATE $w \gets \text{port}(v)$
		\ENDWHILE
	\ENDFOR
	\STATE $Q \leftarrow R$
\ENDWHILE
\end{algorithmic}
\end{algorithm}

Consider Algorithm~\ref{alg:simulation}. Its correctness follows from the fact that in consecutive iterations of the main loop, we simulate traversals of \emph{exploratory stages} from corresponding $\walk_i \setminus \walk_{i-1}$. In addition to finding the $\expl$ value for every arc, it results in the rotor-router pointers being in the same state as after the token performing a full $\walk$ walk.
We can use this property to find the sequence of arcs from $\walk_{p}$ by simply simulating the rotor-router walk starting from $v_0$ by another $2\cdot m$ steps.

\begin{proposition}
After running Algorithm~\ref{alg:simulation}, we can ask queries about position of token after $T$ steps in time $\bigo(m)$.
\end{proposition}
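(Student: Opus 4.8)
The plan is to observe that, although the full exploration walk $\walk=\walk_1\walk_2\cdots\walk_p$ has length $\bigo(mD)$, it is completely determined by the pair $\bigl(\walk_p,(\expl(e))_{e\in\vec{E}}\bigr)$: the array $\expl$ is produced by Algorithm~\ref{alg:simulation}, and $\walk_p$ --- the Eulerian cycle on all $2m$ arcs --- is obtained, as noted above, by simulating $2m$ further steps from $v_0$ starting from the post-run pointer state. By Propositions~\ref{proposition:1} and~\ref{proposition:3}, for every $i$ the phase $\walk_i$ is a subsequence of $\walk_p$ which traverses each arc at most once, and an arc $e$ occurs in $\walk_i$ exactly when $\expl(e)\le i$ (and then exactly once). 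Hence $\walk_i$ is obtained from $\walk_p$ by deleting precisely the arcs with $\expl(e)>i$ while keeping the left-to-right order; in particular $|\walk_i|$ equals the number of arcs with $\expl(e)\le i$, and the $s$-th arc of $\walk_i$ is the $s$-th arc of $\walk_p$ (scanned left to right) whose $\expl$ value is at most $i$.

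After running Algorithm~\ref{alg:simulation} I would do an $\bigo(m)$ postprocessing pass: recover $\walk_p$ as above, bucket its $2m$ arcs by $\expl$ value, form the cumulative counts $N_{\le i}=\#\{e:\expl(e)\le i\}$ (so that $|\walk_i|=N_{\le i}$, with $N_{\le i}=2m$ for $i\ge p$), and then the prefix sums $a_i=\sum_{j\le i}|\walk_j|=\sum_{j\le i}N_{\le j}$, which is the time step at which the token closes the $i$-th rotation of $v_0$ and is therefore back at $v_0$. By Proposition~\ref{proposition:3} there are only $\bigo(D)=\bigo(m)$ distinct phases, so this table costs $\bigo(m)$ space.

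Given a query $T$, I would branch on whether the token is already locked in. If $T\ge a_{p-1}$, then from step $a_{p-1}$ on the token just repeats the closed walk $\walk_p$, so I set $s=(T-a_{p-1})\bmod 2m$ and return $v_0$ if $s=0$ and the head of the $s$-th arc of $\walk_p$ otherwise. If $T<a_{p-1}$, I find the phase index $i\in\{1,\dots,p-1\}$ with $a_{i-1}\le T<a_i$ (a scan over the $\bigo(D)$ stored prefix sums), put $s=T-a_{i-1}$, and return $v_0$ if $s=0$, otherwise scan $\walk_p$ from the left until $s$ arcs with $\expl$ value $\le i$ have been seen and return the head of that arc. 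Each query runs in $\bigo(m)$ time. The only genuinely non-routine point --- and the step I would check most carefully --- is the claim that each prefix $\walk_i$ is exactly the order-preserving subsequence of $\walk_p$ selected by the predicate $\expl(e)\le i$, with every qualifying arc appearing exactly once; this is precisely what Propositions~\ref{proposition:1} and~\ref{proposition:3} provide, and once it is in hand the arithmetic of the $a_i$'s, the modular reduction in the periodic regime, and the $s=0$ boundary cases are all immediate.
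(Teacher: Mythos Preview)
Your argument is correct. In the paper this proposition is stated without a dedicated proof; it is meant to follow directly from the preceding paragraph, which explains that Algorithm~\ref{alg:simulation} produces the compressed representation $(\walk_p,(\expl(e))_{e\in\vec E})$ of the full walk $\walk$, and the proposition records that this data suffices to locate the token at any time $T$ by a single linear pass. Your write-up makes the procedure explicit: determine the phase index via the prefix sums $a_i=\sum_{j\le i}|\walk_j|$, then recover the $s$-th arc of $\walk_i$ as the $s$-th arc of $\walk_p$ whose $\expl$-value is at most $i$. This is exactly the scheme the paper spells out in the paragraphs that follow, where the same two steps---locate the predecessor of $T$ in $\{0,|\walk_1|,|\walk_1|+|\walk_2|,\ldots\}$, then index into $\walk_i$---are implemented with $y$-fast trees instead of linear scans to obtain the $\bigo(\log\log m)$ bound. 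The structural fact you flag as the only non-routine point, that $\walk_i$ is the order-preserving subsequence of $\walk_p$ selected by the predicate $\expl(e)\le i$ with each qualifying arc appearing exactly once, is precisely what Propositions~\ref{proposition:1} and~\ref{proposition:3} provide, so your justification is on target.
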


To speed up the above method, we denote $\walk_p = e'_1,e'_2, \ldots, e'_{2m}$, and let $e'_{2m+1}=e'_1$.
Then if we look at the sequence
of all values $\expl(e'_1), \expl(e'_2), \ldots, \expl(e'_{2m})$, they have the following property.

\begin{proposition}
\label{prop:small changes}
For all $i=1,\ldots,2m$ we have $|\expl(e'_i)-\expl(e'_{i+1})|\leq 1$.
\end{proposition}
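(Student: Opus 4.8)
The plan is to exploit the structural result of Yanovski \etal\ recalled in Proposition~\ref{proposition:3}, applied locally at the shared endpoint of two consecutive arcs of the Eulerian cycle $\walk_p$. Write $e'_i = (v_{i-1},v_i)$, so that consecutive arcs $e'_i$ and $e'_{i+1} = (v_i,v_{i+1})$ both touch the vertex $v_i$; for the wrap-around index $i=2m$, the arcs $e'_{2m}$ and $e'_{2m+1} = e'_1$ share the start vertex of the closed walk $\walk_p$, so the same vertex-localized argument applies verbatim.

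First I would establish the lower bound $\expl(v) \le \expl(e)$ for every arc $e$ incident to a vertex $v$: if $e$ is traversed during phase $\expl(e)$, then the token occupies $v$ at some moment of that phase (it either enters $v$ along $e$ or leaves $v$ along $e$), hence $v$ is visited no later than phase $\expl(e)$, giving $\expl(v)\le\expl(e)$. (For $v=v_0$ this is immediate since $\expl(v_0)=0$.)

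Next I would read off the matching upper bound directly from Proposition~\ref{proposition:3}: starting with the $(\expl(v)+1)$-th phase, every arc incident to $v$ is traversed (exactly once); in particular $e \in \walk_{\expl(v)+1}$, and therefore $\expl(e) \le \expl(v)+1$. Combining the two bounds, every arc $e$ incident to $v$ satisfies $\expl(e) \in \{\expl(v),\,\expl(v)+1\}$. Applying this with $v=v_i$ to both $e'_i$ and $e'_{i+1}$, both exploration times lie in the same two-element set $\{\expl(v_i),\,\expl(v_i)+1\}$, whence $|\expl(e'_i)-\expl(e'_{i+1})|\le 1$.

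I do not anticipate a genuine obstacle. The only points requiring a little care are the lower bound $\expl(v)\le\expl(e)$ — one must confirm that traversing an arc incident to $v$, whether incoming or outgoing, does force the token to be located at $v$ within that phase — and the handling of the cyclic wrap-around term $e'_{2m+1}=e'_1$, which is precisely where it matters that $\walk_p$ is a \emph{closed} walk (an Eulerian cycle, not merely an Eulerian path).
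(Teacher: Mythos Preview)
Your argument is correct and is precisely the intended one: the paper states Proposition~\ref{prop:small changes} without proof, treating it as an immediate consequence of Proposition~\ref{proposition:3}, and your derivation that $\expl(e)\in\{\expl(v),\expl(v)+1\}$ for every arc $e$ incident to $v$ is exactly the two-line justification the paper omits. The handling of the wrap-around at $v_0$ is also fine, since $\expl(v_0)=0$ forces $\expl(e'_{2m})=\expl(e'_1)=1$.
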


We want to find the arc traversed by the token in the $T$-th step. For this we first check if $T$ exceeds $|\walk|$, and if so retrieve
the $((T-|\walk|)\bmod 2m)$-th element of $\walk_p$. Otherwise we need to locate the appropriate $\walk_i$ and its element. This reduces to finding the predecessor
of $T$ in the set $\{0,|\walk_1|,|\walk_1|+|\walk_2|,\ldots\}$ and then retrieving the $(T-\sum_{j<i}|\walk_j|)$-th element of $\walk_i$. We will show how to perform both operations
in $\bigo(\log\log m)$ time after linear preprocessing.

\begin{lemma}[$y$-fast trees~\cite{Willard}]
\label{lemma:y fast}
Any set $S\subseteq [1,U]$ can be preprocessed in $\bigo(|S|)$ space so that we can locate the predecessor of any $x$ in $S$ using $\bigo(\log\log  U)$ time.
\end{lemma}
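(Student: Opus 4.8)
The final statement to prove is Lemma~\ref{lemma:y fast}, the $y$-fast tree result of Willard. This is a classic data structure, so the plan is to recall its standard two-level construction and justify the claimed bounds.

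\textbf{Plan.} The plan is to build the structure in two layers. First I would partition the universe $[1,U]$ conceptually and group the elements of $S$ into $\bigo(|S|/\log U)$ consecutive buckets, each containing $\Theta(\log U)$ elements, so that the total number of buckets is $\bigo(|S|)$ but small enough that a top structure over bucket representatives fits in $\bigo(|S|)$ space. For each bucket I would store its elements in a balanced binary search tree (or a simple sorted array), which supports predecessor within the bucket in $\bigo(\log\log U)$ time since a bucket has only $\bigo(\log U)$ elements. For the top layer, I would take one representative per bucket (say its minimum) and insert these $\bigo(|S|/\log U)$ representatives into an $x$-fast trie over $[1,U]$: this is a bitwise trie of depth $\log U$ augmented with a hash table at each level storing the prefixes that are present, together with predecessor/successor pointers threaded through the leaves. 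The $x$-fast trie answers predecessor queries in $\bigo(\log\log U)$ time by binary-searching over the $\log U$ levels of the trie (each level-membership test is an $\bigo(1)$ hash lookup), and it uses $\bigo((|S|/\log U)\cdot \log U) = \bigo(|S|)$ space because each of the $\bigo(|S|/\log U)$ representatives contributes $\bigo(\log U)$ trie nodes.

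\textbf{Query and space accounting.} To answer a predecessor query for $x$, I would first query the top $x$-fast trie to find the representative $r$ that is the predecessor (or the one immediately after), locating in $\bigo(\log\log U)$ time the bucket in which $x$ would fall; then I would search that bucket's local tree in $\bigo(\log\log U)$ time, and if $x$ precedes everything in that bucket, follow the bucket's back-pointer to the previous bucket and return its maximum. This gives total query time $\bigo(\log\log U)$. For space, the top structure is $\bigo(|S|)$ as argued, and the buckets together hold all $|S|$ elements with $\bigo(1)$ overhead per element, so the overall space is $\bigo(|S|)$. Preprocessing is $\bigo(|S|)$ assuming $S$ is given sorted (or $\bigo(|S|\log|S|)$ otherwise, which in our application is dominated by the other preprocessing costs); one sorts, chops into buckets, builds each local tree bottom-up, and inserts the $\bigo(|S|/\log U)$ representatives into the $x$-fast trie, each insertion touching $\bigo(\log U)$ levels.

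\textbf{Main obstacle.} The only genuinely delicate point is the space bound on the $x$-fast trie: a naive $x$-fast trie on $t$ elements uses $\bigo(t\log U)$ space, which is why the indirection through buckets of size $\Theta(\log U)$ is essential --- it reduces the number of ``heavy'' elements fed to the trie to $\bigo(|S|/\log U)$ so that the $\log U$ blow-up is absorbed. I would also note that the hashing is assumed to be perfect (static perfect hashing of the present prefixes at each level), which is fine here since $S$ is known in advance and the structure is built once and only queried afterwards; this removes any randomization from the query bound. Everything else is routine, and since this lemma is quoted verbatim from~\cite{Willard}, it suffices to cite it together with this sketch rather than reproving it in full detail.
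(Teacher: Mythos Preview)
Your sketch of Willard's $y$-fast trie construction is correct and standard: bucket $S$ into groups of $\Theta(\log U)$ elements, store representatives in an $x$-fast trie so its $\bigo(t\log U)$ space collapses to $\bigo(|S|)$, and resolve queries by a top-level binary search over trie levels followed by a local search inside one bucket. The space and time accounting you give is accurate, and your remark about static perfect hashing is appropriate in this setting.

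That said, the paper does not prove this lemma at all --- it simply states it with a citation to~\cite{Willard} and moves on. So your proposal is not so much a different route as an unnecessary one: you have supplied a full proof sketch where the paper is content to quote a black-box data-structure result. There is nothing wrong with what you wrote, but for the purposes of matching the paper, a one-line citation would suffice.
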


Using the above lemma, the predecessor of $T$ in $\{0,|\walk_1|,|\walk_1|+|\walk_2|,\ldots\}$ can be found in $\bigo(\log\log m)$ time after linear preprocessing, as $U=\sum_i|\walk_i| = \bigo(mn)$.
Therefore we focus on preprocessing all $\walk_i$ as to allow retrieving the $k$-th element of any $\walk_i$ in $\bigo(\log\log m)$ time. This is not completely trivial,
as we want to keep the preprocessing space linear in $|\walk_p|$, not $\sum_i |\walk_i|$.

\begin{lemma}
\label{lemma:query}
All $\walk_i$ can be preprocessed in $\bigo(m)$ space so that we can retrieve the $k$-th element of any $\walk_i$ in $\bigo(\log\log m)$ time.
\end{lemma}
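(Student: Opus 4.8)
The plan is to reduce the query to a single predecessor search and to use the $1$-Lipschitz structure of the $\expl$-values along $\walk_p$ (Proposition~\ref{prop:small changes}) to keep the total space linear despite $\sum_i|\walk_i|$ being as large as $\bigo(mn)$. First I would record the structural reduction: since $\walk_1 \sqsubset \walk_2 \sqsubset \cdots \sqsubset \walk_p$ and, by Proposition~\ref{proposition:3}, every arc is traversed at most once per phase, the walk $\walk_i$ is exactly the subsequence of $\walk_p = e'_1, e'_2, \ldots, e'_{2m}$ consisting of the arcs $e'_j$ with $\expl(e'_j) \le i$, taken in the order in which they occur in $\walk_p$. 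Hence retrieving the $k$-th element of $\walk_i$ amounts to finding the index $j$ of the $k$-th entry of the array $a[1\,..\,2m]$, $a[j] := \expl(e'_j)$, whose value is at most $i$, and returning $e'_j$ by a table lookup.

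Next I would analyze the complement of $\walk_i$ inside $\walk_p$. For a fixed threshold $i$ the positions $j$ with $a[j] > i$ form maximal contiguous blocks, which I call the \emph{excursions above level $i$}; since $a[1] = \expl(e'_1) = 1$, no excursion wraps around the cyclic boundary, so $a$ may be treated as a linear array. The key counting fact is that $\sum_{i=1}^{p}(\text{number of excursions above level } i) \le 2m$: by Proposition~\ref{prop:small changes} an excursion above level $i$ begins at a position $j$ with $a[j-1] = i$ and $a[j] = i+1$, and a single position $j$ can be such a beginning for only the one value $i = a[j]-1$. A single left-to-right sweep of $a$ maintaining a stack of currently open (nested) excursions, together with a per-level running count of holes, produces in $\bigo(m)$ time and space, for every level $i$, the sorted list of its excursions $[l_1,r_1] < \cdots < [l_{b_i},r_{b_i}]$, each annotated with $s_t$, the number of surviving positions (those with value $\le i$) strictly preceding $l_t$; we also set $s_0 = r_0 = 0$. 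On each level's set $\{s_1,\ldots,s_{b_i}\}$ I would build a $y$-fast tree over the universe $\closedrange{0}{2m}$ (Lemma~\ref{lemma:y fast}); by the counting bound the total size over all levels is $\bigo(m)$.

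To answer a query $(i,k)$ I would use the $y$-fast tree at level $i$ to locate, in $\bigo(\log\log m)$ time, the largest $t \ge 0$ with $s_t < k$. Since the positions $r_t+1, r_t+2, \ldots$ up to the start of the next excursion all survive, the $k$-th surviving position is $j = r_t + (k - s_t)$, and the answer is $e'_j$. The main obstacle is precisely the space bound: storing the $\walk_i$ explicitly is hopeless, and what rescues the construction is the amortized argument that the $1$-Lipschitz property forces the total number of excursion endpoints across all levels to be only $\bigo(m)$ — this is what simultaneously makes the linear-time stack sweep work and lets the whole collection of per-level $y$-fast trees fit in $\bigo(m)$ space while each query touches only one of them. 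The remaining routine points are checking the arithmetic of the query formula (the positions between consecutive holes survive consecutively) and the bookkeeping that computes the annotations $s_t$ correctly during the sweep.
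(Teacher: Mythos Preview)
Your proposal is correct and essentially the same as the paper's proof. The paper decomposes each $\walk_i$ into maximal contiguous fragments of $\walk_p$ and stores their starting positions in $\walk_i$-coordinates in a $y$-fast tree together with pointers back into $\walk_p$; your ``excursions above level $i$'' are precisely the gaps between these fragments, your values $s_t$ are (up to a shift by one) the fragment start positions the paper stores, and the paper's telescoping bound $\sum_i(|\walk_{i+1}|-|\walk_i|)=\bigo(m)$ on the number of fragment boundaries is exactly your count of up-steps in the $1$-Lipschitz sequence $\expl(e'_j)$.
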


\begin{proof}
We decompose every $\walk_i$ into contiguous parts. Every part is a maximal fragment consisting of arcs which are adjacent in the final $\walk_p$. In other word, every arc such that its
predecessor on $\walk_p$ is different than on $\walk_i$ begins a new part. Because of Proposition~\ref{prop:small changes}, the number of such arcs on $\walk_i$ is bounded by $|\walk_{i+1}|-|\walk_i|$.
This is because if we have two arcs $e,e'$ which are neighbors on $\walk_i$ but not $\walk_p$, then there must be at least one arc between them on $\walk_{i+1}$. We create a predecessor
structure (implemented using Lemma~\ref{lemma:y fast}) storing the positions on $\walk_i$ of all such arcs starting a new fragment.
Additionally, the structure keeps for every such arc a pointer to the place where the fragment it begins occurs on $\walk_p$.
Then to retrieve the $k$-th element of $\walk_i$, we use the predecessor structure to find the part the answer belongs to. Then we return the corresponding element of $\walk_p$, which
takes $\bigo(1)$ time if we store $\walk_p$ in an array. Hence the total query time is $\bigo(\log\log m)$. To bound the total space, observe that it is equal to $\bigo(|\walk_p|+\sum_i |\walk_{i+1}|-|\walk_i|)=\bigo(|\walk_p|)=\bigo(m)$.

\end{proof}

Combining Algorithm~\ref{alg:simulation} and Lemma~\ref{lemma:query}, we get following result.
\begin{theorem}
\label{th:queries1}
Graph $G$ can be preprocessed in time $\bigo(n+m)$ such that we can find in time $\bigo(\log\log m)$ the position of the token after given time $T$.
\end{theorem}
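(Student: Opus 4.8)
\textbf{Proof plan for Theorem~\ref{th:queries1}.}
The plan is to assemble the two previously established components---Algorithm~\ref{alg:simulation} and the preprocessing of Lemma~\ref{lemma:query}---and to argue that each runs within the claimed $\bigo(n+m)$ time and space bound, and that a query then costs $\bigo(\log\log m)$. First I would run Algorithm~\ref{alg:simulation} on $G$; by the discussion following the algorithm, this simultaneously computes $\expl(e)$ for every arc $e$ and leaves the rotor-router pointers in exactly the state reached after the token has traversed the full exploration path $\walk = \walk_1\walk_2\cdots\walk_p$. Continuing the walk for another $2m$ steps from $v_0$ then records the sequence $\walk_p = e'_1,\ldots,e'_{2m}$ explicitly in an array. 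The running time of the algorithm is $\bigo(n+m)$: the main loop body charges $\bigo(1)$ per newly explored arc (each arc is assigned its $\expl$ value exactly once), plus $\bigo(1)$ per queue element to detect termination, and there are $\bigo(m)$ arcs and $\bigo(m)$ total queue insertions across all iterations; the extra $2m$-step walk is also $\bigo(m)$. Computing, for each $i$, the prefix length $\sum_{j\le i}|\walk_j|$ and building the $y$-fast tree of Lemma~\ref{lemma:y fast} over the set $\{0,|\walk_1|,|\walk_1|+|\walk_2|,\ldots\}$ costs $\bigo(p)=\bigo(D)=\bigo(m)$ time and space, and similarly the per-$\walk_i$ predecessor structures of Lemma~\ref{lemma:query} take $\bigo(m)$ total space as shown there.

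For a query at time $T$: first test whether $T \ge |\walk|$. If so, the token has locked into the Eulerian cycle $\walk_p$, and the answer is the $\bigl((T-|\walk|)\bmod 2m\bigr)$-th arc of $\walk_p$, retrieved in $\bigo(1)$ from the stored array. Otherwise, use the $y$-fast tree to find in $\bigo(\log\log m)$ time the index $i$ with $\sum_{j<i}|\walk_j| \le T < \sum_{j\le i}|\walk_j|$, and then invoke Lemma~\ref{lemma:query} to retrieve the $\bigl(T-\sum_{j<i}|\walk_j|\bigr)$-th element of $\walk_i$, again in $\bigo(\log\log m)$ time. Since $U = \sum_i |\walk_i| = \bigo(mn)$, the predecessor queries respect the $\bigo(\log\log U) = \bigo(\log\log m)$ bound (using $\log\log(mn) = \Theta(\log\log m)$ for the relevant range). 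Summing, the query time is $\bigo(\log\log m)$ as claimed.

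The main obstacle---already dispatched by Lemma~\ref{lemma:query}---is keeping the preprocessing space linear in $|\walk_p| = \bigo(m)$ rather than in $|\walk| = \bigo(mD)$: we must never store any $\walk_i$ for $i<p$ explicitly. The resolution, which I would simply cite, is that Proposition~\ref{prop:small changes} ($|\expl(e'_i) - \expl(e'_{i+1})| \le 1$) forces each $\walk_i$ to decompose into few maximal fragments that are contiguous in $\walk_p$---at most $|\walk_{i+1}|-|\walk_i|$ of them---so the structure for $\walk_i$ needs only to store fragment boundaries together with pointers into the single array holding $\walk_p$, telescoping to $\bigo(|\walk_p|)$ total. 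A minor point to verify is that the two cases of the query (before versus after lock-in) are consistent at the boundary $T=|\walk|$, which follows since $\walk_p = \walk_{p+1} = \cdots$ by Proposition~\ref{proposition:1}. With these pieces in place the theorem follows by direct composition.
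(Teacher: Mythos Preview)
Your proposal is correct and follows essentially the same approach as the paper: the paper's proof is the single sentence ``Combining Algorithm~\ref{alg:simulation} and Lemma~\ref{lemma:query}, we get following result,'' and the surrounding discussion (the $y$-fast tree for locating the phase, the two-case query, the $\bigo(mn)$ bound on $U$) is exactly what you have spelled out. The only minor imprecision is that computing all $|\walk_j|$ needs a histogram pass over the $\expl$ values and hence costs $\bigo(m+p)$ rather than $\bigo(p)$, but this does not affect the final $\bigo(n+m)$ bound.
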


Another natural problem that we would like to solve is how to preprocess $\walk$ using small space so that we can efficiently answer queries of the form: \emph{How many times did the token traversed the node $v$ in the first $T$ steps of exploration?}

We will use similar properties as the ones presented above.
\begin{proposition}
If $vis_i(v)$ is the number of times $v$ is visited during the $i$-th phase, then
$vis_i(v) = 0$ for all $i < \expl(v)$, $vis_i(v)\in (0,\deg(v)]$ for $i = \expl(v)$, and finally $vis_i(v) = \deg(v)$ for all $i>\expl(v)$.
\end{proposition}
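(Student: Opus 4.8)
The plan is to read the claim off directly from the structural description of phases recalled in Propositions~\ref{proposition:1} and~\ref{proposition:3}, after one preliminary counting remark. Recall that the $i$-th phase $\walk_i$ is, by definition, a closed walk that starts at $v_0$, performs exactly one full rotation of the pointer at $v_0$, and returns to $v_0$; in particular it is an Eulerian cycle on the set of arcs it uses. For any closed walk, the number of traversed arcs entering a vertex $v$ equals the number of traversed arcs leaving $v$, since arrivals and departures at $v$ are in bijection once the walk closes up (this holds at $v_0$ as well). I will interpret $vis_i(v)$ as the number of times the token departs from $v$ during phase $i$ --- equivalently, for $v\neq v_0$, the number of time steps in phase $i$ at which the token is located at $v$, and for $v=v_0$ the number of outgoing arcs of $v_0$ used in $\walk_i$. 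With this reading, $vis_i(v)$ is exactly the number of outgoing arcs of $v$ traversed during $\walk_i$, counted with multiplicity, although Proposition~\ref{proposition:3} will bound that multiplicity by $1$.

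Given this, the three regimes are handled as follows. If $i<\expl(v)$, then by the very definition of $\expl(v)$ the token is never located at $v$ during phases $1,\ldots,\expl(v)-1$, so no arc incident to $v$ is traversed and $vis_i(v)=0$. If $i=\expl(v)$, then the token does reach $v$ during this phase, so at least one outgoing arc of $v$ is used, giving $vis_i(v)\ge 1$; on the other hand Proposition~\ref{proposition:3} states that within any phase each arc is traversed at most once, so at most $|\outedg(v)|=\deg(v)$ outgoing arcs of $v$ are used, giving $vis_i(v)\le\deg(v)$, \ie\ $vis_i(v)\in(0,\deg(v)]$. Finally, if $i>\expl(v)$, then $i$ is at least the $(\expl(v)+1)$-th phase, and Proposition~\ref{proposition:3} asserts that from that phase on every arc adjacent to $v$ --- in particular each of the $\deg(v)$ outgoing arcs of $v$ --- is traversed exactly once; hence $vis_i(v)=\deg(v)$. (Proposition~\ref{proposition:1} is used implicitly here, in that it guarantees the monotone subsequence structure that makes $\expl(v)$ a genuine transition point.)

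For completeness one should note the case $v=v_0$: here $\expl(v_0)=0$ by convention, and since every phase is by definition one complete rotation of $v_0$, all $\deg(v_0)$ outgoing arcs of $v_0$ are used exactly once in each $\walk_i$, so $vis_i(v_0)=\deg(v_0)$ for every $i\ge 1$, consistent with the ``$i>\expl(v)$'' regime. I do not expect a real obstacle: everything substantive is already contained in Propositions~\ref{proposition:1} and~\ref{proposition:3}, and the only point needing a little care is the bookkeeping of the preliminary paragraph --- making the notion of ``a visit'' precise at the endpoints of each closed sub-walk and checking that arrivals and departures at a vertex match within a single phase.
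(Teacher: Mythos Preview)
Your proposal is correct and matches the paper's (implicit) approach: the paper states this proposition without proof, treating it as an immediate consequence of Propositions~\ref{proposition:1} and~\ref{proposition:3}, which is precisely what you spell out. The bookkeeping you add about departures versus arrivals and the special case $v=v_0$ is a reasonable way to make that derivation explicit.
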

\begin{theorem}
Graph $G$ can be preprocessed in time $\bigo(n+m)$ so that given a node $v$ and a number $T$ we can compute in time $\bigo(\log \log m)$ how many times
node $v$ was visited during the first $T$ steps.
\end{theorem}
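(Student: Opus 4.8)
The plan is to answer a visit-count query $(v,T)$ by cutting the first $T$ steps into a run of complete phases followed by one partial phase, evaluating the complete-phase part in closed form via the preceding Proposition on $vis_i(v)$, and reducing the partial part to a constant number of predecessor queries. Note first that the token occupies $v$ after a step precisely when that step traverses an arc whose head is $v$. First I would run Algorithm~\ref{alg:simulation} and the subsequent $2m$-step simulation exactly as for Theorem~\ref{th:queries1}, obtaining in $\bigo(n+m)$ time the values $\expl(e)$ for every arc, $\expl(v)$ for every vertex, the phase lengths $|\walk_1|\le\cdots\le|\walk_p|$ (hence $|\walk|=\sum_{j\le p}|\walk_j|$), and the arc sequence $\walk_p=e'_1,\ldots,e'_{2m}$. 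While Algorithm~\ref{alg:simulation} walks along the exploratory stages I would also count, for each $v$, the number $c_v$ of head-$v$ arcs appearing in the exploratory stages of phase $\expl(v)$ (setting $c_v:=0$ for $v=v_0$), which is exactly $vis_{\expl(v)}(v)$; this costs $\bigo(m)$ extra time, the exploratory stages having total length $2m$. By the preceding Proposition, the number of visits to $v$ during the first $i$ complete phases is then $F(v,i):=0$ for $i<\expl(v)$ and $F(v,i):=c_v+(i-\expl(v))\deg(v)$ for $i\ge\expl(v)$.

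Next I would build the data structures for the single-phase partial count. Scanning $\walk_p$ once, for every vertex $v$ I store in a $y$-fast tree (Lemma~\ref{lemma:y fast}) the sorted list $\Sigma_v$ of the positions in $\walk_p$ of the $\deg(v)$ arcs with head $v$, and separately the sublist $\Sigma_v'\subseteq\Sigma_v$ of those arcs $e$ with $\expl(e)=\expl(v)$; since $\sum_v\deg(v)=2m$, this takes $\bigo(m)$ space and time (the lists come out sorted from the scan). I also keep the fragment decomposition of each $\walk_i$ from Lemma~\ref{lemma:query}, slightly augmented so that, on input $(i,q)$, besides the $q$-th arc of $\walk_i$ it returns that arc's index $r$ in $\walk_p$ --- the decomposition already finds the fragment containing position $q$ and stores a pointer from that fragment into $\walk_p$, so $r$ is available at no extra cost. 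Finally I store $\{0,|\walk_1|,|\walk_1|+|\walk_2|,\ldots,|\walk|\}$ in a $y$-fast tree, so that the phase containing a given step, and the offset inside it, can be located in $\bigo(\log\log m)$ time.

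To answer $(v,T)$: if $T\le|\walk|$, find the phase $i_0$ and offset $q$ with $1\le q\le|\walk_{i_0}|$ and $T=\sum_{j<i_0}|\walk_j|+q$; the answer is $F(v,i_0-1)$ plus the number of head-$v$ arcs among the first $q$ arcs of $\walk_{i_0}$ (which are exactly the steps of phase $i_0$ up to step $T$ at which the token is at $v$). For this latter number I use the augmented fragment structure to obtain the index $r$ in $\walk_p$ of the $q$-th arc of $\walk_{i_0}$; since $\walk_{i_0}$ is a subsequence of $\walk_p$ (Proposition~\ref{proposition:1}), its first $q$ arcs are exactly the arcs of $\walk_{i_0}$ whose position in $\walk_p$ is $\le r$, so the head-$v$ ones are the head-$v$ arcs $e$ with $\expl(e)\le i_0$ lying at $\walk_p$-position $\le r$. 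By Proposition~\ref{proposition:3} every arc incident to $v$ has $\expl(e)\in\{\expl(v),\expl(v)+1\}$, so the condition $\expl(e)\le i_0$ selects all of $\Sigma_v$ when $i_0>\expl(v)$, the sublist $\Sigma_v'$ when $i_0=\expl(v)$, and nothing when $i_0<\expl(v)$; in each case the number of selected positions that are $\le r$ is one predecessor query, costing $\bigo(\log\log m)$. If instead $T>|\walk|$, the token cycles $\walk_p$ forever, so writing $T-|\walk|=2mc+q$ with $0\le q<2m$, the answer is $F(v,p)+c\deg(v)$ plus the number of positions in $\Sigma_v$ that are $\le q$ (no translation is needed, since $\walk_p$ contains every head-$v$ arc). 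Finally one adds $1$ when $v=v_0$, for its initial occupancy; the whole query runs in $\bigo(\log\log m)$ time, and all preprocessing is $\bigo(n+m)$ time and space.

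The main obstacle is the partial count inside an incomplete phase $\walk_{i_0}$ with $i_0<p$: we cannot afford to store, per pair $(i,v)$, the positions of head-$v$ arcs within $\walk_i$, because $\sum_i|\walk_i|=\bigo(mn)$ and such tables would be superlinear. The escape is precisely the subsequence relation $\walk_{i_0}\sqsubset\walk_p$ of Proposition~\ref{proposition:1} together with the fragment decomposition already built for Lemma~\ref{lemma:query}: it lets us translate an offset inside $\walk_{i_0}$ into an offset inside the single, globally stored Eulerian sequence $\walk_p$, reducing the query to a prefix count of head-$v$ arcs in $\walk_p$ restricted to the arcs actually present in $\walk_{i_0}$. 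The remaining difficulty --- that ``present in $\walk_{i_0}$'' depends on $i_0$ --- is defused by the two-valued behaviour of $\expl$ around $\expl(v)$ in Proposition~\ref{proposition:3}, which collapses this restriction to one of just two precomputed per-vertex sets and hence keeps the total space linear.
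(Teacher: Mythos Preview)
Your proposal is correct and follows essentially the same strategy as the paper: split the first $T$ steps into complete phases plus one partial phase, evaluate the complete part in closed form via the Proposition on $vis_i(v)$, and reduce the partial part to a predecessor query over the head-$v$ arcs. The paper likewise stores, per vertex $v$, two predecessor structures --- one for visits to $v$ during phase $\expl(v)$ and one for visits during $\walk_p$ --- and, after locating the arc $e$ and phase $I$ at step $T$ via Theorem~\ref{th:queries1}, queries the appropriate structure.

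The one genuine implementation difference worth noting is how the case $I=\expl(v)$ is handled. The paper stores actual global timestamps of visits to $v$ during phase $\expl(v)$ (its $\mathcal{X}_v$) and compares $T$ against them directly; it asserts these can be built in $\bigo(\deg(v))$ per vertex without spelling out how the timestamps inside $\walk_{\expl(v)}$ are recovered. You instead keep only positions in $\walk_p$ (your $\Sigma_v'$) and use the augmented fragment decomposition of Lemma~\ref{lemma:query} to translate the offset $q$ in $\walk_{i_0}$ into a $\walk_p$-position $r$; the two-valued structure of $\expl$ on head-$v$ arcs from Proposition~\ref{proposition:3} then collapses the ``which arcs are present in $\walk_{i_0}$'' filter to a choice between $\Sigma_v$ and $\Sigma_v'$. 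This is a slightly more explicit route to the same bound, and it makes the $\bigo(n+m)$ preprocessing more evidently linear since $\Sigma_v$ and $\Sigma_v'$ fall out of a single scan of $\walk_p$.
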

\begin{proof}
We start with performing the preprocessing giving us the compressed representation of $\walk$ and $\expl$ value of every edge and every node. Then, for every node $v$ we will create a data structure allowing us to efficiently compute how many times $v$ was visited during the first $T$ steps.
To do this, for every node $v$ we will keep a triple
$(\expl(v), \mathcal{X}_v,  \mathcal{Y}_v)$,
where $\mathcal{X}_v$ is a predecessor structure with the timestamps of all visits to $v$ that happened in the $\walk_{\expl(v)}$-th phase, and $\mathcal{Y}_v$ is a predecessor structure with the timestamps of all visits to $v$ in the $\walk_p$-th phase.
To process the query, we first apply Theorem~\ref{th:queries1} to determine the position of the token after $T$ steps. Say that in the $T$-step the token traverses arc $e$ and is in the $I$-th exploratory phase. If $I < \expl(v)$, then we return $0$. If $I = \expl(v)$, then we compute and return the number of elements of $\mathcal{X}_v$ smaller than $T$, which can be done with a predecessor search.
Finally, if $I > \expl(v)$, we know that the total number of times $v$ was visited in all stages up to the $(I-1)$-th is $|\mathcal{X}_v| + (I-\expl(v)-1)\cdot \deg(v)$.
Hence we only need to count the visits to $v$ in the $I$-th phase. Because in every further phase the order in which the token visits $v$ and traverses $e$ is the same,
this can be done by counting how many times $v$ is visited in the $\walk_p$-th phase before traversing the arc $e$. After storing for every arc the time it is traversed
in the $\walk_p$-th phase, this can be done with a single predecessor search in $\mathcal{Y}_v$.

The size of predecessor structures kept for each node $v$ is $\bigo(\deg(v))$ and time necessary to create them is $\bigo(\deg(v))$, thus giving $\bigo(n+m)$ total preprocessing time and space.

\end{proof}

\section{Conclusion}

The rotor-router process has, in previous work, been identified as an efficient deterministic technique for a number of distributed graph processes, such as graph exploration and load balancing. In these settings, it rivals or outperforms the random walk, in some cases (such as parallel exploration of graphs) providing provable guarantees on performance, the counterparts of which need yet to be shown for the random walk. In this paper, we provide a complete characterization of the long-term behavior of the rotor-router, showing an inherent order in the limit state to which the system rapidly converges. This provides us with a better understanding of, e.g., the long-term load balancing properties of rotor-router-based algorithms, while at the same time opening the area for completely new applications. For instance, in view of our work, the rotor-router becomes a natural candidate for a self-organizing locally coordinated algorithm for the \emph{team patrolling problem} --- a task in which the goal is to periodically and regularly traverse all edges of the graph with $k$ agents. This topic, and related questions, such as bounding the maximum distance between tokens on their respective Eulerian cycles in the limit state of the rotor-router, are deserving of future attention.

\newpage
\pagenumbering{roman}
\bibliography{biblio}
\newpage
\appendix
\begin{center} \large{ \textbf{APPENDIX} } \end{center}

\section{Proofs of Lemmas and Theorems from Section~\ref{sec:lock-in}}



\noindent\textbf{Lemma~\ref{lem:pot_drop}.}
\emph{For arbitrary $i$ and $t$, the $i$-th potential is nonincreasing: $\Phi_{i}(\state_{t+1}) \le \Phi_{i}(\state_{t}).$}
\begin{proof}
Let us assume on the contrary, that there exist other minimal partition $A=\mset{a_1,\ldots,a_d}$.
As the partition into $\lfloor\frac{S}{d}\rfloor$ and $\lceil\frac{S}{d}\rceil$ is the only one with discrepancy at most 1, there exists pair of elements of $A$, that is w.l.o.g. $a_1$ and $a_2$ such that: $a_1 \ge a_2+2$.
Thus:
$$a_1^2 + a_2^2 > a_1^2 + a_2^2 - 2(a_1-a_2-1)  = (a_1-1)^2 + (a_2+1)^2,$$
giving that $\mset{ a_1-1, a_2+1, a_3, \ldots, a_d}$ has smaller sum of squares, which contradicts the assumption.
\end{proof}

\noindent\textbf{Lemma~\ref{lem:long}.}
\emph{The following statements are equivalent:
\begin{enumerate}[(i)]
\item $\state_{T}$ admits a $\dt$-step subcycle decomposition,
\item $\forall_v \forall_{  t,t' : \halfrange{t}{t'} \subseteq \halfrange{T}{T+\dt} } \mset{\cload_{t}^{t'}(e)}_{e\in \inedg(v)} = \mset{\cload_{t+1}^{t'+1}(e)}_{e\in \outedg(v)}$ (multisets of cumulative loads are preserved locally),
\item $\forall_{ t,t' :  \halfrange{t}{t'} \subseteq \halfrange{T}{T+\dt} } \mset{ \cload_{t}^{t'}(e)}_{e \in \vec{E}}  = \mset{ \cload_{t+1}^{t'+1}(e)}_{e \in \vec{E}}$ (multisets of cumulative loads are preserved globally),
\item $\forall_{0 \le i \le \dt} \Phi_i(\state_{T}) = \Phi_i(\state_{T+1}) = \ldots = \Phi_i(\state_{T+\dt-i+1})$ (potential is constant),
\item $\forall_v \forall_{e_1,e_2 \in \inedg(v)} \forall_{ t,t' :  \halfrange{t}{t'} \subseteq \halfrange{T}{T+\dt} }  |\cload_{t}^{t'}(e_1) - \cload_{t}^{t'}(e_2)| \le 1$ (incoming discrepancy is at most one).
\end{enumerate}}

\begin{proof}
(\ref{lem:long:item1}) $\Rightarrow$ (\ref{lem:long:item2})
A straightforward consequence of the $\dt$-step subcycle decomposition is existence of local bijective mapping $M_v : \inedg(v) \to \outedg(v)$ that:
$$\forall_{  t,t' : \halfrange{t}{t'} \subseteq \halfrange{T}{T+\dt} } \forall_{e \in \inedg(v)} \cload_{t}^{t'}(e) = \cload_{t+1}^{t'+1}(M_v(e)).$$

(\ref{lem:long:item2}) $\Rightarrow$ (\ref{lem:long:item3})
Taking union over all vertices, we get:
\begin{align*}\forall_{  t,t' : \halfrange{t}{t'} \subseteq \halfrange{T}{T+\dt} } \mset{\cload_{t}^{t'}(e)}_{e\in \vec{E}} =& \bigcup_{v \in V} \mset{\cload_{t}^{t'}(e)}_{e\in \inedg(v)} = \bigcup_{v \in V} \mset{\cload_{t+1}^{t'+1}(e)}_{e\in \outedg(v)} =\\ =& \mset{\cload_{t+1}^{t'+1}(e)}_{e\in \vec{E}},
\end{align*}
from which the claim follows.

(\ref{lem:long:item3}) $\Rightarrow$ (\ref{lem:long:item4})
Claim follows in a straightforward fashion from the definition of the $\Phi_i$.

(\ref{lem:long:item4}) $\Rightarrow$ (\ref{lem:long:item5})
Assume that, to the contrary, there exists $v'$ and $e_1,e_2 \in \inedg(v')$ such that:
$$ \cload_t^{t'}(e_1) - \cload_t^{t'}(e_2) \ge 2.$$
Let $i = t'-t$ and let us define:
\begin{align*}
c(e_1) &= \cload_t^{t'}(e_1)-1,\\
c(e_2) &= \cload_t^{t'}(e_2)+1,\\
c(e) &= \cload_t^{t'}(e) \text{ for every other } e.
\end{align*}
We observe, that:
$$\forall_v \sum_{e \in \inedg(v)} \cload_t^{t'}(e) = \sum_{e \in \inedg(v)} c(e).$$

Thus by the same reasoning as in proof of Lemma~\ref{lem:pot_drop}:
$$\Phi_i(\state_t) = \sum_v \sum_{e \in \outedg(v)}  \cload_t^{t'}(e)^2 > \sum_v \sum_{e \in \outedg(v)} c(e)^2 = \sum_v \sum_{e \in \inedg(v)} c(e)^2 \ge \Phi_i(\state_{t+1}),$$
which contradicts the assumption.

(\ref{lem:long:item5}) $\Rightarrow$ (\ref{lem:long:item1})
Let us fix $v$ and denote $\inedg(v) = \{e_1,e_2,\ldots,e_d\}$, $\outedg(v) = \{f_1,f_2,\ldots,f_d\}$.
Under the assumption, we deduce the existence of permutation $(\pi_1,\ldots,\pi_d)$ such that:
$$
\forall_{0 \le i \le \dt} \cload_T^{T+i}(e_{(\pi_1)}) \ge \cload_T^{T+i}(e_{(\pi_2)}) \ge \ldots \ge \cload_T^{T+i}(e_{(\pi_d)}),
$$
because otherwise, we would get that there exists $e_1,e_2,i_1<i_2\le\dt$, such that:
$$\cload_T^{T+i_1}(e_1) < \cload_T^{T+i_1}(e_2),\cload_T^{T+i_2}(e_1) > \cload_T^{T+i_2}(e_2).$$
implying that
$$\cload_{T+i_1}^{T+i_2}(e_1) - \cload_{T+i_1}^{T+i_2}(e_2) \ge 2,$$
a contradiction with our initial assumption.

There exists a permutation $(\sigma_1),(\sigma_2),\ldots,(\sigma_d)$ such that $f_{(\sigma_1)},f_{(\sigma_2)},\ldots,f_{(\sigma_d)}$ have the same cyclic ordering as $\rho_v$, and $f_{(\sigma_1)}$ is the arc pointed to by $\pointer_v$ at beginning of time step $T+1$, along which the first token at time $T+1$ is propagated. From the cyclic nature of the rotor-router mechanism, we obtain:
$$
\forall_{0 \le i \le \dt} \cload_{T+1}^{T+i+1}(f_{(\sigma_1)}) \ge \cload_{T+1}^{T+i+1}(f_{(\sigma_2)}) \ge \ldots \ge \cload_{T+1}^{T+i+1}(f_{(\sigma_d)}).
$$

Thus, the sequences $\left(\cload_{T}^{T+i}(e_{(\pi_1)}),\ldots,\cload_{T}^{T+i}(e_{(\pi_d)})\right)$ and  \\$\left(\cload_{T+1}^{T+i+1}(f_{(\sigma_1)}),\ldots,\cload_{T+1}^{T+i+1}(f_{(\sigma_d)})\right)$ are identical since they are sorted sequences of integers, with identical sum and discrepancy not greater than 1 (thus being composed of only two different values).

But $\forall_{0 \le i \le \dt} \cload_{T}^{T+i}(e_{(\pi_j)}) = \cload_{T+1}^{T+i+1}(f_{(\sigma_j)})$ implies $\load_{T}(e_{(\pi_j)}) = \load_{T+1}(f_{(\sigma_j)})$, $\load_{T+1}(e_{(\pi_j)}) = \load_{T+2}(f_{(\sigma_j)})$, $\ldots$, $\load_{T+\dt-1}(e_{(\pi_j)}) = \load_{T+\dt}(f_{(\sigma_j)})$,
meaning that if we pair each of $e_{(\pi_1)},e_{(\pi_2)},\ldots$ with corresponding $f_{(\sigma_1)},f_{(\sigma_2)},\ldots$ we will obtain a subcycle decomposition.

\end{proof}

\noindent\textbf{Theorem~\ref{th:subcycle}.}
\emph{The following conditions are equivalent:
\begin{enumerate}[(i)]
\item $\state_T$ is stable,
\item $\state_T$ admits a $\infty$-subcycle decomposition,
\item $\state_T$ admits a $(2m^2)$-subcycle decomposition.
\end{enumerate}
}
\begin{proof}
$(\ref{subcycle2}) \Rightarrow (\ref{subcycle1})$
If all tokens traversing arcs follow Eulerian cycles as their routes, then we denote $t_r = \text{lcm}(|\vec{E_1}|,|\vec{E_2}|,\cdots,|\vec{E_p}|)$.

Since $\forall_{e \in \vec{E}}\load_T(e) = \load_{T+t_r}(e)$ by assumption, we easily observe that $\forall_{v \in V}\load_T(v) = \load_{T+t_r}(v)$.

To prove that the states are equivalent, we need to show that the rotor-router pointers for each vertex are in the same position at time steps $T$ and $T+t_r$. This is equivalent to:
$\forall_{e_1,e_2 \in \outedg(v)} \cload_T^{T+t_r}(e_1) = \cload_T^{T+t_r}(e_2).$

However, if we assume otherwise, then there exist two arcs $e_1,e_2$ sharing starting points, such that: $|\cload_T^{T+t_r}(e_1) - \cload_T^{T+t_r}(e_2)| \ge 1$. But that would imply that by taking the cyclic sum twice, $|\cload_T^{T+2\cdot t_r}(e_1) - \cload_T^{T+2\cdot t_r}(e_2)| \ge 2$ which contradicts the rotor-router property.

$(\ref{subcycle1}) \Rightarrow (\ref{subcycle3})$ Let us take $T'$ such that $\state_{T'} = \state_T$ and $T' \ge T+2m^2$.
From Lemma~\ref{lem:pot_drop} we deduce:
$
\forall_{i\ge 0} \Phi_{i}(\state_{T}) = \Phi_{i}(\state_{T+1}) = \Phi_{i}(\state_{T+2}) = \ldots = \Phi_{i}(\state_{T'}).
$
By using equivalence of clauses \eqref{lem:long:item4} and \eqref{lem:long:item1} in Lemma~\ref{lem:long} , we get the $2m^2$-subcycle decomposition for $\state_T$.

$(\ref{subcycle3}) \Rightarrow (\ref{subcycle2})$ Assume that $\state_T$ which has a $(2m^2)$-subcycle decomposition does not admit a $\infty$-subcycle decomposition. We denote by $\tau$ the largest integer such that $\state_T$ admits a $\tau$-subcycle decomposition ($\tau \ge 2m^2$).
By the maximality of $\tau$, taking into account the equivalence of clauses \eqref{lem:long:item1} and \eqref{lem:long:item5} in Lemma~\ref{lem:long}, there exists a vertex $v$ and two incoming arcs $e_1,e_2 \in \inedg(v)$ such that:
\begin{equation}
\label{eq:bad_discrp}
\exists_{T' \in \closedrange{T}{T+\tau}}  |\cload_{T'}^{T+\tau+1}(e_1) - \cload_{T'}^{T+\tau+1}(e_2)| \ge 2.
\end{equation}

Let $\vec{E}_1,\vec{E}_2$ be the Eulerian cycles from the $\tau$-subcycle decomposition (starting at state $\state_T$), such that $e_1 \in \vec{E}_1, e_2 \in \vec{E}_2$.

Let us denote $\ell = lcm(|\vec{E}_1|,|\vec{E}_2|)$.
Observe that $\ell \le m^2$, since we either have $\vec{E}_1 \cap \vec{E}_2 = \emptyset$, and then $|\vec{E}_1|+|\vec{E}_2| \le 2m$ implies that $\ell \le |\vec{E}_1|\cdot|\vec{E}_2| \le m^2$, or $\vec{E}_1 = \vec{E}_2$, and then $\ell = |\vec{E}_1| \le 2m \le m^2$.\footnote{$2m \le m^2$ requires $m\ge 2$, however in graphs with $m=1$ \emph{every} state is stable.}

Let us observe, that since the tokens are traversing the arcs over the Eulerian subcycles, we have, for any $\halfrange{t}{t+\ell} \subseteq \closedrange{T}{T+\tau}$:
\begin{equation}
\label{eq:cycle1}
\cload_{t}^{t+\ell}(e_1) = \frac{\ell}{|\vec{E}_1|} \sum_{e \in \vec{E}_1} \load_T(e),
\end{equation}
\begin{equation}
\label{eq:cycle2}
\cload_{t}^{t+\ell}(e_2) = \frac{\ell}{|\vec{E}_2|} \sum_{e \in \vec{E}_2} \load_T(e).
\end{equation}

Now we proceed to show that the left-hand sides of the above expressions have to be equal.
Indeed, observe that $2\ell\leq 2m^2$ and that condition \eqref{lem:long:item5} from Lemma~\ref{lem:long} also holds for range step $\halfrange{T}{T+2\ell}$, thus we have:
$|\cload_T^{T+2\ell}(e_1) - \cload_T^{T+2\ell}(e_2)| \leq 1,$
and moreover, we can write:
$$|\cload_T^{T+2\ell}(e_1) - \cload_T^{T+2\ell}(e_2)| = \left|\frac{2\ell}{|\vec{E}_1|}\cdot\sum_{e \in \vec{E}_1} \load_t(e) - \frac{2\ell}{|\vec{E}_2|}\cdot\sum_{e \in \vec{E}_2} \load_t(e)\right| = 2 \cdot |\cload_T^{T+\ell}(e_1) - \cload_T^{T+\ell}(e_2)|.$$
It follows that $\cload_T^{T+\ell}(e_1) = \cload_T^{T+\ell}(e_2)$, and taking into account \eqref{eq:cycle1} and \eqref{eq:cycle2} we get that for any $\halfrange{t}{t+\ell} \subseteq \halfrange{T}{T+\tau+1}$:
\begin{equation}\label{eq:shifting}\cload_{t}^{t+\ell}(e_1) = \cload_{t}^{t+\ell}(e_2).\end{equation}

Thus, the average number of tokens observed over any multiplicity of the cycle length has to be identical for both arcs $e_1,e_2$. We can use this fact to put a bound on discrepancy of number of tokens between both arcs, by analyzing last $\ell$ steps of the time segment separately.
Indeed, by invoking \eqref{eq:shifting} we get (with $T'$ chosen as in \eqref{eq:bad_discrp}),
if $T' \le T+\tau+1-\ell$ then:
$ |\cload_{T'}^{T+\tau+1}(e_1) - \cload_{T'}^{T+\tau+1}(e_2)|
 \stackrel{\eqref{eq:shifting}}{=} |\cload_{T'}^{T+\tau+1-\ell}(e_1)   - \cload_{T'}^{T+\tau+1-\ell}(e_2) |\le 1,$
and otherwise
$ |\cload_{T'}^{T+\tau+1}(e_1) - \cload_{T'}^{T+\tau+1}(e_2)|
 \stackrel{\eqref{eq:shifting}}{=} |\cload_{T'-\ell}^{T+\tau+1-\ell}(e_1)   - \cload_{T'-\ell}^{T+\tau+1-\ell}(e_2) |\le 1,$
and we see that $\tau$-subcycle decomposition contradicts with \eqref{eq:bad_discrp}.

\end{proof}

\noindent\textbf{Theorem~\ref{th:stable_charact}.}
\emph{State $\state_T$ is stable iff:
$$\sum_{i=1}^{3m} \Phi_{i}(\state_T) = \sum_{i=1}^{3m} \Phi_{i}(\state_{T+2m^2}).$$}

\begin{proof}
$\Rightarrow \text{ (only if) }$
This is a straightforward result of characterizations from Theorem~\ref{th:subcycle} and preservation of potentials (Lemma~\ref{lem:long}, (\ref{lem:long:item1}) $\Rightarrow$ (\ref{lem:long:item4})).

$\Leftarrow \text{ (if) }$
By Theorem~\ref{th:subcycle} it is enough to prove that $\state_T$ admits a $2m^2$-subcycle decomposition. \footnote{Once again we can eliminate case of $m=1$ and assume that $2m^2 \ge 3m$.}
Let us assume otherwise, that $\state_T$ doesn't admit $2m^2$-subcycle decomposition, and let $\tau$ be the largest integer such that there $\state_T$ admits $\tau$-subcycle decomposition ($\tau < 2m^2$).
Similarly to the proof of Theorem~\ref{th:subcycle}, there exists a vertex $v$ and two incoming arcs $e_1,e_2 \in \inedg(v)$, such that:
\begin{equation}
\label{eq:bad_discrp2}
\exists_{t\in \closedrange{T}{T+\tau}}  \left|\cload_{t}^{T+\tau+1}(e_1) - \cload_{t}^{T+\tau+1}(e_2)\right| \ge 2.
\end{equation}

Let us choose a local mapping $M$ valid with respect to the $\tau$-subcycle decomposition (starting at state $\state_T$), and let
  $\vec{E}_1,\vec{E}_2$ be the Eulerian cycles induced by $M$,  such that $e_1 \in \vec{E}_1, e_2 \in \vec{E}_2$.
We will look into the structure of sequences
\begin{align*}
(a_0,a_1,\ldots) &= (\load_T(e_1),\load_{T+1}(e_1),\ldots),\\
(b_0,b_1,\ldots) &= (\load_T(e_2),\load_{T+1}(e_2),\ldots),\\
(c_0,c_1,\ldots) &= (a_0-b_0,a_1-b_1,\ldots).
\end{align*}

Since each separate potential is nonincreasing, we deduce that $\forall_{0 \le i \le 3m} \Phi_i(\state_T) = \Phi_i(\state_{T+1}) = \ldots = \Phi_i(\state_{T+2m^2})$. Thus, all of $\state_T,\state_{T+1},\ldots,\state_{T+2m^2-3m}$ admit a $3m$-subcycle decomposition. From this we deduce, that:
\begin{equation}
\label{eq:sequence_property}
\forall_{\closedrange{i}{j} \subseteq \halfrange{0}{2m^2}} \text{ if } (j-i<3m) \text{ then } |c_i +c_{i+1}+\ldots+c_{j}| \le 1,
\end{equation}
in particular, $c_i \in \{-1,0,1\}$ for $0 \le i < 2m^2$.

From the $\tau$-subcycle decomposition we get that:
$$\forall_{\closedrange{i}{j} \subseteq \halfrange{0}{\tau}} |c_{i}+c_{i+1}+\ldots+c_{j}| \le 1$$

and rewriting \eqref{eq:bad_discrp2}:
\begin{equation}
\label{eq:bad_discrp3}
\exists_{\eta \in \closedrange{0}{\tau}} |c_\eta+ c_{\eta+1} + \ldots + c_{\tau}| \ge 2.
\end{equation}


Before we proceed further with the proof, we need following structural lemma on the words $a,b,c$:
\begin{lemma}
\label{lem:subwords}
Let $(a_i), (b_i)$ and $(c_i)$ be defined as in Theorem~\ref{th:stable_charact}.
If $c_i = 0$ holds for $2m$ consecutive values of $i$ from $0,\ldots,\tau$, then it holds for all of them.
\end{lemma}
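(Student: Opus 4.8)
The plan is to recognize Lemma~\ref{lem:subwords} as an instance of the Fine--Wilf periodicity lemma, applied to the two periodic words that the $\tau$-subcycle decomposition forces on $(a_i)$ and $(b_i)$. Write $p_1=|\vec{E}_1|$ and $p_2=|\vec{E}_2|$, where $M$ is the valid mapping chosen in the proof of Theorem~\ref{th:stable_charact}. Following the Eulerian cycle $\vec{E}_j$ arc by arc through $M$ gives $\load_{T+i}(e_j)=\load_{T+i+1}(M(e_j))=\cdots=\load_{T+i+p_j}(e_j)$, valid as soon as all of the time steps $T+i,\ldots,T+i+p_j-1$ lie in $\halfrange{T}{T+\dt}$ with $\dt=\tau$; hence $a_i=a_{i+p_1}$ for every $i$ with $i+p_1\le\tau$, and likewise $b_i=b_{i+p_2}$ for every $i$ with $i+p_2\le\tau$. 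I would also record two facts about the periods: first, $p_1,p_2\le 2m$; second, as already observed in the proof of Theorem~\ref{th:subcycle}, either $\vec{E}_1=\vec{E}_2$ (so $p_1=p_2$) or $\vec{E}_1\cap\vec{E}_2=\emptyset$ (so $p_1+p_2\le 2m$). In both cases $p_1+p_2-\gcd(p_1,p_2)\le 2m$.

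Next I would use the hypothesis. Suppose $c_i=0$ for all $i$ in a block $\halfrange{s}{s+2m}\subseteq\closedrange{0}{\tau}$, and set $w_i:=a_i=b_i$ for $i$ in this block. Using the periodicities above, the finite word $w=(w_s,\ldots,w_{s+2m-1})$ has period $p_1$ and period $p_2$: for $i\in\halfrange{s}{s+2m-p_1}$ we have $w_i=a_i=a_{i+p_1}=w_{i+p_1}$, where the middle equality is legitimate because $i+p_1\le s+2m-1\le\tau$, and the last one because $i+p_1$ still lies in the block; symmetrically for $p_2$. Since $w$ has length $2m\ge p_1+p_2-\gcd(p_1,p_2)$, the Fine--Wilf theorem shows that $w$ also has period $g:=\gcd(p_1,p_2)$.

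Finally I would propagate this refined period back to all of $\closedrange{0}{\tau}$. Fix any $i\in\closedrange{0}{\tau}$. As the block has length $2m\ge p_1$, it contains an index $i_1$ with $i_1\equiv i\pmod{p_1}$; chaining the relation $a_j=a_{j+p_1}$ in steps of $p_1$ between $i$ and $i_1$ — all intermediate indices remain in $\closedrange{0}{\tau}$ — yields $a_i=a_{i_1}=w_{i_1}$. Symmetrically there is $i_2$ in the block with $i_2\equiv i\pmod{p_2}$ and $b_i=w_{i_2}$. Since $g$ divides both $p_1$ and $p_2$, we get $i_1\equiv i\equiv i_2\pmod{g}$, so the $g$-periodicity of $w$ gives $w_{i_1}=w_{i_2}$, whence $c_i=a_i-b_i=0$. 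As $i$ was arbitrary, $c_i=0$ on all of $\closedrange{0}{\tau}$.

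The only genuinely nontrivial ingredient is the Fine--Wilf periodicity lemma (which, in the form needed here, follows from a short Euclidean-style induction, using the bound $p_1+p_2-\gcd(p_1,p_2)\le 2m$); everything else is bookkeeping to apply the periodicities of $(a_i)$ and $(b_i)$ only on the index range $\closedrange{0}{\tau}$ where the $\tau$-subcycle decomposition guarantees them. I expect the index-range care to be the most error-prone part, though not conceptually hard. Notably, this argument does not use the bounded-partial-sum relations \eqref{eq:sequence_property}--\eqref{eq:bad_discrp3} nor the fact that $c_i\in\{-1,0,1\}$: the statement is purely about two periodic words that coincide on a window long enough to pin down their common refined period.
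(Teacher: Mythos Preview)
Your proposal is correct and follows essentially the same approach as the paper: recognize that the $\tau$-subcycle decomposition makes $(a_i)$ and $(b_i)$ periodic on $\closedrange{0}{\tau}$ with periods $p_1=|\vec{E}_1|$ and $p_2=|\vec{E}_2|$, then invoke Fine--Wilf using $p_1+p_2-\gcd(p_1,p_2)\le 2m$. The paper's proof is terser, splitting into the two cases $\vec{E}_1=\vec{E}_2$ (where $p_1=p_2\le 2m$, so agreement on a full period suffices) and $\vec{E}_1\cap\vec{E}_2=\emptyset$ (where $p_1+p_2\le 2m$ and Fine--Wilf is invoked explicitly); your unified treatment and the explicit propagation step back to all of $\closedrange{0}{\tau}$ are more careful about index ranges than the paper, but the argument is the same.
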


\begin{proof}
$(a_i)_{i=0}^{i\le \tau}$ and $(b_i)_{i=0}^{i\le \tau}$ ($a$ and $b$ for short) are periodic sequences with short periods (2m at most).
Let $p_1=|\vec{E}_1|,p_2=|\vec{E}_2|$ be the periods lengths of those sequences.
We want to prove that if they are identical on corresponding fragments, they are equal everywhere.
We look into two cases:
\begin{itemize}
\item If $\vec{E}_1 = \vec{E}_2$, then $|p_1| = |p_2| \le 2m$. By $a$ and $b$ being equal on fragment of length $2m$, we get $p_1 = p_2$, giving $a=b$.
\item If $\vec{E}_1 \not= \vec{E}_2$, then we get $|p_1|+|p_2| \le 2m$. By application of Fine~and~Wilf’s theorem (if two periodic words with period lengths $x$ and $y$ are equal on fragment of length $x+y-\mathrm{gcd}(x,y)$, they are identical, see~\cite{FineWilf}), we get that $a=b$.
\end{itemize}

\end{proof}

By Lemma~\ref{lem:subwords} we get, that in any sequence that satisfies \eqref{eq:bad_discrp3} there cannot be a subsequence of $2m$ consecutive values of 0. However, from \eqref{eq:sequence_property}, we deduce that any two occurrences of $1$ (respectively $-1$) at distance less than $3m$ have to be separated by at least one occurrence of $-1$ (respectively $1$). Thus any occurrences of $1$ and $-1$ have to happen in an alternating fashion (separated by arbitrary number of zeroes), leading to contradiction with \eqref{eq:bad_discrp3}.

\end{proof}

\newpage
\section{Proofs of Lemmas and Theorems from Section~\ref{sec:stab}}
\label{appendix:sec4}
In a bipartite graph $G$ the two groups of tokens, the one traversing at one particular moment from one side to another and the one traversing in the opposite direction, will never met and exchange tokens. Thus it is possible to have arbitrarily large difference in size of groups traversing arcs at arbitrarily late timesteps. Thus we need to refine our definitions and theorems used in the load balancing analysis.

\paragraph{Balancing sets.} Now we introduce the partition of arcs into \emph{balancing sets}. Arcs belonging to the same set will balance the load over time, while no load balancing happens between different sets. We will formalize those notions in the following.

\noindent\textbf{Definition~\ref{def:distance}.}
\emph{We define the notion of \emph{distance} between two arcs $e,e'$, denoted by $d(e,e')$, as a length of shortest alternating path having $e$ and $e'$ as first and last arcs. If there is no such path, we say that $d(e,e') = \infty$.
We say two arcs $e,e'$  belong to the same \emph{balancing set} $\mathcal{P}$ iff  $d(e,e') < \infty$.}

\noindent\textbf{Lemma~\ref{lem:even_length}.}
\emph{The set of arcs of any bipartite graph is partitioned into exactly two balancing sets, while for every other graph the partition results in one balancing set. Moreover:
$$\forall_{e,e' \in \mathcal{P}} \ d(e,e') \le 4D+1$$
and for any such $e,e'$ there exists an alternating path connecting $e$ and $e'$ containing at most $2D$ pairs of arcs sharing ending vertices and at most $2D+1$ pairs of arcs sharing starting vertices.}

\begin{proof}
Since every arc in $G$ has opposite arc, any simple path in $G$ relates to an alternating path. Thus, observe that two arcs are in the same balancing set iff there exists path of even length (in terms of edges) connecting the respective starting vertices $v$ of arc $e$ and $v'$ of arc $e'$. Moreover, the length of such path increased by one gives an upper bound on $d(e,e')$.
Since in a bipartite graph any simple path connecting vertices from the same side is of even length, this instantly gives us bound $d(e,e') \le D+1$ for bipartite graphs when $e$ and $e'$ go in the same direction, and $d(e,e')=\infty$ when they go in opposite directions.

To solve the non-bipartite case, let $C$ be the shortest simple cycle of odd length in $G$ (it always exists in non-bipartite graphs). It is easy to show that $|C| \le 2D+1$. Let $u$ be the closest vertex on $C$ to $v$, and $u'$ to $v'$. Notice that $u$ and $u'$ can be connected by paths of arbitrary parity by using one of two parts of $C$ as a connection. Thus, there exists a path connecting $v$ to $v'$ going through vertices $v,u,u',v'$ (in this order) of even length at most $4D+1$. Thus the length of the path is at most $4D$. \begin{itemize} \item If the path doesn't contain both $e$ and $e'$, we get desired bound $d(e,e') \le 4D+1$ with number of pairs sharing starting vertices bounded by $2D+1$ and pairs sharing ending vertices bounded by $2D$. \item If the path contains at least one of $e$ or $e'$, then  $d(e,e') \le 4D$ giving also bounds on number of pairs of both types being at most $2D$.
\end{itemize}

\end{proof}

We will denote the value of the potential achieved by distribution of tokens that preserves sums of loads over load balancing sets of arcs (ignoring the restriction that loads are integers) by:
\begin{equation}\label{eq:b'}\mathcal{B}(\wload_t) \defeq \sum_{\mathcal{P}}|\mathcal{P}|\cdot\left( \avg_{e \in \mathcal{P}} \wload_t(e)\right)^{\!2},\end{equation}
where the sum is taken over all possible balancing sets $\mathcal{P}$.
Recall that for non-bipartite graphs \eqref{eq:b'} reduces to:
$$\mathcal{B}(\wload_t) = \frac{k^2}{2m} = \const,$$
which is consistent with our previous definition,
while for bipartite graphs (where $k_1$ and $k_2$ are number of tokens in both parts of graph)
$$\mathcal{B}(\wload_t) = \frac{k_1^2+k_2^2}{m} = \const.$$

\noindent\textbf{Definition~\ref{def:discr}.}
\emph{We say that a configuration of tokens $\wload_t$ has discrepancy over arcs equal to $$\max_{\mathcal{P}} \max_{e,e' \in \mathcal{P}} (\wload_t(e) - \wload_t(e'))= x.$$}

\noindent\textbf{Lemma~\ref{lem:potdrop}.}
\emph{Let at time step $t$ a distribution of tokens $\mathcal{W}_t$ has discrepancy over arcs equal to $x$, for $x>4D+1$. Then:
$$\Phi(\wload_t) - \Phi(\wload_{t+1}) \ge \frac{(x-4D-1)(x-1)}{4D}.$$}

\begin{proof}
From Lemma~\ref{lem:even_length} we know, that there exists an alternating path $e=e_0,e_1,e_2,\ldots ,e_{h-1},e_h=e'$ of length at most $4D+2$, such that each vertex belongs to at most 2 of those arcs, and that $e$ and $e'$ have discrepancy at least $x$. Thus:
$$\sum_{j = 0}^{h-1} |\wload_t(e_j)-\wload_t(e_{j+1})| \ge x.$$

Moreover, at most $2D$ of pairs $(e_j$,$e_{j+1})$ share ending vertices and at most $2D+1$ share starting vertices. Since by \eqref{eq:roundfair} loads of arcs sharing starting vertex can differ by at most 1, we get that there exist $I \le 2D$ and vertices $v_1,\ldots,v_{I}$, such that each vertex $v_i$ admits incoming load discrepancy $\delta_i$ (two incoming arcs $e_{i1}$ and $e_{i2}$ such that $\wload_t(e_{i1})-\wload_t(e_{i2}) \ge \delta_i$), and that:
$$\sum_{i\le I} \delta_i \ge x-2D-1.$$

Observe, that:
\begin{align*}
&\sum_{e \in \inedg(v_i)} (\wload_t(e))^2 = (\wload_t(e_{i1}))^2+(\wload_t(e_{i2}))^2+\sum_{e \in \inedg(v_i)\setminus\{e_{i1},e_{i2}\}} (\wload_t(e))^2 \ge\\
&\ge \left(\wload_t(e_{i1})-\left\lfloor\frac{\delta_i}{2}\right\rfloor\right)^2+\left(\wload_t(e_{i2})+\left\lfloor\frac{\delta_i}{2}\right\rfloor\right)^2 + \sum_{e \in \inedg(v_i)\setminus\{e_{i1},e_{i2}\}} (\wload_t(e))^2 \ge\\
&\ge \sum_{e \in \outedg(v_i)} (\wload_{t+1}(e))^2,
\end{align*}
which gives
$$\sum_{e \in \inedg(v_i)} (\wload_t(e))^2 - \sum_{e \in \outedg(v_i)} (\wload_{t+1}(e))^2 \ge (\wload_t(e_{i1}))^2+(\wload_t(e_{i2}))^2-\left(\wload_t(e_{i1})-\left\lfloor\frac{\delta_i}{2}\right\rfloor\right)^2-$$
$$-\left(\wload_t(e_{i2})+\left\lfloor\frac{\delta_i}{2}\right\rfloor\right)^2=
2\left(\wload_t(e_{i1})-\wload_t(e_{i2})\right)\left\lfloor\frac{\delta_i}{2}\right\rfloor-2\left\lfloor\frac{\delta_i}{2}\right\rfloor^2 \ge 2 \left\lceil\frac{\delta_i}{2}\right\rceil\left\lfloor\frac{\delta_i}{2}\right\rfloor \ge \frac{\delta_i^2-1}{2},$$
Taking a sum over all vertices, we get:
$$\Phi(\wload_t) - \Phi(\wload_{t+1}) \ge \sum_{i \le I} \frac{\delta_i^2-1}{2} \ge \left(\sum_{i \le I} \frac{\delta_i^2}{2}\right)  - \frac{I}{2} \ge \frac{(\sum_{i \le I} \delta_i)^2}{2I}-\frac{I}{2} \ge $$
$$\ge\frac{(x-2D-1)^2}{4D} - D=\frac{(x-4D-1)(x-1)}{4D}.$$

\end{proof}

\noindent\textbf{Lemma~\ref{lem:atleastdiscrp}.}
\emph{If $\wload_t$ has discrepancy $x$, then:
$\Phi(\wload_t) \le \mathcal{B}(\wload_0) + \frac12 m x^2.$}

\begin{proof}
Observe, that by maximizing the sum of squares while preserving the sum and the upper bound on discrepancy, we get that:
$$\sum_{e \in \mathcal{P}} \wload_t(e)^2 \le \frac{|\mathcal{P}|}{2} \cdot (\avg_{e \in \mathcal{P}}(\wload_t(e))+x/2)^2+\frac{|\mathcal{P}|}{2} \cdot (\avg_{e \in \mathcal{P}}(\wload_t(e)) - x/2)^2.$$
Summing above over all $\mathcal{P}$:
$$\Phi(\wload_t) \le \mathcal{B}(\wload_0) + \sum_{\mathcal{P}} |\mathcal{P}|\left(\frac{x}{2}\right)^2.$$

\end{proof}

\end{document}